\newtheorem{definition}{\textbf{Definition}}
\newtheorem{corollary}{\textbf{Corollary}}
\newtheorem{lemma}{\textbf{Lemma}}
\newtheorem{theorem}{\textbf{Theorem}}
\newtheorem{proposition}{\textbf{Proposition}}
\newtheorem{remark}{\textbf{Remark}}
\newcommand{\nn}{\nonumber}
\newcommand{\cU}{\mathcal{U}}
\newcommand{\cX}{\mathcal{X}}
\newcommand{\cS}{\mathcal{S}}
\newcommand{\cT}{\mathcal{T}}
\DeclareMathAlphabet{\matheuf}{U}{euf}{m}{n}
\begin{document}

\vspace*{-2cm}

\begin{center}
  \baselineskip 1.3ex {\Large \bf Bounds and Capacity Theorems for Cognitive Interference Channels with State
  \footnote{The material in this paper was presented
in part at the 49th Allerton Conference on Communication, Control, and Computing, Monticello, Illinois, USA, September
2011 and will be presented in part at the IEEE International Symposium on Information Theory, Cambridge, Massachusetts, USA, July 2012.} \footnote{The work of R. Duan and Y. Liang was supported by a National Science Foundation under Grant CCF-10-26566 and by the National Science Foundation CAREER Award under Grant
CCF-10-26565.}\\
}
 \vspace{0.15in} Ruchen Duan, Yingbin Liang
\footnote{The authors are with the Department of Electrical
Engineering and Computer Science, Syracuse University, Syracuse, NY 13244 USA (email: \{yliang06,rduan\}@syr.edu).}
\end{center}

\begin{abstract}
A class of cognitive interference channel with state is investigated, in which two transmitters (transmitters 1 and 2) communicate with two receivers (receivers 1 and 2) over an interference channel. The two transmitters jointly transmit a common message to the two receivers, and transmitter 2 also sends a separate message to receiver 2. The channel is corrupted by an independent and identically distributed (i.i.d.) state sequence. The scenario in which the state sequence is noncausally known only at transmitter 2 is first studied. For the discrete memoryless channel and its degraded version, inner and outer bounds on the capacity region are obtained. The capacity region is characterized for the degraded semideterministic channel and channels that satisfy a less noisy condition. The Gaussian channels are further studied, which are partitioned into two cases based on how the interference compares with the signal at receiver 1. For each case, inner and outer bounds on the capacity region are derived, and partial boundary of the capacity region is characterized. The full capacity region is characterized for channels that satisfy certain conditions. The second scenario in which the state sequence is noncausally known at both transmitter 2 and receiver 2 is further studied. The capacity region is obtained for both the discrete memoryless and Gaussian channels. It is also shown that this capacity is achieved by certain Gaussian channels with state noncausally known only at transmitter 2.
\end{abstract}

\section{Introduction}

Interference channels model many communication scenarios in practical wireless systems such as cellular networks, sensor networks, and cognitive radio networks. In these networks, communication between one transmitter-receiver pair may be interfered by signals from other communicating pairs which share the same spectrum resource with them. Consequently, transmission rates of these users, or in general, the throughput of a system, are affected by the strength of the interference and how the interference is treated in designing transmission schemes. Therefore, it is important to understand the fundamental communication limit (i.e., the capacity region) of interference channels. Earlier work \cite{InterferenceCarleial} by Carleial provided general bounds on the capacity region for the discrete memoryless interference channel. The achievable region was obtained by using superposition coding. Further work by Han and Kobayashi \cite{Han81} improved the achievable region via superposition and rate splitting. The capacity region of the interference channel has been characterized for various special cases, e.g., \cite{Carleial75, Sato81, ElGamal82, Costa87}. In recent a few years, some important progresses have been made on understanding the capacity region of the discrete memoryless interference channel \cite{Chong06,Kramer06,Jiang06}. In particular, new bounds on the capacity region have been derived for the Gaussian interference channel \cite{Etkin07}, which led to new capacity theorems for the Gaussian interference channel \cite{Shang09,Anna09,Mota09}. However, the capacity region of the general interference channel is still not known.

More recently, interference channels with state have caught a lot of attention. The state may be caused by many reasons such as channel uncertainty and transmitter-side signal interference. 
In particular, a few interference channel models with state noncausally known at transmitters have been studied, which are generalizations of the Gel'fand-Pinsker model \cite{GPEncoding} for the point-to-point channel with state. In \cite{Zhang11}, the interference channel with two transmitters sending two messages respectively to two receivers was studied. The channel is corrupted by an independent and identically distributed (i.i.d.) state sequence, which is noncausally known at both transmitters. A number of achievable schemes were proposed and their corresponding rate regions were compared. In \cite{Somekh08}, a model of the cognitive interference channel with state was studied, in which both transmitters (i.e., transmitters 1 and 2) jointly send one message to receiver 1, and transmitter 2 sends an additional message separately to receiver 2. The i.i.d.\ state sequence is noncausally known at transmitter 2 only. Inner and outer bounds on the capacity region were provided.

In this paper, we investigate a different class of the cognitive interference channel model with state (see Fig.~\ref{channelmodelfig}), in which both transmitters jointly send one message to both receivers 1 and 2, and transmitter 2 sends an additional message separately to receiver 2. The channel is corrupted by an i.i.d.\ state sequence. We investigate two scenarios: the first scenario assumes that the state sequence is noncausally known only at transmitter 2, and the second scenario assumes that the state sequence is known at both transmitter 2 and receiver 2. The second scenario is of interest by its own and is also useful for providing outer bounds (sometimes tight outer bounds as demonstrated in this paper) on the capacity region for the first scenario.

The difference of our model from the model studied in \cite{Somekh08} lies in that the common message known to both transmitters needs to be decoded at both receivers instead of at receiver 1 only as in \cite{Somekh08}. Although the two models appear similar to each other, their capacity regions may have different forms, and the transmission schemes achieving these regions may also be different. This is already demonstrated by the two corresponding models without state studied respectively in \cite{Wu06,Jovicic06,Maric08,Rini11,Rini12} and \cite{Liang09}. The capacity bounds in \cite{Wu06,Jovicic06} and the capacity region given in \cite{Liang09} are different and are achieved by different achievable schemes. Therefore, our study can lead to new information theoretic insights.

We note that compared to the basic Gel'fand-Pinsker model, the cognitive interference channel model we study here and in \cite{Somekh08} capture more communication features such as the transmitter-side signal cognition and receiver-side signal interference in addition to random state corruption of the channel. More specifically, transmitter 2 can be interpreted as a secondary user who knows primary user's (i.e., transmitter 1's) message $W_1$ and hence can help to transmit this message, and who also has its own message $W_2$ to transmit. The state may arise because transmitter 2 may communicate to other receivers (not necessarily receiver 1), and its signals to these receivers can be viewed as state, which is clearly known by transmitter 2. Our goal is to study the performance (i.e., the capacity region) of such a model and correspondingly design communication schemes to exploit the noncausal state information in the context of signal cognition and interference.

In the following, we summarize the main results of this paper. We note that due to the channel properties of cognition, interference, random channel state, and asymmetry of the state knowledge, it is natural that an achievable scheme employs coding techniques of superposition, rate splitting, and Gel'fand-Pinsker coding. The novelty of this paper lies in finding optimality of such achievable schemes (i.e., achievement of the capacity region) by properly integrating these coding techniques for various channel parameters. The new gradients that we develop in the converse arguments are also mentioned below.

For the discrete memoryless cognitive interference channel with noncausal state information known at transmitter 2, we derive inner and outer bounds on the capacity region. In particular, due to asymmetry of the state knowledge (i.e., transmitter 1 does not know the channel state but transmitter 2 does), transmitter 2 not only helps transmitter 1 in the conventional way of superposition, but also helps to correlate the input with the state sequence via Gel'fand-Pinsker scheme. Thus, we employ the Gel'fand-Pinsker scheme for these two cooperative transmitters in the way that transmitter 1 generates signals with only the message index, superposing on which transmitter 2 generates auxiliary variables with the bin index. We show by special cases that such an auxiliary variable is necessary to achieve the capacity.

We then study the degraded channel of the model, and obtain bounds on the capacity region. It is not surprising that the capacity region for the degraded channel is not obtained because it is difficult to obtain the capacity region even for the degraded broadcast channel with state \cite{BroadcastSteinberg}. However, we establish the capacity region for degraded channels, which further satisfy the semideterminsitic condition. This example channel also demonstrates that both superposition and Gel'fand-Pinsker coding for state treatment in transmitter 2's cooperation are necessary for achieving the capacity. Besides the semideterministic degraded channel, we also identify a less noisy condition under which we obtain the capacity region.

We further study the Gaussian channel of the model. Although for the Gaussian channel, it is natural to obtain an achievable region by applying the general jointly Gaussian input distribution to the inner bound derived for the discrete memoryless channel, the resulting region would have a too complex form. It would then be very difficult to develop a converse proof for capacity characterization. Our approach is to partition the Gaussian channel into two cases depending on how the interference compares with the signal at receiver 1. For each case, we develop simpler inner bounds that exploit the conditions that the channel satisfies. For such inner bounds, we are able to derive outer bounds that match the inner bounds for partial boundary of the capacity region.

More specifically, for the first Gaussian case when the channel gain of interference is stronger than the channel gain of signal at receiver 1, it is reasonable to let receiver 1 decode full information intended for receiver 2. We derive inner bound based on such a scheme. We also provide an outer bound and further identify rate points that inner and outer bounds match at the boundary. These points hence characterize partial boundary of the capacity region. We also identify a condition, under which the outer bound fully characterizes the capacity region.

For the second Gaussian case when the channel gain of interference is weaker than the channel gain of signal at receiver 1, rate splitting is also not necessary but with receiver 1 decoding no information intended for receiver 2. Hence, without using rate splitting, we obtain two inner bounds with the Gel'fand-Pinsker scheme canceling the state respectively at receivers 1 and 2. Similarly to the first Gaussian case, for each inner bound, we provide an outer bound and identify rate points that the inner and outer bounds match at the boundary. We further show that respectively under two channel conditions, each outer bound characterizes the full capacity region. In particular, one of these conditions leads to the case that the Gaussian channel with state known only at transmitter 2 achieves the capacity region of the Gaussian channel with state known at both transmitter 2 and receiver 2. This is similar to the case that dirty paper coding achieves the capacity of the Gaussian channel when the state is also known at the receiver \cite{Costa83}. Here, the channel does not achieve the capacity with both receivers knowing the channel state due to asymmetry of the state knowledge at the transmitter side.

We finally study the cognitive interference channel with the state noncausally known at both transmitter 2 and receiver 2. For this scenario, we characterize the full capacity region for both the discrete memoryless and Gaussian channels. For the discrete memoryless channel, we first derive inner and outer bounds on the capacity region, which are characterized by different forms. Standard techniques do not provide an easy argument of the equivalence of the two bounds. We apply the technique recently developed by Lapidoth and Wang in \cite{Lapidoth11} for proving equivalence of two rate regions characterized by different sets of auxiliary random variables, and show that our inner and outer bounds match. For Gaussian channels, we also partition them into two cases, and characterize the full capacity region for each case. In particular, the converse argument involves specially designed state knowledge for receivers such that the resulting outer bounds are tight. Such construction is inspired by the fact that dirty paper coding achieves the capacity of the Gaussian channel when the state is also known at the receiver \cite{Costa83}.


%

The rest of the paper is organized as follows. In Section \ref{sec:channelmodel}, we describe the channel model and explain the notation used in this paper. In Sections \ref{sec:mainresult} and \ref{sec:generalGaussian}, we present our results for the discrete memoryless channel and Gaussian channel, respectively, for the scenario with the state known at transmitter 2 only. In Sections \ref{sec:recstate}, we present the results for the scenario with the state also known at receiver 2 for both the discrete memoryless and Gaussian channels. Finally, in Section \ref{sec:conclusion}, we conclude with a few remarks.

\section{Channel Model}\label{sec:channelmodel}

\begin{figure}[thb]
\centering
\includegraphics[width=3.7in]{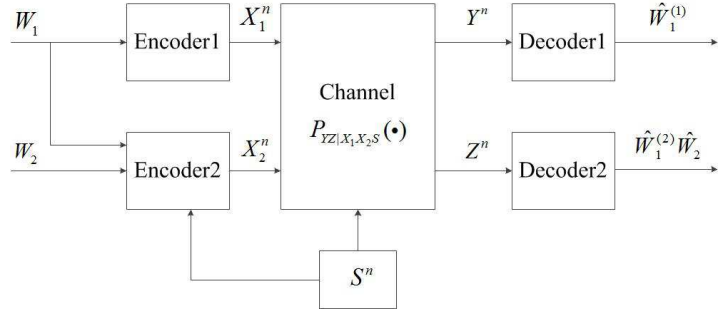}
\caption{A model of the cognitive interference channel with state}\label{channelmodelfig}
\label{fig:model}
\end{figure}

We consider a class of cognitive interference channels with state (see Fig.~\ref{channelmodelfig}), in which two transmitters (say transmitters 1 and 2) jointly send a message $W_{1}$ to two receivers (say receivers 1 and 2), and transmitter 2 sends a message $W_{2}$ to receiver 2. The channel is also corrupted by an i.i.d.\ state sequence $S^n$. We investigate two scenarios: the first scenario assumes that the state sequence is noncausally known only at transmitter 2 but not known at any other terminal, and the second scenario assumes that the state sequence is known at both transmitter 2 and receiver 2.
In this paper, we use $s^n$ to denote the vector $(s_1,\dotsi,s_n)$, and use $s_{i}^n$ to denote the vector $(s_i,\dotsi,s_n)$. We formally define the channel model as follows.

\begin{definition}
A discrete memoryless cognitive interference channel with state consists of two finite channel input alphabets $\mathcal{X}_{1}$ and $\mathcal{X}_{2}$, a finite state alphabet $\mathcal{S}$, two finite channel output alphabets $\mathcal{Y}$ and $\mathcal{Z}$, and a transition probability distribution $P_{Y Z \vert X_{1} X_{2} S}$ (see Fig.~\ref{channelmodelfig}), where $X_{1} \in \mathcal{X}_{1}$ and $X_{2} \in \mathcal{X}_{2}$ are the channel inputs from transmitters 1 and 2, respectively, $S \in \mathcal{S}$ is the state variable, and $Y \in \mathcal{Y}$ and $Z \in \mathcal{Z}$ are the channel outputs at receivers 1 and 2, respectively.
\end{definition}
\begin{definition}
A $(2^{nR_1},2^{nR_2},n)$ code for the cognitive interference channel with state noncausally known only at transmitter 2 consists of the following:
\begin{list}{$\bullet$}{\topsep=0.ex \leftmargin=0.25in
\rightmargin=0.in \itemsep =0.in}

\item two message sets: $\mathcal{W}_k = {1,2,\dotsi,2^{nR_k}}$ for $k=1,2$;

\item two messages: $W_1$ and $W_2$ are independent random variables and are uniformly distributed over $\mathcal{W}_1$ and $\mathcal{W}_2$, respectively;

\item two encoders: an encoder $f_1: \mathcal{W}_1 \to \mathcal{X}_1^n$, which maps a message $w_1 \in \mathcal{W}_1$ to a codeword $x_1^n \in \mathcal{X}_1^n$; and an encoder $f_2: \mathcal{W}_1 \times
    \mathcal{W}_2 \times S^n \to \mathcal{X}_2^n$, which maps a message pair $(w_1,w_2) \in \mathcal{W}_1 \times \mathcal{W}_2$ and a state sequence $s^n \in S^n$ to a codeword $x_2^n \in
    \mathcal{X}_2^n$;

\item two decoders: $g_1: \mathcal{Y}^n \to \mathcal{W}_1 $, which maps a received sequence $y^n$ into a message $\hat{w}_1^{(1)} \in \mathcal{W}_1$; and $g_2:  \mathcal{Z}^n \to \mathcal{W}_1 \times
    \mathcal{W}_2$, which maps a received sequence $z^n$ into a message pair $\left(\hat{w}_1^{(2)},\hat{w}_2\right) \in \mathcal{W}_1 \times \mathcal{W}_2$.

\end{list}
\end{definition}
We note that the above definition is also applicable to the scenario with the state sequence known at both transmitter 2 and receiver 2, if the second decoder is changed to $g_2:  (\mathcal{Z}^n, S^n) \to \mathcal{W}_1 \times \mathcal{W}_2$.

For a given code, we define the probability of error as
\begin{equation} \label{PE}
\begin{split}
P_e^{(n)} =  \frac{1}{2^{n(R_1+R_2)}} \sum_{w_1=1}^{2^{nR_1}} \sum_{w_2=1}^{2^{nR_2}} Pr\left\{\left(\hat{w}_1^{(1)},\hat{w}_1^{(2)},\hat{w}_2 \right) \neq (w_1,w_1,w_2)\right\} \textrm{.}
\end{split}
\end{equation}
A rate pair $(R_1,R_2)$ is said to be \textit{achievable} if there exists a sequence of $(2^{nR_1},2^{nR_2},n)$ codes such that
\begin{equation}
\lim_{n \to \infty} P_e^{(n)} = 0\textrm{.}
\end{equation}
\begin{definition}
The capacity region is defined to be the closure of the set of all achievable rate pairs $(R_1,R_2)$.
\end{definition}

In the following, we define a number of channel conditions for classifying the channels in our study:
\begin{flalign}
& \bullet \: P_{YZ|X_1X_2S}=P_{Z|X_1X_2S}P_{Y|Z} \label{eq:cond5} \\
& \bullet \; P_{YZ|X_1X_2S}=P_{Z|X_1X_2S}P_{Y|ZX_1S} \label{eq:cond1} \\
& \bullet \; P_{YZ|X_1X_2S}=P_{Y|X_1X_2S}P_{Z|YX_1S} \label{eq:cond2} \\
& \bullet \; I(X_1;Y) \leq I(X_1;Z)\textrm{ } and \textrm{ } I(U;Y|X_1) \leq I(U;Z|X_1) \nonumber \\
& \quad\quad \text{ for all } P_{UX_1X_2S} \; \text{ s.t. } \; P_{X_1SUX_2YZ} = P_{X_1}P_{S}P_{UX_2|SX_1}P_{YZ|SX_1X_2} \label{eq:cond3} \\
& \bullet \; I(X_1U;Y) \ge I(X_1U;Z) \nn \\
& \quad\quad \text{ for all } P_{UX_1X_2S} \; \text{ s.t. } \; P_{X_1SUX_2YZ} = P_{X_1}P_{S}P_{UX_2|SX_1}P_{YZ|SX_1X_2} \label{eq:cond4}
\end{flalign}
The intuitive meaning of the above conditions are explained as follows. If a channel satisfies \eqref{eq:cond5}, receiver 2 is stronger than receiver 1 in decoding $W_1$ and $W_2$. If a channel satisfies \eqref{eq:cond1}, receiver 2 is stronger in decoding $W_2$, which is weaker than condition \eqref{eq:cond5}. In contrast to \eqref{eq:cond1}, if a channel satisfies \eqref{eq:cond2}, receiver 1 is stronger in decoding $W_2$, although this message is not intended for receiver 1. If a channel satisfies
\eqref{eq:cond3}, receiver 2 is less noisy than receiver 1 in the sense similar to the less noisy condition defined in \cite{Korner75}. Alternatively, if a channel satisfies \eqref{eq:cond4}, receiver 1 is less noisy than receiver 2.

\section{Discrete Memoryless Channels}\label{sec:mainresult}

In this section, we investigate the discrete memoryless cognitive interference channel with state noncausally known at only transmitter 2. We first provide inner and outer bounds on the capacity region, and then we identify a few special cases, for which we establish the capacity region.

We first provide an achievable region in the following lemma, which is useful in establishing our main inner bound.
\begin{lemma} \label{NaiAchv}
An achievable region for the cognitive interference channel with the state sequence noncausally known at transmitter 2 consists of rate pairs $(R_1, R_2)$ satisfying:
\begin{flalign}
      R_2 & = R_{21} + R_{22}, \quad R_{21}\geqslant 0, \quad R_{22}\geqslant 0\nn\\
      R_1 + R_{21} & \leqslant I(TUX_1;Y) - I(TU;S|X_1)\nn\\
      R_{22} & \leqslant I(V;Z|UTX_1) - I(V;S|UTX_1)\nn\\
      R_{21} + R_{22} & \leqslant I(UV;Z|X_1T) - I(UV;S|X_1T)\nn\\
      R_{21} + R_{22} & \leqslant I(T UV;Z|X_1)- I(TUV;S|X_1)\nn\\
      R_1 + R_{21} + + R_{22} & \leqslant I(TUVX_1;Z) - I(TUV;S|X_1)
\end{flalign}
for some distribution $P_{X_1STUVX_2YZ} = P_{X_1}P_{S}P_{TUVX_2|SX_1}P_{YZ|SX_1X_2}$, where $T$, $U$ and $V$ are auxiliary random variables.
\end{lemma}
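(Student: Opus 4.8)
The plan is to prove achievability by a random coding argument that superposes the four variables in the order $X_1 \to T \to U \to V$, performs Gel'fand--Pinsker binning against the state $S$ at transmitter~2, and rate-splits $W_2$ into $(W_{21}, W_{22})$ of rates $R_{21}$ and $R_{22}$. The role assignment guiding the construction is that $X_1$ carries only $W_1$ and is generated at transmitter~1 with no state knowledge, whereas $T$, $U$, $V$ are generated at transmitter~2 and carry the binning indices: $T$ bears no message and acts purely as a state-matched cooperative layer for the common message, $U$ carries $W_{21}$, and $V$ carries $W_{22}$. Since the lemma is already stated in terms of the split $R_2 = R_{21}+R_{22}$, no Fourier--Motzkin elimination is needed; it suffices to show every triple $(R_1,R_{21},R_{22})$ in the region is achievable.

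I would fix an input distribution of the stated product form together with auxiliary binning rates $\tilde{R}_T, \tilde{R}_U, \tilde{R}_V$. For the codebook, draw $2^{nR_1}$ sequences $x_1^n(w_1) \sim \prod P_{X_1}$; superimposed on each $x_1^n$ draw $2^{n\tilde{R}_T}$ sequences $t^n$; superimposed on each $t^n$ draw $2^{n(R_{21}+\tilde{R}_U)}$ sequences $u^n$ partitioned into $2^{nR_{21}}$ bins; and superimposed on each $u^n$ draw $2^{n(R_{22}+\tilde{R}_V)}$ sequences $v^n$ partitioned into $2^{nR_{22}}$ bins, all from the relevant $S$-averaged conditional marginals. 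Transmitter~1 sends $x_1^n(w_1)$; transmitter~2, knowing $(w_1,w_{21},w_{22},s^n)$, successively finds a $t^n$ jointly typical with $(x_1^n,s^n)$, a $u^n$ in bin $w_{21}$ jointly typical with $(x_1^n,t^n,s^n)$, and a $v^n$ in bin $w_{22}$ jointly typical with $(x_1^n,t^n,u^n,s^n)$, then generates $x_2^n$ symbol-wise from $P_{X_2|TUVSX_1}$. By the covering lemma these three nested selections succeed with high probability once $\tilde{R}_T > I(T;S|X_1)$, $\tilde{R}_U > I(U;S|TX_1)$, and $\tilde{R}_V > I(V;S|UTX_1)$, and it is precisely these inequalities, added over the appropriate layers, that produce the subtracted terms $I(\cdot\,;S|\cdot)$ in every rate bound.

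The error analysis is a union bound over the covering failure plus the packing events. Receiver~1 decodes $w_1$ by joint typicality over $(x_1^n,t^n,u^n,y^n)$, treating $V$ and $X_2$ as noise; its only relevant event, $\hat{w}_1 \neq w_1$, involves $2^{n(R_1+\tilde{R}_T+R_{21}+\tilde{R}_U)}$ competing tuples each typical with probability $2^{-nI(TUX_1;Y)}$, which after substituting the minimal $\tilde{R}$ values yields the first bound. Receiver~2 recovers the full tuple $(x_1^n,t^n,u^n,v^n)$ from $z^n$, and I would enumerate its error events by the deepest correctly decoded superposition layer: $X_1$ wrong gives the bound through $I(TUVX_1;Z)$; $X_1$ correct but $T$ wrong gives $I(TUV;Z|X_1)$; $X_1,T$ correct but $U$ wrong gives $I(UV;Z|X_1T)$; and $X_1,T,U$ correct but $V$ wrong gives $I(V;Z|UTX_1)$. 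Each competing-codeword count has exponent equal to the sum of the message rates plus binning rates of the incorrectly decoded layers, so replacing $\tilde{R}_T+\tilde{R}_U+\tilde{R}_V = I(TUV;S|X_1)$ (and its sub-sums) converts the four packing constraints into exactly the last four bounds. Letting $n \to \infty$ drives the error probability to zero for any triple strictly interior to the region, and taking the closure gives the claim.

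The step I expect to be the main obstacle is the bookkeeping of the packing events at receiver~2. Because $T$ carries no message and every bin is defined relative to its superimposing codeword, some care is required to verify that the nested codebook structure makes the number of competing codewords in each event equal to $2^{n(\cdots)}$ with the claimed exponents, and that the remaining events—such as a wrong codeword inside a correctly indexed bin—yield bounds dominated by the four listed ones. Requiring receiver~2 to recover the entire codeword tuple rather than only the message labels is the cleanest way to make this nesting transparent and to force the four receiver-2 constraints to emerge in precisely the stated conditional forms.
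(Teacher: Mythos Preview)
Your proposal is correct and follows essentially the same approach as the paper: the same superposition order $X_1\to T\to U\to V$, the same three covering constraints for the binning rates, receiver~1 decoding only $(X_1,T,U)$ with the single relevant error event $\hat w_1\neq w_1$, and receiver~2 decoding the full tuple with the four nested packing events you list. The paper's appendix carries out exactly this scheme (with the notation $\tilde R_1,\tilde R_{21},\tilde R_{22}$ in place of your $\tilde R_T,\tilde R_U,\tilde R_V$) and arrives at the same five inequalities after the same substitution of the minimal binning rates.
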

\begin{proof}
The achievable scheme includes superposition coding, rate-splitting, and Gel'fand-Pinsker binning scheme. We outline the achievable scheme as follows. Transmitter 1 first encodes $W_1$. Transmitter 2 cooperatively transmits $W_1$. Due to asymmetry of the state knowledge (i.e., transmitter 1 does not know the channel state but transmitter 2 does), transmitter 2 not only helps transmitter 1 using superposition, but also helps in correlating the input with the state sequence via the Gel'fand-Pinsker scheme. Thus, for transmitting $W_1$, transmitter 1 generates $X_1$ with only the message index, superposing on which transmitter 2 generates an auxiliary random variable $T$ with the bin index.

Then transmitter 2 employs rate splitting for transmitting $W_2$. Namely, message $W_2$ is split into two components, $W_{21}$ and $W_{22}$, with rates $R_{21}$ and $R_{22}$, respectively. The message $W_{21}$ is intended for both receivers to decode, and $W_{22}$ is intended only for receiver 2 to decode. Transmitter 2 encodes $W_{21}$ and $W_{22}$ by superposing on $W_1$. Furthermore, transmitter 2 uses Gel'fand-Pinsker scheme for correlating the inputs (that encode $W_{21}$ and $W_{22}$, and are respectively represented by $U$ and $V$ in the above region) with the state sequence. Receiver 1 decodes both $W_1$ and $W_{21}$, and receiver 2 decodes $W_1$, $W_{21}$ and $W_{22}$. Since receiver 1 can decode $W_{21}$, it can eliminate the interference caused by this message when it decodes $W_1$.

The detailed proof is relegated to Appendix \ref{ProofNaiAch}.
\end{proof}
In the above achievable schemes, it is seemingly true that the role of $T$ can be performed by $U$, and may not be necessary, because they both represent messages intended for both receivers. However, we show by special cases that $T$ is necessary for achieving the capacity region but $U$ may be removed (i.e., rate splitting is unnecessary).

Based on Lemma \ref{NaiAchv}, our main inner bound on the capacity region is given in the following theorem.
\begin{theorem}\label{tr:OrigAch}(Inner Bound)
For the cognitive interference channel with the state sequence noncausally known at transmitter 2, an achievable region consists of rate pairs $(R_1, R_2)$ satisfying:
\begin{flalign}
R_1 \leqslant & I(X_1TU;Y) - I(TU;S|X_1)\nn\\
R_2 \leqslant & I(UV;Z|X_1T) - I(UV;S|X_1T)\nn\\
R_2 \leqslant & I(TUV;Z|X_1) - I(TUV;S|X_1)\nn\\
R_1 + R_2 \leqslant & I(X_1TUV;Z) - I(TUV;S|X_1)\nn\\
R_1 + R_2 \leqslant & I(X_1TU;Y) + I(V;Z|X_1TU) - I(TUV;S|X_1)\label{OrigAch}
\end{flalign}
for some distribution $P_{X_1STUVX_2YZ} = P_{X_1}P_{S}P_{TUVX_2|SX_1}P_{YZ|SX_1X_2}$ that satisfies
\begin{equation}\label{eq:cond}
I(V;Z|UTX_1)-I(V;S|UTX_1)\ge 0.
\end{equation}
\end{theorem}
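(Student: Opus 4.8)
The plan is to derive Theorem~\ref{tr:OrigAch} from Lemma~\ref{NaiAchv} by a single application of Fourier--Motzkin elimination: no new coding argument is needed, since the coding scheme lives entirely in Lemma~\ref{NaiAchv}. Concretely, the region of Lemma~\ref{NaiAchv} is described in the variables $R_1,R_{21},R_{22}$, and I would project it onto the plane of the operational rates $(R_1,R_2)$ under the identification $R_2=R_{21}+R_{22}$. A rate pair $(R_1,R_2)$ is achievable by the scheme precisely when there exists a split $R_2=R_{21}+R_{22}$ with $R_{21},R_{22}\ge 0$ satisfying all six inequalities of the lemma, so the achievable region equals this projection, and it remains to check that the projection coincides with the five inequalities of the theorem under hypothesis~\eqref{eq:cond}.

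To set up the elimination, abbreviate the bounds of Lemma~\ref{NaiAchv} as
\begin{align}
A &= I(TUX_1;Y)-I(TU;S|X_1), \quad B = I(V;Z|UTX_1)-I(V;S|UTX_1),\nn\\
C &= I(UV;Z|X_1T)-I(UV;S|X_1T), \quad D = I(TUV;Z|X_1)-I(TUV;S|X_1),\nn\\
E &= I(TUVX_1;Z)-I(TUV;S|X_1).\nn
\end{align}
With $R_2=R_{21}+R_{22}$, the three bounds $R_{21}+R_{22}\le C$, $R_{21}+R_{22}\le D$, and $R_1+R_{21}+R_{22}\le E$ depend only on $R_2$ and pass through unchanged as $R_2\le C$, $R_2\le D$, and $R_1+R_2\le E$, i.e.\ the second, third, and fourth inequalities of the theorem. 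Thus the only genuine elimination is of the single free split variable, say $R_{22}$ (with $R_{21}=R_2-R_{22}$), subject to the residual constraints $R_{21}\ge 0$, $R_{22}\ge 0$, $R_1+R_{21}\le A$, and $R_{22}\le B$.

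Rewriting those four constraints as bounds on $R_{22}$ gives $R_{22}\ge 0$, $R_{22}\ge R_1+R_2-A$, $R_{22}\le R_2$, and $R_{22}\le B$, so a feasible split exists iff the largest lower bound is at most the smallest upper bound:
\[
\max\{0,\,R_1+R_2-A\}\le \min\{R_2,\,B\}.
\]
Expanding the four pairwise inequalities and discarding the trivial $0\le R_2$ leaves exactly $0\le B$, $R_1\le A$, and $R_1+R_2\le A+B$. The bound $R_1\le A$ is the first inequality of the theorem, and $0\le B$ is precisely the standing hypothesis~\eqref{eq:cond}; this is the reason that condition is imposed on the admissible distributions rather than surfacing as a rate constraint.

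The one step demanding care is recognizing $R_1+R_2\le A+B$ as the fifth inequality of the theorem, which is pure bookkeeping of the Gel'fand--Pinsker binning penalties. Using the chain rule $I(TU;S|X_1)+I(V;S|UTX_1)=I(TUV;S|X_1)$, the two separate penalties in $A$ and $B$ collapse to a single one, yielding
\[
A+B = I(TUX_1;Y)+I(V;Z|UTX_1)-I(TUV;S|X_1),
\]
which matches $R_1+R_2\le I(X_1TU;Y)+I(V;Z|X_1TU)-I(TUV;S|X_1)$. Assembling the pieces, the projected region is exactly the five inequalities of Theorem~\ref{tr:OrigAch} under~\eqref{eq:cond}, and achievability follows from Lemma~\ref{NaiAchv}. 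I do not anticipate a real obstacle here: the elimination is a finite routine projection, and the sole subtlety is verifying the above chain-rule collapse of the combined penalty $I(TU;S|X_1)+I(V;S|UTX_1)$ to $I(TUV;S|X_1)$.
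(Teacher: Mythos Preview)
Your proposal is correct and follows exactly the paper's approach: the proof in the paper consists of the single sentence ``By applying Fourier--Motzkin elimination, we eliminate $R_{21}$ and $R_{22}$ from the bounds in Lemma~\ref{NaiAchv} and obtain the bounds in Theorem~\ref{tr:OrigAch},'' together with the remark that condition~\eqref{eq:cond} arises from the elimination. You have carried out that elimination explicitly and correctly, including the chain-rule identification $I(TU;S|X_1)+I(V;S|UTX_1)=I(TUV;S|X_1)$ needed to put the fifth bound in the stated form.
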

\begin{proof}
By applying Fourier-Motzkin elimination\cite{FMElimination}, we eliminate $R_{21}$ and $R_{22}$ from the bounds in Lemma \ref{NaiAchv} and obtain the bounds in Theorem \ref{tr:OrigAch}.
\end{proof}
We note that the condition \eqref{eq:cond} follows from Fourier-Motzkin elimination to guarantee validness of the region in Lemma \ref{NaiAchv}.
\begin{remark}
  The achievable region in Theorem \ref{tr:OrigAch} reduces to the capacity region of the multiple-access channel with state known noncausally at one transmitter in \cite{MACSomekhBaruch08} by setting $Y=Z$, $T=\phi$ and $V=U$.
\end{remark}
\begin{remark}
  The achievable region in Theorem \ref{tr:OrigAch} reduces to the capacity region of the cognitive interference channel without state in \cite{Liang09} by setting $S=\phi$, $T=\phi$ and $V=X_2$.
\end{remark}

Following Theorem \ref{tr:OrigAch}, we derive the following inner bound by setting $U=\phi$, which is achieved by a scheme without rate splitting. This inner bound is useful for studying Gaussian channels in Section \ref{sec:gauss1}.
\begin{corollary}\label{cor:innerdm2}(Inner Bound)
For the cognitive interference channel with state noncausally known at transmitter 2, an achievable region consists of rate pairs $(R_1, R_2)$ satisfying:
\begin{flalign}
R_1 \leqslant & I(X_1T;Y) - I(T;S|X_1)\nn\\
R_2 \leqslant & I(V;Z|X_1T) - I(V;S|X_1T)\nn\\
R_2 \leqslant & I(TV;Z|X_1) - I(TV;S|X_1)\nn\\
R_1 + R_2 \leqslant & I(X_1TV;Z) - I(TV;S|X_1)\label{eq:DMCinner21}
\end{flalign}
for some distribution $P_{X_1STVX_2YZ} =P_{X_1}P_{S}P_{TVX_2|X_1S}P_{YZ|SX_1X_2}$ that satisfies
\begin{equation}
I(V;Z|TX_1)-I(V;S|TX_1)\ge 0.
\end{equation}
\end{corollary}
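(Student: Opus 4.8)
The plan is to obtain Corollary \ref{cor:innerdm2} as a direct specialization of the inner bound in Theorem \ref{tr:OrigAch}, taking the auxiliary random variable $U$ to be a constant (deterministic), written $U=\phi$. Since the region of Theorem \ref{tr:OrigAch} is achievable for \emph{every} admissible distribution $P_{X_1STUVX_2YZ}=P_{X_1}P_{S}P_{TUVX_2|SX_1}P_{YZ|SX_1X_2}$, it is in particular achievable for the subfamily in which $U$ is degenerate. A degenerate $U$ collapses $P_{TUVX_2|SX_1}$ to a conditional of the form $P_{TVX_2|SX_1}$, which is precisely the factorization required by the corollary's distribution $P_{X_1STVX_2YZ}=P_{X_1}P_{S}P_{TVX_2|X_1S}P_{YZ|SX_1X_2}$. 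Hence no new coding argument is needed; the entire task reduces to simplifying the five inequalities of Theorem \ref{tr:OrigAch} under $U=\phi$.

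First I would simplify each mutual-information term. Because a constant random variable carries no information and conditioning on it is vacuous, every occurrence of $U$ on the information-bearing side of a mutual information drops out, and every conditioning on $U$ disappears: $I(X_1TU;Y)=I(X_1T;Y)$, $I(TU;S|X_1)=I(T;S|X_1)$, $I(UV;Z|X_1T)=I(V;Z|X_1T)$, $I(UV;S|X_1T)=I(V;S|X_1T)$, $I(TUV;\cdot)=I(TV;\cdot)$, and $I(V;Z|X_1TU)=I(V;Z|X_1T)$. After these substitutions, the first four inequalities of Theorem \ref{tr:OrigAch} become exactly the four inequalities \eqref{eq:DMCinner21} of the corollary, and the side condition \eqref{eq:cond} becomes $I(V;Z|TX_1)-I(V;S|TX_1)\ge 0$ as stated.

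The one point that requires an argument is that the \emph{fifth} inequality of Theorem \ref{tr:OrigAch}, which after the substitution reads
\[ R_1+R_2 \le I(X_1T;Y)+I(V;Z|X_1T)-I(TV;S|X_1), \]
may be discarded without enlarging the constraint set. I would show it is redundant by adding the simplified $R_1$-bound $R_1\le I(X_1T;Y)-I(T;S|X_1)$ to the first simplified $R_2$-bound $R_2\le I(V;Z|X_1T)-I(V;S|X_1T)$; collecting the negative terms and applying the chain rule $I(T;S|X_1)+I(V;S|X_1T)=I(TV;S|X_1)$ reproduces exactly this fifth inequality. Thus any $(R_1,R_2)$ satisfying \eqref{eq:DMCinner21} automatically satisfies the fifth bound, so it imposes no additional constraint and can be omitted.

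This is essentially the whole proof, and there is no genuine obstacle. The only things to keep straight are the bookkeeping of which terms vanish under $U=\phi$ and the single chain-rule identity that certifies redundancy of the dropped sum-rate constraint. It is also worth explicitly confirming, as above, that the degenerate choice of $U$ lands inside the corollary's distribution class, which is immediate.
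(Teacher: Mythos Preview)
Your proposal is correct and follows exactly the paper's approach: the paper derives Corollary~\ref{cor:innerdm2} simply by setting $U=\phi$ in Theorem~\ref{tr:OrigAch}. Your additional verification that the fifth sum-rate bound becomes redundant (as the sum of the first two individual bounds via the chain rule $I(T;S|X_1)+I(V;S|X_1T)=I(TV;S|X_1)$) is a detail the paper leaves implicit, but it is the right justification and completes the argument.
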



We next provide an outer bound on the capacity region for the cognitive interference channel with state.
\begin{theorem}\label{thr:OrigAchConv}(Outer Bound)
An outer bound for the interference channel with state noncausally known at transmitter 2 consists of the rate pairs $(R_1, R_2)$ satisfying:
\begin{flalign}
R_1 \leqslant & I(X_1TU;Y) - I(T U;S|X_1)\nn\\
R_2 \leqslant & I(TV;Z|X_1) - I(TV;S|X_1)\nn\\
R_1 + R_2 \leqslant & I(X_1TV;Z) - I(TV;S|X_1)
\end{flalign}
for some distribution $P_{X_1STUVX_2YZ} = P_{X_1}P_{S}P_{TUVX_2|X_1S}P_{YZ|SX_1X_2}$, which satisfies the Markov chain conditions $T \leftrightarrow UV \leftrightarrow X_{1}X_{2}S \leftrightarrow YZ$.
\end{theorem}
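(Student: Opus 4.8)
The plan is to establish each of the three inequalities by single-letterizing Fano's inequality, following the standard converse methodology for channels with noncausally known state (of Gel'fand--Pinsker type), adapted to the two-receiver cognitive setting. First I would invoke Fano's inequality for the two decoders: since $W_1$ is decoded at receiver~1 from $Y^n$, $nR_1 \le I(W_1;Y^n) + n\epsilon_n$; since $W_2$ is decoded at receiver~2 from $Z^n$ and $W_1 \perp W_2$, $nR_2 \le I(W_2;Z^n\mid W_1)+n\epsilon_n$; and since both messages are decoded at receiver~2, $n(R_1+R_2)\le I(W_1W_2;Z^n)+n\epsilon_n$. For the first bound I would additionally use the genie step $I(W_1;Y^n)\le I(W_1W_2;Y^n)$, which only weakens the outer bound but lets the $W_2$-information enter the auxiliary seen by receiver~1, matching the appearance of $U$ in that bound.

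The decisive structural fact is that transmitter~1's codeword is a function of $W_1$ alone, $x_1^n = f_1(W_1)$; because the state is independent of the messages, this yields $X_1^n \perp S^n$, which is exactly the product form $P_{X_1}P_S$ demanded by the theorem. I would define the auxiliaries at each time $i$ from the messages together with past outputs and future states, taking $T_i$ to carry the common information $(W_1, S_{i+1}^n)$ and the appropriate past-output genie, while $U_i$ (the receiver-1 auxiliary) and $V_i$ (the receiver-2 auxiliary) refine $T_i$ by appending the $W_2$-dependent and remaining output/state terms, so that $T_i$ is a deterministic function of each of $U_i$ and $V_i$. This nesting is what makes the required chain $T \leftrightarrow UV \leftrightarrow X_1X_2S \leftrightarrow YZ$ hold: the last link is the memorylessness of the channel (given the current inputs and state, the current outputs are independent of all other-time quantities inside the auxiliaries), and the middle factorization $P_{TUVX_2\mid X_1S}$ comes from the encoder maps.

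The engine that produces the subtracted terms $I(TU;S\mid X_1)$ and $I(TV;S\mid X_1)$ is the Csisz\'ar--K\"orner sum identity, which converts cross sums of the form $\sum_i I(S_{i+1}^n; Y_i \mid Y^{i-1}, \cdot)$ into $\sum_i I(Y^{i-1}; S_i \mid S_{i+1}^n, \cdot)$, and the analogue with $Z$. Combined with the i.i.d.\ property of $S^n$ and the independence $X_1 \perp S$, these reassemble into the single-letter negative mutual-information terms conditioned on $X_{1i}$. After introducing a time-sharing variable $Q$ uniform on $\{1,\dots,n\}$ and folding it into the auxiliaries (equivalently, conditioning on it and invoking the resulting averaged distribution), the three averaged sums collapse to the claimed single-letter bounds as $\epsilon_n \to 0$.

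The hard part will be selecting one consistent set of auxiliary definitions that simultaneously yields all three bounds while respecting the single Markov chain. The tension is that the $R_1$ bound is driven by receiver~1's output $Y^n$, so its genie naturally involves $Y^{i-1}$, whereas the $R_2$ and $R_1+R_2$ bounds are driven by $Z^n$ and involve $Z$-side terms, yet $T$, $U$, $V$ must be defined identically across all three. I would resolve this by letting the auxiliaries contain both the past $Y^{i-1}$ and the future $Z_{i+1}^n$ together with $S_{i+1}^n$, and then verify bound by bound that the sum identity pairs the positive and negative terms correctly. The bookkeeping of which output sequence plays the role of ``past'' versus ``future'' in each application of the identity is where the argument must be executed most carefully, and it is also the point at which the factorization $P_{TUVX_2\mid X_1S}$ and the chain $T \leftrightarrow UV \leftrightarrow X_1X_2S \leftrightarrow YZ$ must be re-checked against the chosen definitions.
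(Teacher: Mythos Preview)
Your general framework (Fano, Gel'fand--Pinsker telescoping, single-letterization with a time-sharing variable) is exactly what the paper does, but your specific plan for the auxiliaries is more tangled than necessary and contains a mismatch. The paper simply takes
\[
T_i=(W_1,S_{i+1}^n,X_1^n),\qquad U_i=(T_i,Y^{i-1}),\qquad V_i=(T_i,W_2,Z^{i-1}),
\]
so that $T_i$ carries \emph{no} output sequence at all, while $U_i$ and $V_i$ each carry only the past of their own receiver's output. With this choice the three bounds decouple: the $R_1$ bound is a straight Gel'fand--Pinsker telescoping of $I(W_1;Y^n)$ (no genie step with $W_2$ is used---note $U_i$ does not contain $W_2$), and the $R_2$ and $R_1+R_2$ bounds are the analogous telescoping of $I(W_2;Z^n\mid W_1)$ and $I(W_1W_2;Z^n)$. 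Because $T_i$ sits inside both $U_i$ and $V_i$, the chain $T\leftrightarrow UV\leftrightarrow X_1X_2S\leftrightarrow YZ$ is automatic.

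The concrete difficulty with your plan is the pairing of $Z_{i+1}^n$ (future output) with $S_{i+1}^n$ (future state). The Gel'fand--Pinsker telescoping that produces the subtracted $I(\,\cdot\,;S\mid X_1)$ term always pairs a \emph{past} output with a \emph{future} state (or vice versa): starting from $I(W_2;Z^n\mid W_1)$, the identity that yields $\sum_i[I(W_2S_{i+1}^n;Z_i\mid W_1Z^{i-1})-I(S_i;Z^{i-1}\mid W_1W_2S_{i+1}^n)]$ needs $Z^{i-1}$, not $Z_{i+1}^n$. If you insist on $Z_{i+1}^n$ you would have to switch to $S^{i-1}$, which then clashes with the $S_{i+1}^n$ you want for the $Y$-side bound. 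The paper sidesteps the whole tension by putting the \emph{past} of each receiver's own output into its own auxiliary, so both pair with the same future state $S_{i+1}^n$ already stored in $T_i$. Your genie step $I(W_1;Y^n)\le I(W_1W_2;Y^n)$ is harmless but unneeded once you see that $U$ need not carry any $W_2$-information.
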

\begin{proof}
The proof employs the techniques in \cite{GPEncoding} for the Gel'fand-Pinsker model, and exploits independence properties among variables in our model. In particular, the auxiliary random variables are carefully constructed. The detailed proof is relegated to Appendix \ref{ProofOrigConv}.
\end{proof}

We now provide inner and outer bounds for the degraded channel, which are useful for further identifying the cases for which we obtain the capacity region.
\begin{theorem}\label{DegrConv}(Inner and Outer Bounds)
If the cognitive interference channel with the state sequence noncausally known at transmitter 2 satisfies the degradedness condition \eqref{eq:cond5} (i.e., receiver 1 is degraded with regard to receiver 2), then an achievable region consists of the rate pairs $(R_1, R_2)$ satisfying:
\begin{flalign}\label{eq:DegrAchi}
R_1 \leqslant & I(X_1T;Y) - I(T;S|X_1)\nn\\
R_2 \leqslant & I(V;Z|X_1T) - I(V;S|X_1T)\nn\\
R_2 \leqslant & I(TV;Z|X_1) - I(TV;S|X_1)
\end{flalign}
for some distribution $P_{X_1STVX_2YZ} = P_{X_1}P_{S}P_{TVX_2|X_1S}P_{YZ|SX_1X_2}$ that satisfies
\begin{equation}
I(V;Z|TX_1)-I(V;S|TX_1)\ge 0.
\end{equation}

An outer bound on the capacity region for such a channel consists of the rate pairs $(R_1, R_2)$ satisfying:
\begin{flalign}\label{eq:DegrConv}
R_1 \leqslant & I(X_1 T;Y) - I(T;S|X_1)\nn\\
R_2 \leqslant & I(TV;Z|X_1) - I(TV;S|X_1)
\end{flalign}
for some distribution $P_{X_1STVX_2YZ} = P_{X_1}P_{S}P_{TVX_2|X_1S}P_{YZ|SX_1X_2}$, which satisfies the Markov chain conditions $T \leftrightarrow V \leftrightarrow X_{1}X_{2}S \leftrightarrow YZ$.
\end{theorem}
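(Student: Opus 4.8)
The statement bundles an inner and an outer bound, and I would attack them separately.

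For the inner bound I would not re-run the achievability argument but instead derive \eqref{eq:DegrAchi} from Corollary \ref{cor:innerdm2}, whose region consists of exactly the three inequalities in \eqref{eq:DegrAchi} together with the extra sum-rate constraint $R_1+R_2 \le I(X_1TV;Z)-I(TV;S|X_1)$. The plan is to show that under the degradedness condition \eqref{eq:cond5} this sum-rate constraint is redundant, so that the three-inequality region is already contained in the achievable region of Corollary \ref{cor:innerdm2}. Concretely, adding the first inequality $R_1 \le I(X_1T;Y)-I(T;S|X_1)$ to the second $R_2 \le I(V;Z|X_1T)-I(V;S|X_1T)$ and then using the data-processing inequality $I(X_1T;Y)\le I(X_1T;Z)$ --- valid because \eqref{eq:cond5} yields the Markov chain $X_1T\leftrightarrow Z\leftrightarrow Y$ under the region's input distribution --- the chain rule collapses the right-hand side to $I(X_1TV;Z)-I(TV;S|X_1)$. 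Thus every rate pair meeting \eqref{eq:DegrAchi} also meets the sum-rate bound of Corollary \ref{cor:innerdm2} and is therefore achievable.

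For the outer bound I would give a self-contained converse rather than trying to specialize Theorem \ref{thr:OrigAchConv} (one cannot simply set $U=\phi$ there, since that would shrink the outer region and might violate it). Starting from Fano's inequality at the two receivers, $nR_1 \le I(W_1;Y^n)+n\epsilon_n$ and $nR_2 \le I(W_2;Z^n|W_1)+n\epsilon_n$ (the latter using $W_1\perp W_2$), I would define the auxiliaries $T_i=(W_1,Y^{i-1},S_{i+1}^n)$ and $V_i=(W_1,W_2,Z^{i-1},S_{i+1}^n)$, with $X_{1i}$ the $i$-th input of transmitter 1 --- a function of $W_1$ alone and hence independent of $S_i$. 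For $R_1$, the standard Gel'fand--Pinsker manipulation of \cite{GPEncoding} together with the Csisz\'ar sum identity gives $I(W_1;Y^n)\le\sum_i[I(T_i;Y_i)-I(T_i;S_i)]$, and since $X_{1i}$ is a function of $T_i$ with $X_{1i}\perp S_i$ this equals $\sum_i[I(X_{1i}T_i;Y_i)-I(T_i;S_i|X_{1i})]$; no degradedness is needed here.

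For $R_2$ I would first apply a conditional Gel'fand--Pinsker converse to obtain $I(W_2;Z^n|W_1)\le\sum_i[I(A_i;Z_i|W_1)-I(A_i;S_i|W_1)]$ with $A_i=(W_2,Z^{i-1},S_{i+1}^n)$, and then enlarge the auxiliary to the pair $(T_i,V_i)$ while converting the conditioning on $W_1$ to conditioning on $X_{1i}$. Here degradedness enters decisively: the $Z_i$-term can only increase under this enlargement, while the state term is preserved exactly because \eqref{eq:cond5} forces $I(Y^{i-1};S_i\mid W_1 W_2 Z^{i-1}S_{i+1}^n)=0$ --- given $Z^{i-1}$, the degraded past $Y^{i-1}$ carries no information about the current state. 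This delivers $nR_2\le\sum_i[I(T_iV_i;Z_i|X_{1i})-I(T_iV_i;S_i|X_{1i})]+n\epsilon_n$. The required Markov chain $T\leftrightarrow V\leftrightarrow X_1X_2S\leftrightarrow YZ$ follows from the same fact: conditioned on $V_i$ (which contains $W_1$, $Z^{i-1}$ and $S_{i+1}^n$), the only extra component of $T_i$ is $Y^{i-1}$, which degradedness renders independent of $(X_{1i},X_{2i},S_i,Y_i,Z_i)$ given $Z^{i-1}$. A standard time-sharing variable then single-letterizes all the bounds.

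The main obstacle, and the place where degradedness is indispensable, is the need to use a single pair of auxiliary variables $(T,V)$ that simultaneously produces a receiver-1 bound expressed through $Y$, a receiver-2 bound expressed through $Z$, and the Markov structure $T\leftrightarrow V\leftrightarrow X_1X_2S$. The tension is that the receiver-1 converse naturally wants $Y^{i-1}$ inside $T_i$ whereas the receiver-2 converse naturally wants $Z^{i-1}$ inside $V_i$; degradedness is precisely what makes the foreign term $Y^{i-1}$ harmless, both in the state penalty of the $R_2$ bound and in verifying the Markov chain. I would therefore treat the verification of these two consequences of \eqref{eq:cond5} as the technical heart of the converse.
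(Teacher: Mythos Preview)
Your approach is correct and essentially matches the paper's own proof: the inner bound is obtained from Corollary~\ref{cor:innerdm2} by observing that the sum-rate constraint becomes redundant under \eqref{eq:cond5}, and the outer bound uses the same Gel'fand--Pinsker converse with $T_i$ carrying $Y^{i-1}$ and $V_i$ carrying $Z^{i-1}$. The only cosmetic difference is that the paper sets $V_i=(T_i,W_2,Z^{i-1})$ so that $T_i$ is contained in $V_i$ and the Markov chain $T\leftrightarrow V\leftrightarrow X_1X_2S$ holds trivially, invoking degradedness instead to justify inserting $Y^{i-1}$ into the conditioning of the $R_2$ entropy terms---equivalent to your use of degradedness for the Markov-chain verification and the vanishing state-penalty increment.
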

\begin{proof}
The achievability follows from the achievable region given in Corollary \ref{cor:innerdm2} by removing the fourth bound on $R_1+R_2$ due to the degradedness condition. The proof of the outer bound is detailed in Appendix \ref{pr:DegrConv}.
\end{proof}
\begin{remark}
By setting $X_1 = \phi$, the third bound in \eqref{eq:DegrAchi} is redundant, and the achievable region in Theorem \ref{DegrConv} coincides with the achievable region for the degraded broadcast channel with state noncausally known at the transmitter in \cite{BroadcastSteinberg}. This is reasonable because although the model in \cite{BroadcastSteinberg} does not require receiver 2 to decode $W_1$ as in our model, receiver 2 is able to do so due to the degradedness condition.
\end{remark}

The inner and outer bounds given in Theorems \ref{tr:OrigAch} and \ref{thr:OrigAchConv} do not match in general. We next identify two classes of channels, for which we obtain the capacity region. We first provide the capacity region for the degraded semideterministic channel in the following theorem.
\begin{theorem}\label{DetmDegrConv}(Capacity)
If the interference channel with the state sequence noncausally known at transmitter 2 satisfies the degradedness condition \eqref{eq:cond5} and the semideterministic condition such that $P_{Z|X_1X_2S}$ takes on values of either ``$0$" or ``$1$", then the capacity region of the channel consists of rate pairs $(R_1, R_2)$ satisfying:
\begin{flalign}
R_1 & \leqslant I(X_1T;Y) - I(T;S|X_1)\nn\\
R_2 & \leqslant H(Z|X_1TS) \nn \\
R_2 & \leqslant H(Z|X_1) - I(TZ;S|X_1)\label{eq:semi3}
\end{flalign}
for some distribution $P_{X_1STX_2YZ}=P_{X_1}P_SP_{TX_2|SX_1}P_{Z|X_1X_2S}P_{Y|Z}$, where $T$ is an auxiliary random variable and its cardinality is bounded by $|\cT|\leqslant |\cX_1||\cX_2||\cS|+1$.
\end{theorem}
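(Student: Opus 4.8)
The plan is to obtain achievability by specializing the inner bound of Theorem~\ref{DegrConv}, and to prove the converse from scratch with a single, carefully chosen auxiliary random variable, leaning on the degradedness condition~\eqref{eq:cond5} and the semideterministic property at every step.

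For achievability I would take the inner bound in Theorem~\ref{DegrConv} and set $V=Z$. Because $Z$ is a deterministic function of $(X_1,X_2,S)$, the substitution collapses the three inner-bound expressions exactly onto the three claimed bounds: $I(V;Z|X_1T)-I(V;S|X_1T)=H(Z|X_1TS)$ and $I(TV;Z|X_1)-I(TV;S|X_1)=H(Z|X_1)-I(TZ;S|X_1)$, while the $R_1$ bound is unchanged. The validity constraint $I(V;Z|TX_1)-I(V;S|TX_1)=H(Z|X_1TS)\ge 0$ holds automatically, and the distribution reduces to the stated form $P_{X_1}P_SP_{TX_2|SX_1}P_{Z|X_1X_2S}P_{Y|Z}$. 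This step is routine.

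For the converse I would start from Fano's inequality: receiver~1 gives $nR_1\le I(W_1;Y^n)+n\epsilon_n$, and receiver~2 (which decodes both messages) gives $nR_2\le I(W_2;Z^n|W_1)+n\epsilon_n$. The crucial choice is the single auxiliary $T_i=(W_1,Y^{i-1},S_{i+1}^n)$; note that $X_{1i}$ is a function of $W_1$ and hence of $T_i$, and that $S_i\perp(W_1,S_{i+1}^n)$, which makes $I(T_i;S_i|X_{1i})=I(Y^{i-1};S_i|W_1S_{i+1}^n)$. For the $R_1$ bound I would insert $S_{i+1}^n$, apply the Csisz\'ar sum identity to convert $\sum_i I(S_{i+1}^n;Y_i|W_1Y^{i-1})$ into $\sum_i I(Y^{i-1};S_i|W_1S_{i+1}^n)$, and read off $R_1\le I(X_1T;Y)-I(T;S|X_1)$. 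For the bound $R_2\le H(Z|X_1TS)$ I would first condition on $S^n$ (legitimate since $(W_1,W_2)\perp S^n$, which only increases the quantity), invoke the semideterministic property to obtain $H(Z^n|W_1W_2S^n)=0$ so that $nR_2\le H(Z^n|W_1S^n)+n\epsilon_n$, and then use degradedness to replace $Z^{i-1}$ by the less informative $Y^{i-1}$ in the conditioning, yielding $\sum_i H(Z_i|X_{1i}T_iS_i)$.

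The main obstacle is the third bound $R_2\le H(Z|X_1)-I(TZ;S|X_1)$, which must be established with the \emph{same} $T_i$ that is already forced (by the $Y$-output $R_1$ bound) to use $Y^{i-1}$ rather than $Z^{i-1}$. Here I would write $I(W_2;Z^n|W_1)=H(Z^n|W_1)-I(S^n;Z^n|W_1W_2)$, again using $H(Z^n|W_1W_2S^n)=0$, bound the first term by $\sum_i H(Z_i|X_{1i})$, and expand the second as $\sum_i I(S_i;Z^n|W_1W_2S_{i+1}^n)$. The penalty I actually need is $\sum_i I(T_iZ_i;S_i|X_{1i})=\sum_i I(Y^{i-1}Z_i;S_i|W_1S_{i+1}^n)$, which carries no $W_2$ and only the degraded past $Y^{i-1}$. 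The reconciliation is a two-step estimate: first add $W_2$ to the conditioning for free because $S_i\perp(W_1,W_2,S_{i+1}^n)$, and then apply a data-processing inequality, since conditioned on $Z^n$ the pair $(Y^{i-1},Z_i)$ is a degraded (independent-noise) function of $Z^n$ and hence independent of $S_i$; this gives $I(Y^{i-1}Z_i;S_i|W_1S_{i+1}^n)\le I(S_i;Z^n|W_1W_2S_{i+1}^n)$ termwise, and summing recovers the penalty. Making this alignment work with one auxiliary, rather than two incompatible ones, is the delicate part of the argument. Finally I would introduce a uniform time-sharing variable $Q$, set $T=(T_Q,Q)$, $X_1=X_{1Q}$, $S=S_Q$, $Y=Y_Q$, $Z=Z_Q$, verify $X_1\perp S$ and that the induced law factors as required (including the degraded/semideterministic channel $P_{Z|X_1X_2S}P_{Y|Z}$), and close with the support-lemma cardinality bound $|\cT|\le|\cX_1||\cX_2||\cS|+1$ obtained by preserving $P_{X_1X_2S}$ together with the rate functionals.
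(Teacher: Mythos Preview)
Your proposal is correct and aligns closely with the paper's proof. Achievability is identical (set $V=Z$ in Theorem~\ref{DegrConv}), and for the converse you choose the same auxiliary $T_i=(W_1,S_{i+1}^n,Y^{i-1})$ (the paper also folds $X_1^n$ into $T_i$, which is equivalent since $X_1^n$ is a function of $W_1$); the $R_1$ bound and the bound $R_2\le H(Z|X_1TS)$ are derived the same way, using degradedness to insert $Y^{i-1}$ after $Z^{i-1}$.

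The one noteworthy difference is in the third bound $R_2\le H(Z|X_1)-I(TZ;S|X_1)$. The paper reuses the intermediate expression~\eqref{eq:conOriAch21} from the general outer bound, rewrites $H(S_i|W_1W_2S_{i+1}^nZ^{i-1})$ so as to adjoin $Y^{i-1}$ (legitimate by degradedness), and then drops a nonpositive term $-H(Z_i|W_1W_2S_{i+1}^nZ^{i-1}S_i)$. Your route is more direct: you use $H(Z^n|W_1W_2S^n)=0$ to write $I(W_2;Z^n|W_1)=H(Z^n|W_1)-I(S^n;Z^n|W_1W_2)$, and then compare the penalty $I(Y^{i-1}Z_i;S_i|W_1S_{i+1}^n)$ to $I(Z^n;S_i|W_1W_2S_{i+1}^n)$ via two clean inequalities (adding $W_2$ increases the mutual information since $S_i\perp(W_1,W_2,S_{i+1}^n)$, and then data processing through the Markov chain $(Y^{i-1},Z_i)\text{--}Z^n\text{--}S_i$ given $(W_1,W_2,S_{i+1}^n)$). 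Both arguments are valid; yours is self-contained and avoids the algebraic bookkeeping of the paper's rewriting, while the paper's approach has the advantage of recycling the machinery already built for Theorem~\ref{thr:OrigAchConv}.
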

\begin{proof}
The achievability follows from the achievable region in \eqref{eq:DegrAchi} by setting $V=Z$. The proof of the converse is detailed in Appendix \ref{ProofDetmDegrConv}.
\end{proof}

Following Theorem \ref{DetmDegrConv}, we also obtain the capacity region for the semideterministic degraded broadcast channel with the noncausal state information known at the transmitter by setting $X_1=\Phi$ in Theorem \ref{DetmDegrConv}, which consists of rate pairs $(R_1, R_2)$ satisfying:
  \begin{flalign}
      R_1 \leqslant & I(T;Y) - I(T;S) \label{eq:semi2}\\
      R_2 \leqslant & H(Z|TS)
  \end{flalign}
for some distribution that $P_{STXYZ}=P_SP_{TX|S}P_{Z|XS}P_{Y|Z}$, where $X$ is the channel input, and $Y$ and $Z$ are the channel outputs. We note that the third bound in \eqref{eq:semi3} becomes redundant when setting $X_1=\Phi$, because
\begin{flalign}
H(Z) - I(TZ;S)&=H(Z|TS)+(I(T;Z)-I(T;S)) \nn \\
&\ge H(Z|TS)+(I(T;Y)-I(T;S)) \nn \\
& \ge H(Z|TS)
\end{flalign}
where $I(T;Y)-I(T;S) \ge 0$ is necessary to guarantee $R_1 \ge 0 $ in \eqref{eq:semi2}.

We next obtain the following capacity region when receiver 1 is less noisy than receiver 2, i.e, the channel satisfies the condition \eqref{eq:cond4}.
\begin{theorem}(Capacity)
For the cognitive interference channel with state noncausally known at transmitter 2, if it satisfies the condition \eqref{eq:cond4}, the capacity region consists of rate pairs $(R_1, R_2)$ satisfying:
\begin{equation}
\begin{split}
R_2 \leqslant & I(U;Z|X_1) - I(U;S|X_1)\\
R_1 + R_2 \leqslant & I(X_1U;Z) - I(U;S|X_1)
\end{split}
\end{equation}
for some distribution $P_{X_1SUX_2YZ} =P_{X_1}P_{S}P_{UX_2|X_1S}P_{YZ|SX_1X_2}$, where $U$ is an auxiliary random variable and its cardinality is bounded by $|\cU|\leqslant |\cX_1||\cX_2||\cS|$.
\end{theorem}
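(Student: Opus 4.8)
The plan is to obtain achievability by specializing the inner bound of Theorem \ref{tr:OrigAch}, and to obtain the converse by a Gel'fand--Pinsker-style single-letterization carried out entirely at receiver~2. A useful preliminary observation is that every bound in the claimed region is expressed through $Z$, so receiver~1 enters only through the achievability side.

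For achievability, I would set $T=\phi$ and $V=\phi$ in Theorem \ref{tr:OrigAch}. The side condition \eqref{eq:cond} then reads $0\ge 0$ and holds automatically, and the five inequalities collapse to $R_1\le I(X_1U;Y)-I(U;S|X_1)$, $R_2\le I(U;Z|X_1)-I(U;S|X_1)$, $R_1+R_2\le I(X_1U;Z)-I(U;S|X_1)$, and $R_1+R_2\le I(X_1U;Y)-I(U;S|X_1)$. Invoking condition \eqref{eq:cond4}, i.e.\ $I(X_1U;Y)\ge I(X_1U;Z)$ for every admissible distribution, the $Y$-sum constraint is looser than the $Z$-sum constraint and hence inactive; moreover $R_1\le R_1+R_2\le I(X_1U;Z)-I(U;S|X_1)\le I(X_1U;Y)-I(U;S|X_1)$, so the standalone $R_1$ constraint is inactive as well. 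What remains is exactly the claimed region, which is therefore achievable. Condition \eqref{eq:cond4} is used only at this step.

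For the converse, I would apply Fano's inequality at receiver~2, which decodes $(W_1,W_2)$ from $Z^n$, giving $nR_2\le I(W_2;Z^n|W_1)+n\epsilon_n$ and $n(R_1+R_2)\le I(W_1W_2;Z^n)+n\epsilon_n$. I would then take the auxiliary $U_i=(W_1,W_2,Z^{i-1},S_{i+1}^n)$, noting that $X_{1i}$ is a deterministic function of $W_1$, hence of $U_i$, and that $X_{1i}\perp S_i$ since $W_1\perp S^n$. Expanding both mutual informations term-by-term, introducing $S_{i+1}^n$, and applying the Csisz\'ar sum identity to rewrite $\sum_i I(S_{i+1}^n;Z_i|W_1W_2Z^{i-1})$ as $\sum_i I(Z^{i-1};S_i|W_1W_2S_{i+1}^n)=\sum_i I(U_i;S_i)$, I expect to reach $nR_2\le\sum_i[I(U_i;Z_i|X_{1i})-I(U_i;S_i|X_{1i})]+n\epsilon_n$ and $n(R_1+R_2)\le\sum_i[I(X_{1i}U_i;Z_i)-I(U_i;S_i|X_{1i})]+n\epsilon_n$. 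The passage from unconditional to $X_{1i}$-conditioned quantities is powered by $I(X_{1i};S_i)=0$ together with the fact that $X_{1i}$ is a function of $U_i$. A standard time-sharing variable $Q$, with $U=(U_Q,Q)$, $X_1=X_{1Q}$, $S=S_Q$, $Z=Z_Q$, then delivers the two single-letter bounds simultaneously with the \emph{same} $U$, and one verifies that the induced law lies in the class $P_{X_1}P_SP_{UX_2|X_1S}P_{YZ|SX_1X_2}$: memorylessness gives $YZ\perp U\mid X_1X_2S$, and $X_1\perp S$ survives the averaging over $Q$ because each $S_i$ has the common marginal $P_S$. The cardinality bound $|\cU|\le|\cX_1||\cX_2||\cS|$ then follows from the Fenchel--Eggleston--Carath\'eodory support lemma.

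I expect the main obstacle to be the converse single-letterization, and specifically the delicate point of producing conditioning on the single-letter input $X_{1i}$ rather than on the full message $W_1$, while arranging the state terms so that the Gel'fand--Pinsker penalties assemble precisely into $-I(U;S|X_1)$. Placing $W_1$ inside $U_i$ and exploiting $X_{1i}\perp S_i$ in tandem with the Csisz\'ar sum identity is what resolves this; once the correct $U_i$ is identified, the remaining manipulations are routine.
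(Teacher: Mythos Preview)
Your proposal is correct. The achievability is essentially the paper's: the paper specializes Theorem~\ref{tr:OrigAch} with $T=\phi$, $V=U$ rather than your $T=\phi$, $V=\phi$, but both choices collapse the five constraints to the same pair once condition~\eqref{eq:cond4} is invoked, so this is an immaterial variation.

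For the converse, the paper does not single-letterize at all; it simply observes that the claimed region is precisely the capacity region of the multiple-access channel with state known at one encoder (with receiver~2 as the MAC receiver), and cites \cite{MACSomekhBaruch08} for that result. Since receiver~2 must decode both $W_1$ and $W_2$, the MAC capacity is automatically an outer bound on the present model, and this finishes the proof in one line. What you have written \emph{is} the converse argument of \cite{MACSomekhBaruch08}: your choice $U_i=(W_1,W_2,Z^{i-1},S_{i+1}^n)$ together with the Csisz\'ar sum identity and the passage from $W_1$ to $X_{1i}$ via $X_{1i}\perp S_i$ reproduces exactly that MAC-with-state converse. So the two approaches coincide in substance; the paper delegates the work to a citation while you carry it out explicitly, which has the merit of being self-contained.
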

\begin{proof}
Achievability follows from Theorem 1 by setting $T=\phi$, $V=U$ and using \eqref{eq:cond4} to remove the redundant bounds. The converse follows from the capacity region of the multiple access channel (with its receiver being receiver 2 in our model) with state available at one transmitter given in \cite{MACSomekhBaruch08}, which clearly is an outer bound for our model.
\end{proof}

\section{Gaussian Channels}\label{sec:generalGaussian}

In this section, we consider the Gaussian cognitive interference channel with state noncausally known at only transmitter 2. The channel outputs at receivers 1 and 2 at time instant $i$ are given by
\begin{flalign}
  &Y_i=X_{1i}+a X_{2i}+S_i+N_{1i} \nn\\
&Z_i=b X_{1i}+X_{2i}+c S_i+N_{2i} \label{eq:gaussmodel}
\end{flalign}
where the noise variables $N_{1i}\sim \mathcal{N}(0,1)$ and $N_{2i} \sim \mathcal{N}(0,1)$, and the state variable $S_i \sim \mathcal{N}(0,Q)$. Both the noise variables and the state
variable are i.i.d. over channel uses. As we assume for the discrete memoryless channel, the state sequence $\{S_i\}_{i=1}^n$ is noncausally known at transmitter 2 only. The channel inputs are subject to the average power constraints
\begin{equation}
  \frac{1}{n} \sum_{i=1}^n X_{1i}^2\leqslant P_1 \quad \textrm{and}\quad \frac{1}{n} \sum_{i=1}^n X_{2i}^2\leqslant P_2 \textrm{.}
\end{equation}

We partition the Gaussian cognitive interference channel with state into two classes corresponding to $|a| \leqslant 1$ and $|a| > 1$, and study these two classes separately in this and next subsections. In each subsection, we first provide inner and outer bounds on the capacity region, and then characterize partial boundaries of the capacity region based on these bounds. We also obtain the full capacity region for channels that satisfy certain conditions.

We note that our results for Gaussian channels exploit the fact that for both $|a| > 1$ and $|a| \leqslant 1$, the Gaussian channel is stochastically degraded given $X_1$ and $S$, i.e., its marginal distributions at the two receivers are the same as a physically degraded Gaussian channel that satisfies the condition \eqref{eq:cond2} and \eqref{eq:cond1}, respectively. Because the capacities of the two Gaussian channels are the same, our results below are applicable to both stochastically degraded and physically degraded channels with the proofs exploiting the physical degradedness conditions \eqref{eq:cond2} and \eqref{eq:cond1}.

\subsection{Gaussian Channel: $|a| > 1$}\label{sec:gauss2}

\subsubsection{Inner and Outer Bounds}

If $|a| > 1$, the Gaussian channel satisfies the condition (\ref{eq:cond2}). We first provide an inner bound for this class of channels.
\begin{proposition}(Inner Bound)\label{th:inner21}
For the Gaussian cognitive interference channel with state noncausally known at transmitter 2, if $|a|>1$, an inner bound consists of rate pairs $(R_1,R_2)$ satisfying:
\begin{small}
\begin{flalign}
R_2 \leqslant & \frac{1}{2} \log(1+P_2') \nn \\
R_1+R_2 \leqslant & \frac{1}{2} \log\left(1+\frac{b^2P_1+2b\rho_{21}\sqrt{P_1P_2}+\rho_{21}^2P_2}{(1-\rho_{21}^2)P_2+2c\rho_{2s}\sqrt{P_2Q}+c^2Q+1}\right)+\frac{1}{2} \log(1+P_2') \nn \\
R_1+R_2 \leqslant & \frac{1}{2} \log\left(1+\frac{P_1+2a\rho_{21}\sqrt{P_1P_2}+a^2\rho_{21}^2P_2}{a^2(1-\rho_{21}^2)P_2+2a\rho_{2s}\sqrt{P_2Q}+Q+1}\right) \nn \\
& +\frac{1}{2}\log \left(1+\frac{a^2P_2'^2+2a\rho_{2s1}\rho_{2s2}P_2'-a^2\rho_{2s1}^2P_2'-\rho_{2s1}^2}{a^2\rho_{2s1}^2P_2'+\rho_{2s2}^2P_2'+P_2'+\rho_{2s1}^2-2a\rho_{2s1}\rho_{2s2}P_2'} \right)\label{eq:inner211}
\end{flalign}
\end{small}
where $P_2' = (1-\rho_{21}^2-\rho_{2s}^2)P_2$ and $\rho_{21}^2+\rho_{2s}^2 \leqslant 1$, $\rho_{2s1} = \alpha(c\sqrt{Q}+\rho_{2s}\sqrt{P_2})$, $\rho_{2s2}=(\sqrt{Q}+a\rho_{2s}\sqrt{P_2})$, $\alpha=\frac{P_2'}{P_2'+1}$.
\end{proposition}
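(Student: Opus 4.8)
The plan is to specialize the discrete-memoryless region of Lemma \ref{NaiAchv} to a jointly Gaussian input and then evaluate each mutual-information term. Since $|a|>1$ makes the channel satisfy the degradedness condition \eqref{eq:cond2}, receiver~1 is the stronger receiver for $W_2$, so it is natural to let receiver~1 decode all of $W_2$. Accordingly I would set $R_{22}=0$ (hence the private auxiliary $V=\phi$) and $T=\phi$ in Lemma \ref{NaiAchv}, leaving a single auxiliary $U$ that carries all of $W_2=W_{21}$ and that both receivers decode. With these choices the region of Lemma \ref{NaiAchv} collapses to
\begin{align*}
R_2 &\leq I(U;Z|X_1)-I(U;S|X_1),\\
R_1+R_2 &\leq I(X_1U;Z)-I(U;S|X_1),\\
R_1+R_2 &\leq I(X_1U;Y)-I(U;S|X_1),
\end{align*}
which already has the shape of the claimed region: one bound on $R_2$ together with a sum-rate bound at each receiver. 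The only side condition required for the binning to be valid is $I(U;Z|X_1)-I(U;S|X_1)\ge 0$, which will follow once the evaluation gives $\tfrac12\log(1+P_2')$.

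Next I would fix the jointly Gaussian input. Let $X_1\sim\mathcal{N}(0,P_1)$ carry $W_1$, and use the superposition/dirty-paper decomposition
\[
X_2=\rho_{21}\sqrt{\tfrac{P_2}{P_1}}\,X_1+\rho_{2s}\sqrt{\tfrac{P_2}{Q}}\,S+X_2'',\qquad X_2''\sim\mathcal{N}(0,P_2'),
\]
with $X_2''$ independent of $(X_1,S)$ and $P_2'=(1-\rho_{21}^2-\rho_{2s}^2)P_2$, so that $\rho_{21}$ implements the cooperative transmission of $W_1$ and $X_2''$ carries $W_2$. The auxiliary $U$ is the Gel'fand--Pinsker variable matched to receiver~2: after receiver~2 cancels the decoded $X_1$ (and the $X_1$-component of $X_2$), its effective state is $(c+\rho_{2s}\sqrt{P_2/Q})S$, so I take $U=X_2''+\alpha\,(c+\rho_{2s}\sqrt{P_2/Q})S$ with $\alpha=P_2'/(P_2'+1)$, the Costa coefficient for unit noise and signal power $P_2'$. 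Because $X_2''$ and $S$ are independent of $X_1$, we have $U\perp X_1$, so $I(U;S|X_1)=I(U;S)$ and conditioning on $X_1$ merely removes the coherent $X_1$ term from each output.

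I would then evaluate the three bounds term by term. The matched dirty-paper computation gives $I(U;Z|X_1)-I(U;S|X_1)=\tfrac12\log(1+P_2')$, which is simultaneously the first claimed bound and the second summand of the receiver-2 sum-rate bound. Treating the residual signal and state as Gaussian noise, $I(X_1;Z)$ and $I(X_1;Y)$ reduce to the two ``$X_1$-decoding'' logarithms in \eqref{eq:inner211}: their numerators are the coherent powers $(b\sqrt{P_1}+\rho_{21}\sqrt{P_2})^2$ and $(\sqrt{P_1}+a\rho_{21}\sqrt{P_2})^2$, and their denominators are the variances of $X_2''+(c+\rho_{2s}\sqrt{P_2/Q})S+N_2$ and of $aX_2''+(1+a\rho_{2s}\sqrt{P_2/Q})S+N_1$, respectively. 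Writing $I(X_1U;\cdot)=I(X_1;\cdot)+I(U;\cdot|X_1)$ then assembles the two sum-rate bounds.

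The main obstacle is the remaining term $I(U;Y|X_1)-I(U;S|X_1)$, i.e.\ the rate at which receiver~1 decodes $W_2$ using the dirty-paper code matched to receiver~2 rather than to itself. Because $\alpha$ is not the minimum-mean-square-error coefficient for the receiver-1 channel $Y'=aX_2''+(1+a\rho_{2s}\sqrt{P_2/Q})S+N_1$, the state is only partially cancelled and I must evaluate $\tfrac12\log\bigl(P_2'/\mathrm{Var}(U\mid Y',X_1)\bigr)$ explicitly. This is a single scalar Gaussian MMSE computation for the jointly Gaussian pair $(U,Y')$; parametrizing it with $\rho_{2s1}=\alpha(c\sqrt{Q}+\rho_{2s}\sqrt{P_2})$ (the standard deviation of the state component embedded in $U$) and $\rho_{2s2}=\sqrt{Q}+a\rho_{2s}\sqrt{P_2}$ (that of the effective state at receiver~1) yields exactly the second logarithm in the third line of \eqref{eq:inner211}. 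I expect this covariance computation to be the only genuinely laborious step; everything else is substitution and the chain rule, and the validity condition $\tfrac12\log(1+P_2')\ge 0$ holds trivially.
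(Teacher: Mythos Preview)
Your proposal is correct and essentially coincides with the paper's proof: the paper sets $T=\phi$ and $U=V$ in Theorem~\ref{tr:OrigAch} (equivalently your $T=\phi$, $V=\phi$, $R_{22}=0$ in Lemma~\ref{NaiAchv}) to obtain the same three-bound region \eqref{eq:DMCinner2}, then plugs in the identical jointly Gaussian choice with $X_2=\rho_{21}\sqrt{P_2/P_1}\,X_1+X_2'+\rho_{2s}\sqrt{P_2/Q}\,S$ and the Costa auxiliary $U=X_2'+\alpha(c+\rho_{2s}\sqrt{P_2/Q})S$ matched to receiver~2. Your identification of the mismatched dirty-paper term $I(U;Y|X_1)-I(U;S|X_1)$ as the only nontrivial computation is exactly right.
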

\begin{proof}
By setting $T=\phi$ and $U=V$ in the inner bound given in Theorem \ref{tr:OrigAch}, we obtain an inner
bound that includes the following bounds:
\begin{flalign}
R_2 \leqslant & I(U;Z|X_1) - I(U;S|X_1)\nn\\
R_1 + R_2 \leqslant & I(X_1U;Z) - I(U;S|X_1)\nn\\
R_1 + R_2 \leqslant & I(X_1U;Y)-I(U;S|X_1)\; .\label{eq:DMCinner2}
\end{flalign}
Based on the above bounds, we choose the jointly Gaussian input distribution and employ dirty paper coding for $U$ to deal with the state in $Z$. More specifically, we set the random variables as follows and obtain the desired inner bound:
  \begin{flalign}
      &X_1 \sim \mathcal{N}(0,P_1), \quad X_2' \sim \mathcal{N}(0,P_2'), \quad P_2'  = (1-\rho_{21}^2-\rho_{2S}^2)P_2 \nn\\
      &X_2 = \rho_{21} \sqrt{\frac{P_2}{P_1}}X_1+ X_2'+ \rho_{2s} \sqrt{\frac{P_2}{Q}}S \nn\\
      &U = X_2' + \alpha \left(c + \rho_{2s}\sqrt{\frac{P_2}{Q}}\right)S \label{eq:innerbound2}
  \end{flalign}
where $X_1$, $X_2'$ and $S$ are independent random variables, and $\alpha = \frac{P_2'}{P_2' + 1}$.
\end{proof}

We next provide an outer bound on the capacity region.
\begin{proposition}(Outer Bound)\label{th:outer21}
For the Gaussian cognitive interference channel with state noncausally known at transmitter 2, if $|a|>1$, an outer bound consists of rate pairs $(R_1,R_2)$ satisfying:
\begin{small}
\begin{flalign}
R_2 \leqslant & \frac{1}{2} \log(1+P_2') \nonumber\\
R_1 + R_2 \leqslant & \frac{1}{2} \log\left(1+\frac{b^2P_1+2b\rho_{21}\sqrt{P_1P_2}+\rho_{21}^2P_2}{(1-\rho_{21}^2)P_2+2c\rho_{2s}\sqrt{P_2Q}+c^2Q+1}\right) + \frac{1}{2} \log(1+(1-\rho_{21}^2-\rho_{2s}^2)P_2) \label{eq:outer211}
  \end{flalign}
  \end{small}
where $P_2' \leq (1-\rho_{21}^2-\rho_{2s}^2)P_2$ and $\rho_{21}^2+\rho_{2s}^2 \leqslant 1$.
\end{proposition}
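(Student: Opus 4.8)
The plan is to specialize the single-letter outer bound of Theorem~\ref{thr:OrigAchConv} to the Gaussian channel \eqref{eq:gaussmodel}, retaining only the bound on $R_2$ and the bound on $R_1+R_2$ and simply discarding the bound on $R_1$ (the one involving $Y$). Dropping a constraint only enlarges the region, so this preserves validity as an outer bound; it is also natural for $|a|>1$, where the channel is degraded in the sense of \eqref{eq:cond2} and the binding constraints come from receiver~2. Working with the single-letter joint distribution supplied by Theorem~\ref{thr:OrigAchConv}, together with the induced second-moment constraints $\mE[X_1^2]\le P_1$, $\mE[X_2^2]\le P_2$ and the independence $X_1\perp S$, I would introduce the normalized correlations $\rho_{21}=\mE[X_1X_2]/\sqrt{P_1P_2}$ and $\rho_{2s}=\mE[X_2S]/\sqrt{P_2Q}$, so that $\rho_{21}^2+\rho_{2s}^2\le 1$ and the linear-MMSE estimate of $X_2$ from $(X_1,S)$ gives $\Var(X_2\,|\,X_1,S)\le(1-\rho_{21}^2-\rho_{2s}^2)P_2=:P_2'$.

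The bound on $R_2$ is the clean part. Writing $A=(T,V)$, the Markov condition $T\leftrightarrow UV\leftrightarrow X_1X_2S\leftrightarrow YZ$ implies $A\leftrightarrow(X_1,X_2,S)\leftrightarrow Z$, and I would bound the Gel'fand--Pinsker expression by the ``state-at-the-receiver'' rate,
\[
I(A;Z|X_1)-I(A;S|X_1)\;\le\; I(A;Z|S,X_1)\;\le\; I(X_2;Z|S,X_1),
\]
where the first inequality uses $I(A;Z|X_1)\le I(A;Z,S|X_1)$ followed by the chain rule, and the second adds $X_2$ and invokes $I(A;Z|X_1,X_2,S)=0$. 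Since given $(X_1,S)$ the output reduces to $Z=(\text{const})+X_2+N_2$, a maximum-entropy step gives $I(X_2;Z|S,X_1)\le\frac12\log\!\big(1+\Var(X_2|X_1,S)\big)\le\frac12\log(1+P_2')$, which is exactly the claimed first inequality.

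For the sum rate I would \emph{not} split off $I(X_1;Z)$, since that term is not individually controllable for non-Gaussian inputs; instead I would keep the difference-of-entropies structure intact. Using $X_1\perp S$ and the chain rule,
\[
I(X_1TV;Z)-I(TV;S|X_1)=h(Z)-h(S)-\big[\,h(Z|X_1TV)-h(S|X_1TV)\,\big].
\]
Here $h(Z)\le\frac12\log\!\big(2\pi e\,\Var(Z)\big)$ by maximum entropy, and a direct computation shows that $\Var(Z)$ equals the sum of the numerator and denominator of the first logarithm in \eqref{eq:outer211}. The crux is to \emph{lower} bound the bracketed conditional-entropy difference by the value it attains for the jointly Gaussian input realizing the correlations $(\rho_{21},\rho_{2s})$; combined with the bound on $h(Z)$ and the value $h(S)=\frac12\log(2\pi e Q)$, this collapses to the stated sum of two logarithms.

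The main obstacle is precisely this last step: showing that the jointly Gaussian distribution minimizes $h(Z|X_1TV)-h(S|X_1TV)$ subject to the covariance constraints and the Markov condition. This is a Gaussian extremal (entropy-power-type) inequality rather than a routine maximum-entropy argument, because one of the two entropies must be bounded from below in the presence of auxiliaries correlated with the state. The structural facts I would exploit to carry it out are the additive independent Gaussian $N_2$ in $Z$ and the independence $X_1\perp S$, via a conditional entropy-power inequality. Once this is in place, collecting the two bounds and taking $P_2'=(1-\rho_{21}^2-\rho_{2s}^2)P_2$ places the rate pair in the region \eqref{eq:outer211}, with the stated ranges $\rho_{21}^2+\rho_{2s}^2\le 1$ and $P_2'\le(1-\rho_{21}^2-\rho_{2s}^2)P_2$ absorbing any slack in the power and covariance.
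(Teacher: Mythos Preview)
Your $R_2$ bound is fine, but the sum-rate argument has a genuine gap: the ``Gaussian extremal'' step you flag as the main obstacle is exactly the hard part, and you have not given a proof of it. Lower-bounding $h(Z\mid X_1TV)-h(S\mid X_1TV)$ over all admissible auxiliaries is essentially the converse of the Gaussian multiple-access channel with state noncausally known at one encoder, which is itself a non-trivial result established in~\cite{MACSomekhBaruch08}; it does not follow from a conditional EPI in any obvious way, since $S$ is correlated with $X_2$ (hence with $Z$) through $(T,V)$, and an EPI only controls a single conditional entropy, not a difference.

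The paper sidesteps all of this. It simply observes that if you delete receiver~1, the model becomes precisely the MAC with state noncausally known at one encoder studied in~\cite{MACSomekhBaruch08}, whose Gaussian capacity region is already known to be the region~\eqref{eq:macouter}. Since dropping a decoding requirement can only enlarge the capacity region, that MAC capacity is automatically an outer bound here; the form~\eqref{eq:outer211} is then just a cosmetic rewriting (allowing $P_2'\le(1-\rho_{21}^2-\rho_{2s}^2)P_2$ instead of equality), which the paper checks is equivalent. So the paper's proof is essentially a one-line reduction plus a citation, whereas your route would require reproducing the full Gaussian MAC-with-state converse from scratch.
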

\begin{proof}
It is clear that the outer bound in Proposition \ref{th:outer21} is equivalent to the region that consists of rate pairs $(R_1,R_2)$ satisfying:
\begin{flalign}
R_2 \leqslant & \frac{1}{2} \log(1+(1-\rho_{21}^2-\rho_{2s}^2)P_2) \nonumber\\
R_1 + R_2 \leqslant & \frac{1}{2} \log\left(1+\frac{b^2P_1+2b\rho_{21}\sqrt{P_1P_2}+\rho_{21}^2P_2}{(1-\rho_{21}^2)P_2+2c\rho_{2s}\sqrt{P_2Q}+c^2Q+1}\right) + \frac{1}{2} \log(1+(1-\rho_{21}^2-\rho_{2s}^2)P_2) \label{eq:macouter}
\end{flalign}
where $\rho_{21}^2+\rho_{2s}^2 \leqslant 1$. This region is the capacity region of the multiple access channel with state (with its receiver being receiver 2 in our model) given in \cite{MACSomekhBaruch08}, and hence serves as an outer bound for our model.
\end{proof}
We note that although the region \eqref{eq:outer211} is equivalent to the region \eqref{eq:macouter}, the form of \eqref{eq:outer211} is more convenient for characterizing the boundary points of the capacity region in the following subsection.

\subsubsection{Capacity Theorem}

Although the inner bound \eqref{eq:inner211} and the outer bound \eqref{eq:outer211} do not match in general, we show that these bounds characterize some boundary points of the capacity region. We also show that the outer bound characterize the full capacity region if the channel satisfies certain conditions.

In order to characterize the boundary points of the capacity region, we first change the inner bound in \eqref{eq:inner211} to a more convenient form, which consists of rate pairs $(R_1,R_2)$ satisfying:
\begin{small}
\begin{flalign}
R_2 \leqslant & \frac{1}{2} \log(1+P_2') \label{eq:inner211-1} \\
R_1+R_2 \leqslant & \frac{1}{2} \log\left(1+\frac{b^2P_1+2b\rho_{21}\sqrt{P_1P_2}+\rho_{21}^2P_2}{(1-\rho_{21}^2)P_2+2c\rho_{2s}\sqrt{P_2Q}+c^2Q+1}\right)+\frac{1}{2} \log(1+(1-\rho_{21}^2-\rho_{2s}^2)P_2) \label{eq:inner211-2} \\
R_1+R_2 \leqslant & \frac{1}{2} \log\left(1+\frac{P_1+2a\rho_{21}\sqrt{P_1P_2}+a^2\rho_{21}^2P_2}{a^2(1-\rho_{21}^2)P_2+2a\rho_{2s}\sqrt{P_2Q}+Q+1}\right) \nn \\
& \hspace{-1.5cm}+\frac{1}{2}\log \left(1+\frac{\left(a^2(1-\rho_{21}^2-\rho_{2s}^2)P_2+2a\rho_{2s1}\rho_{2s2}-a^2\rho_{2s1}^2\right)(1-\rho_{21}^2-\rho_{2s}^2)P_2-\rho_{2s1}^2}
{\left(a^2\rho_{2s1}^2+\rho_{2s2}^2+1-2a\rho_{2s1}\rho_{2s2}\right)(1-\rho_{21}^2-\rho_{2s}^2)P_2+\rho_{2s1}^2} \right)\label{eq:inner211-3}
\end{flalign}
\end{small}
where $P_2' \leq (1-\rho_{21}^2-\rho_{2s}^2)P_2$ and $\rho_{21}^2+\rho_{2s}^2 \leqslant 1$, $\rho_{2s1} = \alpha(c\sqrt{Q}+\rho_{2s}\sqrt{P_2})$, $\rho_{2s2}=(\sqrt{Q}+a\rho_{2s}\sqrt{P_2})$, $\alpha=\frac{(1-\rho_{21}^2-\rho_{2s}^2)P_2}{(1-\rho_{21}^2-\rho_{2s}^2)P_2+1}$. The above region is equivalent to \eqref{eq:inner211}, because it is obtained by substituting the equality constraint $P_2' = (1-\rho_{21}^2-\rho_{2s}^2)P_2$ into the two sum rate bounds in \eqref{eq:inner211} (which does not change the bounds), and relaxing the constraint on $P_2'$ to be $P_2' \leq (1-\rho_{21}^2-\rho_{2s}^2)P_2$, which affects only the first bound on $R_2$ but clearly does not enlarge the region. We now denote the bounds in \eqref{eq:inner211-1}-\eqref{eq:inner211-3} by $r_2(P_2')$, $r_{12}(\rho_{21},\rho_{2s})$, and $\tilde{r}_{12}(\rho_{21},\rho_{2s})$. For $0\leq P'_2 \leq P_2$, let $\left(\rho_{21}^*(P_2'),\rho_{2s}^*(P_2')\right)= \underset{(\rho_{21},\rho_{2s}):P_2'\leq (1-\rho_{21}^2-\rho_{2s}^2)P_2}{\text{argmax}} r_{12}(\rho_{21},\rho_{2s})$. Based on these notations, we characterize partial boundary of the capacity region for the Gaussian channel as follows.

\begin{figure}[thb]
\centering
\includegraphics[width=5in]{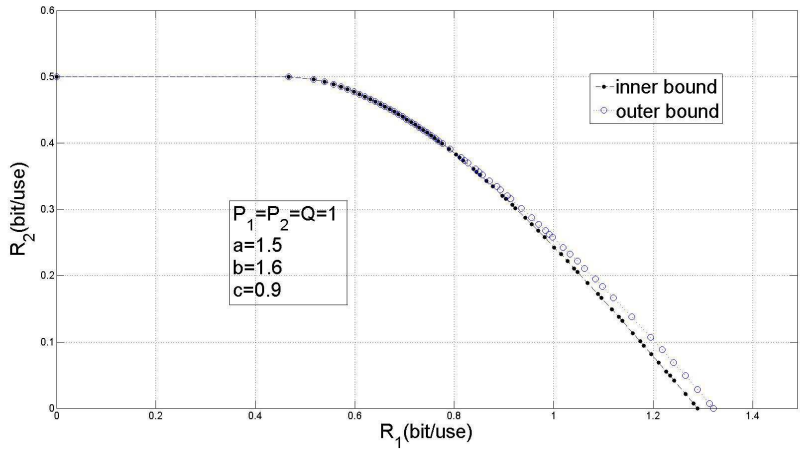}
\caption{An illustration of the partial boundary of the capacity region for a Gaussian channel with $|a| > 1$.}
\label{fig:GaussianCase2}
\end{figure}

\begin{theorem}(Partial Boundary of Capacity Region)\label{th:capa22}
Consider the Gaussian cognitive interference channel with state noncausally known at transmitter 2 and with $|a|> 1$. For $0\leq P'_2 \leq P_2$, the rate pairs $\Big(r_{12}\big(\rho^*_{21}(P_2'),\rho^*_{2s}(P_2')\big)-r_2(P_2'), \: r_2(P_2')\Big)$ is on the
boundary of the capacity region if $r_{12}(\rho^*_{21}(P_2'),\rho^*_{2s}(P_2'))\leq \tilde{r}_{12}(\rho^*_{21}(P_2'),\rho^*_{2s}(P_2'))$. The rate pairs $(R_1,r_2(P_2))$ are also on the boundary of the capacity region if $R_1 \leq \min\{r_{12},\tilde{r}_{12}\}|_{\rho_{21}=0,\rho_{2s}=0}-r_2(P_2) $.
\end{theorem}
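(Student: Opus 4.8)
The plan is to establish each asserted rate pair by the sandwich principle: exhibit it as an achievable point of the inner bound in Proposition \ref{th:inner21} (in the rewritten form \eqref{eq:inner211-1}--\eqref{eq:inner211-3}) and simultaneously as the extreme point of the outer bound in Proposition \ref{th:outer21} that maximizes $R_1$ over all pairs sharing the same value of $R_2$. Since the capacity region $\mathcal{C}$ obeys $\mathcal{I}\subseteq\mathcal{C}\subseteq\mathcal{O}$, a point lying in $\mathcal{I}$ and on $\partial\mathcal{O}$ must lie on $\partial\mathcal{C}$. The structural observation that drives the whole argument is that the inner bound \eqref{eq:inner211-1}--\eqref{eq:inner211-3} and the outer bound \eqref{eq:outer211} are built from the \emph{same} functions $r_2(P_2')=\frac12\log(1+P_2')$ and $r_{12}(\rho_{21},\rho_{2s})$; the two bounds differ only in that the inner bound carries the additional sum-rate constraint $\tilde r_{12}(\rho_{21},\rho_{2s})$ coming from receiver~$1$ (the $Y$-channel). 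The hypothesis $r_{12}(\rho^*_{21},\rho^*_{2s})\le\tilde r_{12}(\rho^*_{21},\rho^*_{2s})$ is precisely the statement that this extra constraint is slack at the relevant operating point, so that the inner and outer bounds pinch together there.

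For the achievability half of the first claim, I would fix $P_2'$ and evaluate the inner bound \eqref{eq:inner211-1}--\eqref{eq:inner211-3} at the parameters $\big(\rho^*_{21}(P_2'),\rho^*_{2s}(P_2')\big)$, which are admissible because they satisfy $P_2'\le(1-(\rho^*_{21})^2-(\rho^*_{2s})^2)P_2$ by the definition of the argmax. Under the hypothesis $r_{12}(\rho^*)\le\tilde r_{12}(\rho^*)$, the constraint \eqref{eq:inner211-3} is inactive, so the achievable region for these parameters reduces to $R_2\le r_2(P_2')$ together with $R_1+R_2\le r_{12}(\rho^*)$. Its corner $\big(r_{12}(\rho^*)-r_2(P_2'),\,r_2(P_2')\big)$ is therefore achievable with a single parameter choice, requiring no time sharing.

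For the converse half, I would take any achievable pair with $R_2=r_2(P_2')$, which by Proposition \ref{th:outer21} lies in $\mathcal{O}$ and hence satisfies, for some $(\rho_{21},\rho_{2s})$, both $R_2\le\frac12\log(1+(1-\rho_{21}^2-\rho_{2s}^2)P_2)$ and $R_1+R_2\le r_{12}(\rho_{21},\rho_{2s})$. Because $r_2$ is strictly increasing, the first inequality with $R_2=r_2(P_2')$ forces $P_2'\le(1-\rho_{21}^2-\rho_{2s}^2)P_2$, i.e.\ $(\rho_{21},\rho_{2s})$ lies in the feasible set of the maximization defining $\rho^*$; consequently $r_{12}(\rho_{21},\rho_{2s})\le r_{12}(\rho^*_{21},\rho^*_{2s})$ and therefore $R_1\le r_{12}(\rho^*)-r_2(P_2')$. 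This matches the achievable value, so no achievable pair with this $R_2$ has a larger $R_1$, and the point is on the boundary. The second claim is then immediate: $R_2=r_2(P_2)$ is the largest achievable value of $R_2$, since in \eqref{eq:outer211} one has $R_2\le\frac12\log(1+(1-\rho_{21}^2-\rho_{2s}^2)P_2)\le r_2(P_2)$ with equality only at $\rho_{21}=\rho_{2s}=0$; evaluating the inner bound at $\rho_{21}=\rho_{2s}=0$, $P_2'=P_2$ makes every pair $(R_1,r_2(P_2))$ with $R_1\le\min\{r_{12},\tilde r_{12}\}|_{\rho_{21}=0,\rho_{2s}=0}-r_2(P_2)$ achievable, and because $R_2$ is maximal each such pair lies on $\partial\mathcal{C}$.

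The step I expect to be the main obstacle is the converse: it must be argued cleanly that an achievable pair with a prescribed $R_2$ satisfies the outer-bound inequalities for a \emph{single} admissible parameter pair, so that the definition of $\rho^*$ can legitimately be invoked. This is sound here because the outer bound of Proposition \ref{th:outer21} is the capacity region of a Gaussian multiple-access channel with state, which is convex and already presented as a union of parameterized half-plane intersections, so no convexification gap arises between $\mathcal{O}$ and its convex hull. A secondary point to verify is that the same function $r_{12}$ indeed appears in both bounds, so that the optimizer $\rho^*$ defined through that sum-rate expression is the correct object to use in both halves, and that the monotonicity of $r_2$ is exactly what converts the fixed-$R_2$ slice into the feasible set over which $\rho^*$ is defined.
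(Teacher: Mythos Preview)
Your proposal is correct and follows essentially the same approach as the paper's own proof: show achievability via the inner bound with the hypothesis rendering the $\tilde r_{12}$ constraint inactive, and show the point lies on the boundary of the outer bound because $r_2$ and $r_{12}$ coincide with the outer-bound constraints and the optimizer $(\rho^*_{21},\rho^*_{2s})$ was chosen to maximize $r_{12}$ over the feasible set determined by the fixed $R_2$. Your treatment is considerably more explicit than the paper's very terse argument---in particular your converse step, tracing how the fixed value $R_2=r_2(P_2')$ forces the outer-bound parameters into the feasible set of the argmax, spells out exactly what the paper compresses into the phrase ``the chosen parameters \ldots\ guarantees that the rate pairs are on the boundary''; and your remark about the outer bound already being convex (so that no convexification gap intervenes) addresses a point the paper leaves implicit.
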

\begin{proof}
The rate pairs given in the theorem are achievable due to the condition given in the theorem. They are also on the boundary of the outer bound given in Proposition \ref{th:outer21}, because $r_2$ and $r_{12}$ are the same as the bounds on $R_1$ and on $R_1+R_2$, respectively, and the chosen parameters $(\rho^*_{21}(P_2'),\rho^*_{2s}(P_2'))$ for each value of $P_2'$ guarantees that the rate pairs are on the boundary. The second statement is clear because when $P_2'=P_2$, $R_2$ achieves the maximum value, and hence any such rate pairs are on the boundary if they are achievable.
\end{proof}

In Fig.~\ref{fig:GaussianCase2}, we demonstrate the partial boundary of the capacity region characterized in Theorem \ref{th:capa22}. We consider the channel defined by the parameters $P_1=P_2=Q=1$,
$a=1.5$, $b= 1.6$ and $c=0.9$. We plot the boundaries of the inner bound given in Proposition \ref{th:inner21} and the outer bound given in Proposition \ref{th:outer21}, respectively. It is clear that the two boundaries match when $R_2$ is above a certain threshold, and this part is thus the boundary points of the capacity region characterized by Theorem \ref{th:capa22}.

We next show that under certain channel conditions, the outer bound given in Proposition \ref{th:outer21} fully characterizes the capacity region.
\begin{theorem}(Capacity)\label{th:capafull2}
For the Gaussian cognitive interference channel with state noncausally known at transmitter 2, if $|a|>1$ and the channel satisfies the condition \eqref{eq:cond4}, the capacity region consists of rate
pairs $(R_1,R_2)$ satisfying:
\begin{small}
\begin{flalign}
R_2 \leqslant & \frac{1}{2} \log(1+P_2')\nn\\
R_1+R_2 \leqslant & \frac{1}{2} \log\left(1+\frac{b^2P_1+2b\rho_{21}\sqrt{P_1P_2}+\rho_{21}^2P_2}{(1-\rho_{21}^2)P_2+2c\rho_{2s}\sqrt{P_2Q}+c^2Q+1}\right)+\frac{1}{2} \log(1+P_2')\;. \label{eq:capa21}
\end{flalign}
\end{small}
where $P_2'=(1-\rho_{21}^2-\rho_{2s}^2)P_2$ and $\rho_{21}^2+\rho_{2s}^2 \leqslant 1$.
\end{theorem}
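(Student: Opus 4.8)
The plan is to establish the capacity region in Theorem~\ref{th:capafull2} by showing that the outer bound of Proposition~\ref{th:outer21} is achievable under condition \eqref{eq:cond4}. The achievability and the converse rest on two separate observations, which I would carry out in that order. For the converse, the region \eqref{eq:capa21} is precisely the outer bound \eqref{eq:outer211} of Proposition~\ref{th:outer21} specialized to the equality $P_2' = (1-\rho_{21}^2-\rho_{2s}^2)P_2$; since Proposition~\ref{th:outer21} already holds for all channels with $|a|>1$, the converse is immediate and requires no further work beyond quoting that proposition. The entire content of the theorem therefore lies in the achievability: I must show that under \eqref{eq:cond4} the inner bound of Proposition~\ref{th:inner21} matches this outer bound, i.e.\ that the third sum-rate constraint \eqref{eq:inner211-3} (the $\tilde r_{12}$ bound coming from receiver~1's decoding of the full message) becomes inactive and can be dropped.

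The key step is thus to prove that condition \eqref{eq:cond4} forces $\tilde r_{12}(\rho_{21},\rho_{2s}) \ge r_{12}(\rho_{21},\rho_{2s})$ for the Gaussian input distribution chosen in \eqref{eq:innerbound2}, so that the $Y$-based sum-rate bound never binds and only the $Z$-based bounds \eqref{eq:inner211-1}--\eqref{eq:inner211-2} survive. First I would recall that condition \eqref{eq:cond4} states $I(X_1U;Y) \ge I(X_1U;Z)$ for all admissible joint distributions of the prescribed factorization form. Since the Gaussian choice in \eqref{eq:innerbound2} (with $T=\phi$, $V=U$) is one such admissible distribution, \eqref{eq:cond4} applied to it gives exactly $I(X_1U;Y) \ge I(X_1U;Z)$. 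Looking back at the inner bound derivation in Proposition~\ref{th:inner21}, the surviving sum-rate bound \eqref{eq:inner211-2} is the evaluation of $I(X_1U;Z) - I(U;S|X_1)$, while the dropped bound \eqref{eq:inner211-3} is the evaluation of $I(X_1U;Y) - I(U;S|X_1)$; the two differ only in the $Y$ versus $Z$ mutual information since they share the same Gel'fand--Pinsker penalty $I(U;S|X_1)$. Hence \eqref{eq:cond4} gives directly that the $Y$-based bound dominates the $Z$-based one, and the inner region reduces to \eqref{eq:capa21}.

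The main obstacle I anticipate is bookkeeping rather than conceptual: I must verify that the two sum-rate expressions in the inner bound really correspond to $I(X_1U;Y)-I(U;S|X_1)$ and $I(X_1U;Z)-I(U;S|X_1)$ under the substitution \eqref{eq:innerbound2}, so that the abstract inequality \eqref{eq:cond4} translates cleanly into the claim $r_{12} \le \tilde r_{12}$. This amounts to checking that the closed forms in \eqref{eq:inner211-2} and \eqref{eq:inner211-3} are the Gaussian evaluations of these two differences, which follows from the standard computation of the dirty-paper rate $I(U;Z|X_1)-I(U;S|X_1)$ together with the chain-rule decomposition $I(X_1U;\cdot) = I(X_1;\cdot) + I(U;\cdot|X_1)$. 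Once this identification is in place, the argument closes: the inner and outer bounds coincide, so the region \eqref{eq:capa21} is exactly the capacity region. I would relegate the detailed verification of these Gaussian evaluations to an appendix and keep the main-text proof at the level of: converse by Proposition~\ref{th:outer21}, achievability by Proposition~\ref{th:inner21} with the third bound eliminated via condition \eqref{eq:cond4}.
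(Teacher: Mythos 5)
Your proposal is correct and follows essentially the same route as the paper: achievability comes from the no-rate-splitting inner bound (Theorem~\ref{tr:OrigAch} with $T=\phi$, $U=V$, evaluated with the Gaussian choice \eqref{eq:innerbound2}) after using condition \eqref{eq:cond4} to drop the $Y$-based sum-rate constraint, and the converse by quoting Proposition~\ref{th:outer21}. The only cosmetic difference is that the paper removes the redundant bound at the mutual-information level of \eqref{eq:DMCinner2} before the Gaussian evaluation, whereas you remove it after, by applying the same inequality $I(X_1U;Y)\geqslant I(X_1U;Z)$ to the same admissible input distribution.
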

\begin{proof}
Following from the region in \eqref{eq:DMCinner2}, and applying the condition \eqref{eq:cond4}, we obtain an inner
bound that includes the following bounds:
\begin{flalign}
R_2 \leqslant & I(U;Z|X_1) - I(U;S|X_1)\nn\\
R_1 + R_2 \leqslant & I(X_1U;Z) - I(U;S|X_1)
\end{flalign}
Based on the above bounds, we set the random variables as in \eqref{eq:innerbound2} and obtain an achievable region as given in \eqref{eq:capa21}. Such an achievable region is equivalent to the outer bound given in Proposition \ref{th:outer21} as we comment in the proof of Proposition \ref{th:outer21}.
\end{proof}

\subsection{Gaussian Channel: $|a|\leqslant 1$}\label{sec:gauss1}

\subsubsection{Inner and Outer Bounds}\label{sec:InOtC1}

%

We first note that the inner bound given in Proposition \ref{th:inner21} for the case when $|a|> 1$ also serves as an inner bound for the case when $|a|\leqslant 1$. However, the choice of auxiliary random variables ($T=\phi$ and $U=V$) for obtaining this inner bound requires receiver 1 to decode all information for receiver 2. As such, this bound works well only when receiver 1 is stronger than receiver 2, and does not serve as a good inner bound for the case when $|a|\leqslant 1$. Thus, in this subsection, we develop two new inner bounds and one new outer bound on the capacity region for the case when $|a|\leqslant 1$. We also note that the outer bound in Proposition \ref{th:outer21} is also applicable and useful here as demonstrated in the sequel.

The two inner bounds are derived based on the same achievable region for the discrete memoryless channel with different choices of the distributions for the auxiliary random variables. For the first inner bound, we design the dirty paper coding to deal with the state for receiver 1, and for the second inner bound, we design the dirty paper coding to deal with the state for receiver 2.
\begin{proposition}(Inner Bound 1)\label{th:inner1}
For the Gaussian cognitive interference channel with state noncausally known at transmitter 2, if $|a| \leqslant 1$, then an inner bound on the capacity region consists of rate pairs $(R_1, R_2)$
satisfying
\begin{flalign}
R_1 \leqslant & \frac{1}{2} \log\left(1+\frac{P_1+2a\rho_{21}\sqrt{P_1P_2}+a^2\rho_{21}^2P_2}{a^2(1-\rho_{21}^2)P_2+2a\rho_{2s}\sqrt{P_2Q}+Q+1}\right)+\frac{1}{2}\log\left(1+\frac{a^2 P_2'}{a^2 P_2''+1}\right) \nn\\
R_2 \leqslant &\frac{1}{2} \log(1+P_2'') \nn\\
R_2 \leqslant & \frac{1}{2}\log \left( 1+ \frac{a^2 {P_2'}^2 +2a\rho_{2s1} \rho_{2S2}P_2' -\rho_{2S1}^2(P_2'+P_2''+1) }{a^2P_2'P_2''+
 \rho_{2s1}^2(P_2'+P_2''+1)+a^2\rho_{2s2}P_2'+a^2P_2'-2a\rho_{2s1}\rho_{2s2}P_2'}\right) \nn\\
& + \frac{1}{2} \log(1+P_2'')\nn\\
R_1+R_2 \leqslant & \frac{1}{2} \log\left(1+\frac{b^2P_1+2b\rho_{21}\sqrt{P_1P_2}+\rho_{21}^2P_2}{(1-\rho_{21}^2)P_2+2c\rho_{2s}\sqrt{P_2Q}+c^2Q+1}\right) \nn\\
 &  + \frac{1}{2}\log \left( 1+ \frac{a^2 {P_2'}^2 +2a\rho_{2s1} \rho_{2s2}P_2' -\rho_{2s1}^2(P_2'+P_2''+1) }{a^2P_2'P_2''+
 \rho_{2s1}^2(P_2'+P_2''+1)+a^2\rho_{2s2}P_2'+a^2P_2'-2a\rho_{2s1}\rho_{2s2}P_2'}\right) \nn \\
 & + \frac{1}{2} \log(1+P_2'')
\end{flalign}
where $\rho_{2s1}= \alpha \left( 1+a \rho_{2s}\sqrt{\frac{P_2}{Q}}\right) \sqrt{Q}$, $\rho_{2s2}=\left(c+\rho_{2s}\sqrt{\frac{P_2}{Q}}\right)\sqrt{Q}$, $\alpha=\frac{a^2P_2'}{a^2P_2'+a^2P_2''+1}$, $|\rho_{21}| \leqslant 1$, $|\rho_{2s}| \leqslant 1$, $P_2'\geqslant 0$, $P_2''\geqslant 0$, and $P_2'+P_2''=(1-\rho_{21}^2-\rho_{2s}^2)P_2$.
\end{proposition}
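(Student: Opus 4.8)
The plan is to specialize the no-rate-splitting inner bound of Corollary~\ref{cor:innerdm2} (with auxiliaries $T$ and $V$) to a jointly Gaussian input equipped with two dirty-paper offsets, and then to evaluate its four mutual-information bounds term by term. First I would fix $X_1\sim\mathcal{N}(0,P_1)$ and decompose transmitter~2's input as $X_2=\rho_{21}\sqrt{P_2/P_1}\,X_1+X_2'+X_2''+\rho_{2s}\sqrt{P_2/Q}\,S$, where $X_2'\sim\mathcal{N}(0,P_2')$ and $X_2''\sim\mathcal{N}(0,P_2'')$ are mutually independent and independent of $(X_1,S)$, and $P_2'+P_2''=(1-\rho_{21}^2-\rho_{2s}^2)P_2$ so that the power constraint on $X_2$ holds with equality. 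Here $X_2'$ carries the common message through $T$ and $X_2''$ carries $W_2$ through $V$. The defining feature of this first bound is that the offset of $T$ is matched to receiver~1: I set $T$ equal to $X_2'$ plus the Costa offset against the state as it appears in $Y$, with the scaling chosen so that $\alpha=a^2P_2'/(a^2P_2'+a^2P_2''+1)$ is the optimal Costa coefficient at receiver~1, while the private auxiliary is $V=X_2''+\delta S$ with $\delta$ the Costa coefficient against the state as it appears in $Z$.

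With these choices each bound of Corollary~\ref{cor:innerdm2} chain-rules into elementary pieces. The bound on $R_1$ becomes $I(X_1;Y)+[I(T;Y|X_1)-I(T;S|X_1)]$: treating $aX_2'$, $aX_2''$ and the state as noise at receiver~1 gives the first logarithm in the statement, while the offset being matched to receiver~1 makes the Gel'fand--Pinsker term collapse to the state-free Costa rate $\tfrac12\log(1+a^2P_2'/(a^2P_2''+1))$. The first bound on $R_2$ is $I(V;Z|X_1T)-I(V;S|X_1T)$; the key observation is that once $T$ is known at receiver~2 the interference $X_2'+(c+\rho_{2s}\sqrt{P_2/Q})S$ reduces to $T$ plus a pure multiple of $S$, so the offset on $V$ cancels the residual state completely and this term collapses to the clean $\tfrac12\log(1+P_2'')$. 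Chain-ruling the remaining two bounds as $I(TV;Z|X_1)-I(TV;S|X_1)=[I(T;Z|X_1)-I(T;S|X_1)]+[I(V;Z|X_1T)-I(V;S|X_1T)]$ and $I(X_1TV;Z)-I(TV;S|X_1)=I(X_1;Z)+[I(T;Z|X_1)-I(T;S|X_1)]+[I(V;Z|X_1T)-I(V;S|X_1T)]$ then reproduces the stated sum-rate expressions, with $I(X_1;Z)$ furnishing the first logarithm of the last bound, provided $I(T;Z|X_1)-I(T;S|X_1)$ equals the complicated logarithm built from $\rho_{2s1}$ and $\rho_{2s2}$.

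The main obstacle is exactly this last quantity $I(T;Z|X_1)-I(T;S|X_1)$: it is the rate of the common auxiliary $T$ decoded at the \emph{other} receiver, receiver~2, where the offset tuned for receiver~1 is mismatched and $X_2''$ must be treated as noise. I would evaluate it as a single Gaussian conditional-variance computation: writing $Z'$ for $Z$ with the known $X_1$-contribution removed, the term equals $\tfrac12\log\big(\Var(T\mid S)/\Var(T\mid Z')\big)$, and inserting $\mathrm{Cov}(T,S)$, $\Var(T)$, $\Var(Z')$ together with $\rho_{2s1}$ and $\rho_{2s2}=(c+\rho_{2s}\sqrt{P_2/Q})\sqrt{Q}$ produces exactly the quadratic forms in $\rho_{2s1}$ and $\rho_{2s2}$ in the numerator and denominator. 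The heaviest bookkeeping is the correlation that the receiver-1 offset induces among $T$, $X_2''$ and $S$, which I would organize through the covariance matrix of $(T,Z')$ rather than by blind expansion. Finally I would verify the admissibility condition of Corollary~\ref{cor:innerdm2}, namely $I(V;Z|TX_1)-I(V;S|TX_1)\ge 0$; this is immediate, since that expression is precisely the first bound on $R_2$, equal to $\tfrac12\log(1+P_2'')\ge 0$, so every rate pair produced is genuinely achievable.
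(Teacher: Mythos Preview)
Your proposal is correct and follows essentially the same approach as the paper: specialize Corollary~\ref{cor:innerdm2} with a jointly Gaussian choice of $(X_1,X_2,T,V)$, decomposing $X_2$ exactly as you do, picking $T=X_2'+\alpha(1+a\rho_{2s}\sqrt{P_2/Q})S$ with the dirty-paper coefficient tuned to receiver~1, and $V=X_2''+\beta\big(c-\alpha+(1-a\alpha)\rho_{2s}\sqrt{P_2/Q}\big)S$ tuned to the residual state at receiver~2 after $T$ is stripped off. The paper's proof merely records these auxiliary choices and leaves the mutual-information evaluations implicit; your chain-rule breakdown and the identification of $I(T;Z|X_1)-I(T;S|X_1)$ as the single mismatched-Costa term requiring explicit covariance bookkeeping are exactly what is needed to fill in those evaluations.
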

\begin{proof}
The above theorem is based on Corollary \ref{cor:innerdm2} by choosing $(T,V,X_1,X_2)$ to be jointly Gaussian and employing dirty paper coding with $T$ chosen for dealing with the state for $Y$ and $V$ chosen for dealing with the state for $Z$. More specifically, We set the random variables as follows:
\begin{flalign}
      &X_1 \sim \mathcal{N}(0,P_1),\quad X_2' \sim \mathcal{N}(0,P_2'), \quad X_2'' \sim \mathcal{N}(0,P_2''), \quad P_2' + P_2'' = (1-\rho_{21}^2-\rho_{2s}^2)P_2\nn\\
      &X_2 = \rho_{21} \sqrt{\frac{P_2}{P_1}}X_1+ X_2'+X_2''+ \rho_{2s} \sqrt{\frac{P_2}{Q}}S\nn\\
      &T = X_2' + \alpha \left(1 + a\rho_{2s}\sqrt{\frac{P_2}{Q}}\right)S\nn\\
      &V = X_2'' + \beta \left(c - \alpha + (1-a\alpha)\rho_{2s}\sqrt{\frac{P_2}{Q}}\right)S \label{eq:dist11}
\end{flalign}
where $X_1$, $X_2'$, $X_2''$ and $S$ are independent random variables, $\alpha = \frac{a^2P_2'}{a^2P_2' + a^2P_2'' + 1}$, and $\beta = \frac{P_2''}{P_2'' + 1}$.
\end{proof}
\begin{proposition}(Inner Bound 2)\label{th:inner12}
For the Gaussian cognitive interference channel with state noncausally known at transmitter 2, if $|a| \leqslant 1$, then an inner bound on the capacity region consists of rate pairs $(R_1, R_2)$
satisfying
\begin{small}
\begin{flalign}
R_1 \leqslant & \frac{1}{2} \log\left(1+\frac{P_1+2a\rho_{21}\sqrt{P_1P_2}+a^2\rho_{21}^2P_2}{a^2(1-\rho_{21}^2)P_2+2a\rho_{2s}\sqrt{P_2Q}+Q+1}\right) \label{eq:inner121}\nn\\
& +\frac{1}{2}\log \left(1+\frac{a^2P_2'^2+2a\rho_{2s1}\rho_{2s2}P_2'-a^2\rho_{2s1}^2(P_2'+P_2'')-\rho_{2s1}^2}{a^2\rho_{2s1}^2P_2'+\rho_{2s2}^2P_2'+a^2\rho_{2s1}^2P_2''+a^2P_2'P_2''+P_2'+\rho_{2s1}^2-2a\rho_{2s1}\rho_{2s2}P_2'}\right)\\
R_2 \leqslant & \frac{1}{2} \log(1+P_2'') \label{eq:inner122} \\
R_1+R_2 \leqslant & \frac{1}{2} \log\left(1+\frac{b^2P_1+2b\rho_{21}\sqrt{P_1P_2}+\rho_{21}^2P_2}{(1-\rho_{21}^2)P_2+2c\rho_{2s}\sqrt{P_2Q}+c^2Q+1}\right) \nn \\
&+ \frac{1}{2} \log(1+(1-\rho_{21}^2-\rho_{2s}^2)P_2) \label{eq:inner123}
\end{flalign}
\end{small}
where $\rho_{2s1} = \alpha(c\sqrt{Q}+\rho_{2s}\sqrt{P_2})$, $\rho_{2s2} = (\sqrt{Q}+a\rho_{2s}\sqrt{P_2})$, $\alpha=\frac{P_2'}{P_2'+P_2''+1}$, $|\rho_{21}| \leqslant 1$, $|\rho_{2s}| \leqslant 1$, $P_2'\geqslant 0$, $P_2''\geqslant 0$, and $P_2'+P_2''=(1-\rho_{21}^2-\rho_{2s}^2)P_2$.
\end{proposition}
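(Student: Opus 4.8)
The plan is to specialize the discrete-memoryless inner bound of Corollary~\ref{cor:innerdm2} (the no-rate-splitting region obtained by setting $U=\phi$ in Theorem~\ref{tr:OrigAch}) to the Gaussian channel~\eqref{eq:gaussmodel}, using a jointly Gaussian input together with dirty-paper auxiliary variables whose binning coefficients are matched to receiver~2. This is the exact counterpart of the proof of Proposition~\ref{th:inner1}; the only structural difference is that there $T$ was tuned to precancel the state seen at receiver~1 (the output $Y$), whereas here both $T$ and $V$ are tuned to precancel the state seen at receiver~2 (the output $Z$).

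First I would fix the input structure. Let $X_1\sim\mathcal{N}(0,P_1)$ and write
\[
X_2 = \rho_{21}\sqrt{P_2/P_1}\,X_1 + X_2' + X_2'' + \rho_{2s}\sqrt{P_2/Q}\,S,
\]
where $X_2'\sim\mathcal{N}(0,P_2')$ and $X_2''\sim\mathcal{N}(0,P_2'')$ are independent of each other and of $X_1,S$, with $P_2'+P_2''=(1-\rho_{21}^2-\rho_{2s}^2)P_2$. I would then take the auxiliaries to be the dirty-paper variables
\[
T = X_2' + \alpha\left(c+\rho_{2s}\sqrt{P_2/Q}\right)S, \qquad V = X_2'' + \gamma_V\, S,
\]
with $\alpha=P_2'/(P_2'+P_2''+1)$ the MMSE coefficient for estimating $X_2'$ from $Z$ when $X_2''$ and $N_2$ are treated as noise, and $\gamma_V$ the residual MMSE coefficient that makes $V$ the optimal dirty-paper variable for $Z$ given $T$. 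The aggregate state gain $c+\rho_{2s}\sqrt{P_2/Q}$ in $Z=bX_1+X_2+cS+N_2$ yields the stated $\rho_{2s1}=\alpha(c\sqrt{Q}+\rho_{2s}\sqrt{P_2})$, while $\rho_{2s2}=\sqrt{Q}+a\rho_{2s}\sqrt{P_2}$ is the aggregate state gain in $Y=X_1+aX_2+S+N_1$; both will surface in the final mutual-information formulas.

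Next I would evaluate the constraints of Corollary~\ref{cor:innerdm2} under this law. Writing $I(X_1T;Y)-I(T;S|X_1)=I(X_1;Y)+\big[I(T;Y|X_1)-I(T;S|X_1)\big]$, the term $I(X_1;Y)$ (treating all of $aX_2'+aX_2''+(1+a\rho_{2s}\sqrt{P_2/Q})S+N_1$ as noise) gives the first logarithm of~\eqref{eq:inner121}. Because the binning coefficient of $V$ annihilates the residual state at $Z$ given $T$, the term $I(V;Z|X_1T)-I(V;S|X_1T)$ collapses to the clean dirty-paper rate $\tfrac12\log(1+P_2'')$, which is~\eqref{eq:inner122}; the same cancellation makes $I(X_1TV;Z)-I(TV;S|X_1)$ reduce to the sum-rate bound~\eqref{eq:inner123}, whose trailing $\tfrac12\log(1+(1-\rho_{21}^2-\rho_{2s}^2)P_2)$ is the combined dirty-paper gain of $T$ and $V$ at receiver~2. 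The third constraint of Corollary~\ref{cor:innerdm2}, $R_2\le I(TV;Z|X_1)-I(TV;S|X_1)$, is redundant here: it exceeds the second constraint by $I(T;Z|X_1)-I(T;S|X_1)\ge0$, the nonnegative dirty-paper gain of $T$ at $Z$, and hence may be dropped. Finally, the validity condition $I(V;Z|TX_1)-I(V;S|TX_1)\ge0$ of the corollary holds trivially, since this difference equals $\tfrac12\log(1+P_2'')\ge0$.

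The hard part will be the second logarithm of the $R_1$ bound~\eqref{eq:inner121}, which is exactly $I(T;Y|X_1)-I(T;S|X_1)$. Since $T$'s binning coefficient is matched to $Z$ rather than to $Y$, this is \emph{not} a clean dirty-paper term and does not telescope away: one must form the $2\times2$ covariance of $(T,Y)$ conditioned on $X_1$, compute $I(T;Y|X_1)$ as half the log of the determinant ratio, subtract $I(T;S|X_1)=\tfrac12\log\!\big((P_2'+\rho_{2s1}^2)/P_2'\big)$, and then algebraically reduce the result to the displayed ratio in $\rho_{2s1},\rho_{2s2},P_2',P_2''$. This bookkeeping is the only substantive step; the remaining bounds follow the standard Gaussian dirty-paper computation already carried out for Proposition~\ref{th:inner1}.
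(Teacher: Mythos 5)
Your proposal is correct and follows essentially the same route as the paper: both specialize Corollary~\ref{cor:innerdm2} with the identical jointly Gaussian input, take $T=X_2'+\alpha(c+\rho_{2s}\sqrt{P_2/Q})S$ with $\alpha=P_2'/(P_2'+P_2''+1)$ tuned to $Z$ treating $X_2''$ as noise, take $V$ as the residual dirty-paper variable for $Z$ given $T$ (your $\gamma_V$ is the paper's $\beta(1-\alpha)(c+\rho_{2s}\sqrt{P_2/Q})$ with $\beta=P_2''/(P_2''+1)$), and discard the bound $R_2\le I(TV;Z|X_1)-I(TV;S|X_1)$ as redundant because $I(T;Z|X_1)-I(T;S|X_1)>0$. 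Your explicit identification of $I(T;Y|X_1)-I(T;S|X_1)$ as the one non-telescoping term requiring the covariance computation is a correct and useful elaboration of a step the paper leaves implicit.
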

\begin{proof}
The above theorem is based on Corollary \ref{cor:innerdm2} by choosing $(T,V,X_1,X_2)$ to be jointly Gaussian and employing dirty paper coding by choosing $T$ and $V$ as follows:
  \begin{flalign}
      &X_1 \sim \mathcal{N}(0,P_1), \quad X_2' \sim \mathcal{N}(0,P_2'), \quad X_2'' \sim \mathcal{N}(0,P_2''), \quad P_2' + P_2'' = (1-\rho_{21}^2-\rho_{2S}^2)P_2 \nn\\
      &X_2 = \rho_{21} \sqrt{\frac{P_2}{P_1}}X_1+ X_2'+X_2''+ \rho_{2s} \sqrt{\frac{P_2}{Q}}S\nn\\
      &T = X_2' + \alpha \left(c + \rho_{2s}\sqrt{\frac{P_2}{Q}}\right)S\nn\\
      &V = X_2'' + \beta (1 - \alpha) \left(c + \rho_{2s}\sqrt{\frac{P_2}{Q}}\right)S
  \end{flalign}
where $X_1$, $X_2'$, $X_2''$ and $S$ are independent random variables, $\alpha = \frac{P_2'}{P_2' + P_2'' + 1}$, and $\beta = \frac{P_2''}{P_2'' + 1}$.
Here, $T$ is chosen for dealing with the state for $Z$ (different from the proof for Proposition \ref{th:inner1}) based on dirty paper coding where $X_2''$ is taken as noise. We then subtract $T$ from $Z$ and design $V$ for dealing with the state for $Z'=Z-T$ based on dirty paper coding. For this choice of the auxiliary random variables, the second bound on $R_2$ in Corollary \ref{cor:innerdm2} is redundant because $I(T;Z|X_1)-I(T;S|X_1)>0$.
\end{proof}

We next provide two outer bounds, both of which are useful for characterizing the capacity results in the following subsection. The first outer bound is given by the capacity region of the Gaussian interference channel with state known at both transmitter 2 and receiver 2 that we present as Theorem \ref{thr:GauWithSn1Con} in Section \ref{sec:recstate}. For convenience, we rewrite this bound below.
\begin{corollary}(Outer Bound 1)\label{cor:outer11}
For the Gaussian cognitive interference channel with state noncausally known at transmitter 2, if $|a| \leqslant 1$, then the capacity region of the same channel but with state known at both transmitter 2 and receiver 2 serves as an outer bound on the capacity region, which consists of rate pairs $(R_1, R_2)$ satisfying
\begin{small}
\begin{flalign}
R_1 \leqslant & \frac{1}{2} \log\left(1+\frac{P_1+2a\rho_{21}\sqrt{P_1P_2}+a^2\rho_{21}^2P_2}{a^2(1-\rho_{21}^2)P_2+2a\rho_{2s}\sqrt{P_2Q}+Q+1}\right)+\frac{1}{2}\log\left(1+\frac{a^2 P_2'}{a^2 P_2''+1}\right)\nn\\
R_2 \leqslant &\frac{1}{2} \log(1+P_2'')\nn\\
R_1+R_2 \leqslant & \frac{1}{2} \log(1+b^2P_1+2b\rho_{21}\sqrt{P_1P_2}+(1-\rho_{2s}^2)P_2)\nn
\end{flalign}
\end{small}
where $P_2'+P_2''=(1-\rho_{21}^2-\rho_{2s}^2)P_2$, $P_2'\geqslant 0$, $P_2''\geqslant 0$, and $\rho_{21}^2+\rho_{2s}^2 \leqslant 1$.
\end{corollary}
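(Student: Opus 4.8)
The plan is to obtain this outer bound as an immediate consequence of a genie-aided (side-information) argument, combined with the capacity characterization of the channel in which the state is additionally revealed to receiver 2.

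First I would establish the containment of the two capacity regions. Fix any $(2^{nR_1},2^{nR_2},n)$ code for the channel with $S^n$ known only at transmitter 2, consisting of encoders $(f_1,f_2)$ and decoders $(g_1,g_2)$. This very same code is a valid code for the channel in which $S^n$ is also furnished to receiver 2: receiver 2 simply ignores the extra state sequence and applies its original decoder $g_2:\cZ^n\to\cW_1\times\cW_2$. Since the channel law $P_{YZ|X_1X_2S}$ and the error probability $P_e^{(n)}$ in \eqref{PE} are left unchanged, every rate pair achievable with state at transmitter 2 only is also achievable when the state is known at both transmitter 2 and receiver 2. Hence the capacity region of the former channel is contained in that of the latter, and the latter therefore serves as a valid outer bound. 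I note that this argument uses only the extra availability of side information and is valid for every value of $a$; the restriction $|a|\leq 1$ plays no role in the containment itself and enters only through the explicit form quoted below.

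Second I would supply the explicit expression by transcription. The capacity region of the Gaussian channel with $S^n$ known at both transmitter 2 and receiver 2 is characterized in Theorem \ref{thr:GauWithSn1Con} of Section \ref{sec:recstate}; specializing that characterization to the regime $|a|\leq 1$ and rewriting it in the parametrization $(\rho_{21},\rho_{2s},P_2',P_2'')$ used here yields precisely the three bounds on $R_1$, $R_2$, and $R_1+R_2$ stated in the corollary. Because of this deferral, the only genuine content in proving the present corollary is the one-line genie argument, so there is no real obstacle at this stage. The substantive difficulty lies entirely in Theorem \ref{thr:GauWithSn1Con}, whose converse must introduce carefully designed state knowledge at the receiver so that the resulting outer bound is tight; establishing that matching converse — rather than proving this corollary — is where the effort is concentrated.
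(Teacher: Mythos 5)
Your proposal is correct and matches the paper's (implicit) argument exactly: the paper justifies Corollary \ref{cor:outer11} by observing that the capacity region with state additionally revealed to receiver 2 can only be larger, and then simply transcribes the explicit region from Theorem \ref{thr:GauWithSn1Con}, whose converse carries all the substantive work. Your added remark that the containment holds for all $a$ while the restriction $|a|\leqslant 1$ enters only through the explicit characterization is also consistent with the paper.
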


As we comment at the beginning of this subsection, the outer bound in Proposition \ref{th:outer21} is also applicable and useful for the case with $|a| \leqslant 1$. For convenience, we rewrite it below as a corollary.
\begin{corollary}\label{th:outer12}(Outer Bound 2)
For the Gaussian cognitive interference channel with state noncausally known at transmitter 2, if $|a| \leqslant 1$, an outer bound on the capacity region consists of rate pairs $(R_1,R_2)$
satisfying:
\begin{small}
\begin{flalign}
R_2 \leqslant & \frac{1}{2} \log(1+P_2'') \label{eq:outer121} \\
R_1+R_2 \leqslant & \frac{1}{2} \log\left(1+\frac{b^2P_1+2b\rho_{21}\sqrt{P_1P_2}+\rho_{21}^2P_2}{(1-\rho_{21}^2)P_2+2c\rho_{2s}\sqrt{P_2Q}+c^2Q+1}\right)+ \frac{1}{2} \log(1+(1-\rho_{21}^2-\rho_{2s}^2)P_2) \label{eq:outer122}
\end{flalign}
\end{small}
where $P_2''\leqslant (1-\rho_{21}^2-\rho_{2s}^2)P_2$, $P_2''\ge 0$, and $\rho_{21}^2+\rho_{2s}^2 \leqslant 1$.
\end{corollary}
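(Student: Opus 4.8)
The plan is to observe that this outer bound requires no genuinely new argument: the region is literally that of Proposition~\ref{th:outer21} with the auxiliary power variable $P_2'$ renamed to $P_2''$, and the derivation of that proposition never invoked the hypothesis $|a|>1$. I would therefore reuse that derivation and simply point out that it remains valid for all values of $a$, in particular for $|a|\leqslant 1$.

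In more detail, first I would recall the structure of the proof of Proposition~\ref{th:outer21}. There the claimed region is identified with the capacity region of the Gaussian multiple-access channel with state noncausally known at one encoder from \cite{MACSomekhBaruch08}, taking the MAC receiver to be receiver~2 of our model. The relevant MAC is the channel from the input pair $(X_1,X_2)$ to the output $Z$, namely $Z=bX_1+X_2+cS+N_2$ with $S\sim\mathcal{N}(0,Q)$; its capacity region depends only on the parameters $b$, $c$, $Q$ and the power constraints $P_1,P_2$, and is entirely independent of the coefficient $a$, which enters only through the $Y$-channel at receiver~1. The correlation parameter $\rho_{21}$ accounts for transmitter~2's knowledge of $W_1$, and $\rho_{2s}$ for its knowledge of the state, exactly as in the $|a|>1$ case.

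Next I would make the relaxation step explicit. In our cognitive interference channel, receiver~2 is always required to decode both $W_1$ and $W_2$ from $Z^n$, regardless of the value of $a$. Discarding the decoding requirement at receiver~1 can only enlarge the set of achievable pairs $(R_1,R_2)$, and what remains is precisely a state-dependent MAC to receiver~2 in which transmitter~2 knows the state sequence noncausally. Hence the capacity region of that MAC is an outer bound on the capacity region of our channel. Since both this relaxation and the MAC capacity characterization are insensitive to $a$, they apply verbatim to the regime $|a|\leqslant 1$, yielding exactly the bounds \eqref{eq:outer121}--\eqref{eq:outer122} with the clean-power variable written as $P_2''$ and subject to $P_2''\leqslant(1-\rho_{21}^2-\rho_{2s}^2)P_2$ and $\rho_{21}^2+\rho_{2s}^2\leqslant 1$.

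I do not anticipate a real obstacle, since the statement is by design a convenient restatement of an already-established bound. The only point deserving care is to verify that the argument of Proposition~\ref{th:outer21} is truly agnostic to $a$: one should confirm that neither the identification with the MAC-with-state region of \cite{MACSomekhBaruch08} nor the underlying power-splitting parametrization secretly used $|a|>1$. Inspecting those steps confirms that they involve only the $Z$-channel, so the transfer to $|a|\leqslant 1$ is immediate.
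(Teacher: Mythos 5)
Your proposal is correct and matches the paper's own treatment: the paper explicitly introduces Corollary~\ref{th:outer12} as a restatement of Proposition~\ref{th:outer21} (with $P_2'$ renamed $P_2''$), whose proof identifies the region with the capacity region of the state-dependent MAC to receiver~2 from \cite{MACSomekhBaruch08}, an argument that involves only the $Z$-channel and hence is independent of $a$. Your extra care in checking that no step secretly uses $|a|>1$ is exactly the right (and only) point to verify.
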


\subsubsection{Capacity Theorems}

For Gaussian channels with $|a|\leqslant 1$, we characterize partial boundaries of the capacity region based on the inner and outer bounds in Section \ref{sec:InOtC1}. Although the forms of inner bounds are complicated, we show that some boundary points on the capacity region are determined only by a subset of there bounds, and can hence be characterized via the given outer bounds.

We let $\Delta=(\rho_{21}, \rho_{2s}, P_2')$ and use $r'_1(\Delta,P_2'')$, $r'_2(P_2'')$, $\tilde{r}'_2(\Delta,P_2'')$, $r'_{12}(\Delta,P_2'')$ to denote the four bounds on $R_1$, $R_2$, and $R_1+R_2$ given in Proposition \ref{th:inner1}. For $0 \leq P''_2 \leq P_2$, let $\Delta^*(P_2'')= \underset{\Delta:P_2'+P_2''= (1-\rho_{21}^2-\rho_{2s}^2)P_2}{\text{argmax}} r'_1(\Delta,P_2'')$. Based on these notations, we characterize partial boundary of the capacity region for the Gaussian channel as follows.
\begin{theorem}(Partial Boundary of Capacity Region)\label{th:capa11-1}
Consider the Gaussian cognitive interference channel with state noncausally known at transmitter 2 and with $|a|\leqslant 1$. For $0\leq P''_2 \leq P_2$, the rate pairs $(r'_1(\Delta^*(P_2''),P_2''),r'_2(P_2''))$ is on the boundary of the capacity region if $r'_2(P_2'')\leq \tilde{r}'_2(\Delta^*(P_2''),P_2'')$ and $r'_1(\Delta^*(P_2''),P_2'')+r'_2(P_2'') \leq r'_{12}(\Delta^*(P_2''),P_2'')$.
\end{theorem}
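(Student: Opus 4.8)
Theorem \ref{th:capa11-1} asserts that certain rate pairs, achievable via the inner bound of Proposition \ref{th:inner1}, lie on the boundary of the capacity region. The structure of the claim suggests a two-part argument: showing (i) that the rate pair is \emph{achievable}, and (ii) that it lies on the boundary of an \emph{outer bound}, so that it cannot be strictly improved. Since any achievable point is inside the capacity region and any capacity point is inside any valid outer bound, a point that is simultaneously achievable and on the boundary of an outer bound must be on the boundary of the capacity region itself.

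The plan is as follows. First I would verify achievability. The two conditions $r'_2(P_2'')\leq \tilde{r}'_2(\Delta^*(P_2''),P_2'')$ and $r'_1(\Delta^*(P_2''),P_2'')+r'_2(P_2'') \leq r'_{12}(\Delta^*(P_2''),P_2'')$ are precisely the statements that the third bound (the second bound on $R_2$) and the fourth bound (the sum-rate bound) in Proposition \ref{th:inner1} are \emph{inactive} at the candidate point. Hence, under these hypotheses, the candidate point $(r'_1(\Delta^*(P_2''),P_2''),r'_2(P_2''))$ satisfies all four inequalities of Proposition \ref{th:inner1} with the specific parameter choice $\Delta=\Delta^*(P_2'')$, and is therefore achievable.

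Second, I would show the point is on the boundary of an outer bound, for which Corollary \ref{cor:outer11} (Outer Bound 1) is the natural candidate, since its first two bounds, $R_1$ and $R_2$, have \emph{exactly the same functional form} as $r'_1$ and $r'_2$ in Proposition \ref{th:inner1}. The coordinate $r'_2(P_2'')=\frac{1}{2}\log(1+P_2'')$ matches the $R_2$-bound of the outer region, and by the definition of $\Delta^*(P_2'')$ as the maximizer of $r'_1(\Delta,P_2'')$ over all feasible $\Delta$ for the fixed $P_2''$, the value $r'_1(\Delta^*(P_2''),P_2'')$ attains the largest $R_1$ permitted by the outer bound's $R_1$-constraint for that value of $P_2''$. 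One must check that the remaining outer-bound constraint (the sum-rate bound in Corollary \ref{cor:outer11}) does not cut below the candidate point; this should follow because the sum-rate constraint in the outer bound is no tighter than the individual constraints at the extremal point where $R_1$ is maximized for fixed $R_2$. Thus the candidate point lies on the $R_1$-$R_2$ boundary of the outer region.

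The main obstacle I anticipate is the matching of the maximizations: one must confirm that maximizing $r'_1(\Delta,P_2'')$ over $\Delta$ (the inner-bound optimization) produces the same $R_1$-value as the outer bound permits, i.e., that the inner and outer $R_1$-expressions coincide at the optimal $\Delta$, and that the dirty-paper term $\frac{1}{2}\log(1+\frac{a^2P_2'}{a^2P_2''+1})$ in $r'_1$ genuinely matches the analogous term in Corollary \ref{cor:outer11}. This requires that the extra fifth term and any loss in the inner bound vanish under the hypotheses, so that the achievable $R_1$ exactly meets the outer-bound face. Once this alignment is established, the conclusion follows immediately: the point is achievable (inside the capacity region) and on the outer-bound boundary (the capacity region cannot exceed it), so it is a genuine boundary point of the capacity region.
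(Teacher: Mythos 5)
Your proposal is correct and follows essentially the same route as the paper: achievability of the candidate point from Proposition \ref{th:inner1} because the two hypotheses render its remaining two constraints inactive, combined with the observation that the first two constraints of Corollary \ref{cor:outer11} coincide in functional form with $r'_1$ and $r'_2$, so the maximization defining $\Delta^*(P_2'')$ places the point on the boundary of that outer bound. Your worry about the sum-rate constraint of Corollary \ref{cor:outer11} is moot --- the paper simply drops it to obtain a relaxed (hence still valid) outer bound, and in any case achievability already forces the point to lie inside every valid outer bound --- and the matching of the $R_1$ expressions you flag holds by inspection, since the inner and outer $R_1$ bounds are literally identical formulas.
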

\begin{proof}
The rate pairs given in the theorem are contained in inner bound 1 given in Proposition \ref{th:inner1} due to the conditions given in the theorem. We next show that these rate pairs are also on the boundary of an outer bound. Following from outer bound 1 in Corollary \ref{cor:outer11}, $R_1\leqslant r'_1(\Delta,P_2'')$ and $R_2 \leqslant r'_2(P_2'')$ also determine an outer bound with $(\Delta,P_2'')$ taking the same values as in inner bound 1 given in Proposition \ref{th:inner1}. Then the chosen parameters $\Delta^*(P_2'')$ for each value of $P_2''$ guarantees that the rate pairs are on the boundary of this outer bound.
\end{proof}
\begin{remark}
The rate pairs characterized in Theorem \ref{th:capa11-1} are on the boundary of the capacity region with the state known at both transmitter 2 and receiver 2, which is the outer bound 1 in Corollary \ref{cor:outer11}.
\end{remark}

We next characterize additional boundary points of the capacity region based on inner bound 2 given in Proposition \ref{th:inner12} and outer bound 2 given in Corollary \ref{th:outer12}. We use $r''_1(\rho_{21}, \rho_{2s},P_2',P_2'')$, $r''_2(P_2'')$, and $r''_{12}(\rho_{21}, \rho_{2s})$ to denote the three bounds on $R_1$, $R_2$, and $R_1+R_2$ in inner bound 2 given in Proposition \ref{th:inner12}. For $0 \leq P''_2 \leq P_2$, let $(\rho_{21}^*(P_2''), \rho_{2s}^*(P_2''))= \underset{(\rho_{21}, \rho_{2s}):P_2''\leq (1-\rho_{21}^2-\rho_{2s}^2)P_2}{\text{argmax}} r''_{12}(\rho_{21}, \rho_{2s})$, and let $P_2'^*(P_2'')=(1-{\rho_{21}^*(P_2'')}^2-{\rho_{2s}^*(P_2'')}^2)P_2-P_2''$. Based on these notations, we characterize partial boundary of the capacity region as follows.
\begin{theorem}(Partial Boundary of Capacity Region)\label{th:capa11-2}
Consider the Gaussian cognitive interference channel with state noncausally known at transmitter 2 and with $|a| \leqslant 1$. For $0 \leq P''_2 \leq P_2$, the rate pairs $(r''_{12}(\rho_{21}^*(P_2''), \rho_{2s}^*(P_2''))-r''_2(P_2''),\: r''_2(P_2''))$ is on the boundary of the capacity region if $r''_{12}(\rho_{21}^*(P_2''), \rho_{2s}^*(P_2''))-r''_2(P_2'') \leq r''_1(\rho_{21}^*(P_2''), \rho_{2s}^*(P_2''),P_2'^*(P_2''),P_2'')$. The rate pairs $(R_1,r''_2(P_2))$ are also on the boundary of the capacity region if $R_1 \leq \min\{r''_1,r''_{12}-r''_2(P_2)\}|_{\rho_{21}=0,\rho_{2s}=0,P_2'=0}$.
\end{theorem}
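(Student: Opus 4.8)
The plan is to exploit the structural observation that Inner Bound 2 (Proposition \ref{th:inner12}) and Outer Bound 2 (Corollary \ref{th:outer12}) share \emph{exactly} the same $R_2$-bound $r''_2(P_2'')$ and the same sum-rate bound $r''_{12}(\rho_{21},\rho_{2s})$; the only difference is that the inner bound carries the extra constraint $R_1\leq r''_1$. The argument is therefore a sandwich: I would show that the claimed rate pair is achievable (hence in the capacity region) while simultaneously lying on the boundary of Outer Bound 2 (which contains the capacity region), forcing it onto the boundary of the capacity region itself.

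For the first statement, I would first check achievability. Feeding the parameters $(\rho_{21}^*(P_2''),\rho_{2s}^*(P_2''),P_2'^*(P_2''),P_2'')$ into Inner Bound 2, the candidate pair has $R_2=r''_2(P_2'')$ meeting the $R_2$-bound with equality and $R_1+R_2=r''_{12}(\rho_{21}^*,\rho_{2s}^*)$ meeting the sum-rate bound with equality. The only remaining constraint is $R_1\leq r''_1$, and the hypothesis $r''_{12}(\rho_{21}^*,\rho_{2s}^*)-r''_2(P_2'')\leq r''_1(\rho_{21}^*,\rho_{2s}^*,P_2'^*,P_2'')$ is precisely the statement that this constraint is slack at the candidate point; the choice $P_2'^*(P_2'')=(1-{\rho_{21}^*}^2-{\rho_{2s}^*}^2)P_2-P_2''$ respects the inner bound's equality $P_2'+P_2''=(1-\rho_{21}^2-\rho_{2s}^2)P_2$ and is nonnegative by the feasibility requirement built into the definition of $(\rho_{21}^*,\rho_{2s}^*)$. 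I would then show the pair is on the boundary of Outer Bound 2. Since $r''_2$ depends only on $P_2''$ and $r''_{12}$ only on $(\rho_{21},\rho_{2s})$, and since $(\rho_{21}^*,\rho_{2s}^*)$ maximizes $r''_{12}$ over precisely the feasible set $\{P_2''\leq(1-\rho_{21}^2-\rho_{2s}^2)P_2\}$ appearing in the outer bound, the largest $R_1$ compatible with $R_2=r''_2(P_2'')$ in the outer region is exactly $r''_{12}(\rho_{21}^*,\rho_{2s}^*)-r''_2(P_2'')$. Thus the pair saturates the outer bound. Because the inner bound is contained in the capacity region, which is in turn contained in Outer Bound 2, the pair must lie on the boundary of the capacity region.

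For the second statement, I would observe that when $P_2''=P_2$ (equivalently $\rho_{21}=\rho_{2s}=0$, $P_2'=0$), the coordinate $R_2=r''_2(P_2)=\frac{1}{2}\log(1+P_2)$ is the global maximum of $R_2$ over the outer region, because Outer Bound 2 enforces $R_2\leq\frac{1}{2}\log(1+P_2'')$ with $P_2''\leq(1-\rho_{21}^2-\rho_{2s}^2)P_2\leq P_2$. Hence any achievable pair with this maximal $R_2$ is automatically on the capacity boundary, and achievability in Inner Bound 2 only requires $R_1\leq r''_1$ together with $R_1+R_2\leq r''_{12}$, that is $R_1\leq\min\{r''_1,\,r''_{12}-r''_2(P_2)\}$ evaluated at $\rho_{21}=\rho_{2s}=0$, $P_2'=0$, which is exactly the stated condition.

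The hard part is the boundary-matching step in the first statement: I must confirm that maximizing $r''_{12}$ over $(\rho_{21},\rho_{2s})$ with $P_2''$ held fixed genuinely traces the $R_1$-maximal frontier of the outer region rather than an interior level set, and that the discrepancy between the inner bound's equality $P_2'+P_2''=(1-\rho_{21}^2-\rho_{2s}^2)P_2$ and the outer bound's inequality constraint is harmlessly absorbed by the choice $P_2'^*(P_2'')$. This mirrors the argument already used for Theorem \ref{th:capa22}, and the same reasoning transfers with only a relabeling of the relevant bounds.
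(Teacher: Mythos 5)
Your proposal is correct and follows essentially the same sandwich argument as the paper: achievability via Inner Bound 2 (with the hypothesis rendering the $R_1$ constraint slack), plus the observation that the $R_2$ and sum-rate bounds coincide with Outer Bound 2 and that $(\rho_{21}^*,\rho_{2s}^*)$ maximizes $r''_{12}$ over the relevant feasible set, so the point saturates the outer bound; the second statement likewise matches the paper's reasoning that $R_2=r''_2(P_2)$ is globally maximal. Your treatment is in fact more explicit than the paper's about why the maximization over $(\rho_{21},\rho_{2s})$ with the constraint $P_2''\leq(1-\rho_{21}^2-\rho_{2s}^2)P_2$ yields the $R_1$-maximal frontier of the outer region at fixed $R_2$, and about why $P_2'^*(P_2'')\geq 0$, but no new idea is needed.
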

\begin{proof}
The rate pairs given in the theorem are clearly contained in inner bound 2 given in Proposition \ref{th:inner12}. These rate pairs are also on the boundary of outer bound 2 given in Corollary \ref{th:outer12}, because $r''_2$ and $r''_{12}$ are the same as the bounds on $R_2$ and on $R_1+R_2$, respectively, and the chosen parameters $(\rho^*_{21}(P_2''),\rho^*_{2s}(P_2''))$ for each value of $P_2''$ guarantees that the rate pairs are on the boundary. The second statement is clear because when $P_2''=P_2$, $R_2$ achieves the maximum value, and hence any rate pairs with such $R_2$ are on the boundary if they are achievable.
\end{proof}

Theorems \ref{th:capa11-1} and \ref{th:capa11-2} collectively characterize partial boundary of the capacity region for the Gaussian channel with $|a| \leqslant 1$. In Fig.~\ref{fig:GaussianCase1}, we demonstrate these boundary points of the capacity region for an example channel with the parameters $P_1=P_2=Q=1$, $b=0.85$, $c=0.9$ and $a=0.8$. We plot the boundaries of the two inner bounds given in Proposition \ref{th:inner1} and Proposition \ref{th:inner12}, and the boundaries of the two outer bounds given in Corollary \ref{cor:outer11} and Corollary \ref{th:outer12}, respectively. We observe that the two inner bounds are very close. It can be seen that the boundary of inner bound 1 matches the boundary of outer bound 1 when $R_1$ is above a certain value, and this part is thus on the boundary of the capacity region. We also note that this part of the boundary achieves the capacity region of the same channel with state also known at receiver 2. It can further be seen that the boundary of inner bound 2 matches the boundary of outer bound 2 when $R_2$ is above a certain threshold, and this part is hence also on the boundary of the capacity region.

\begin{figure}[thb]
\centering
\includegraphics[width=5in]{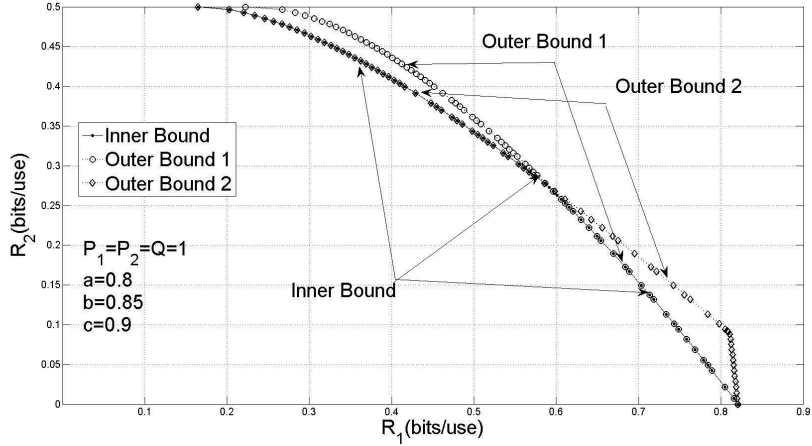}
\caption{An illustration of inner and outer bounds and the partial boundary of the capacity region for a Gaussian channel with $|a| \leqslant 1$}
\label{fig:GaussianCase1}
\end{figure}


It can be seen that outer bounds 1 and 2 separately characterize certain parts of the boundary of the capacity region for Gaussian channels with $|a| \leqslant 1$. We further show that each of these two outer bounds can characterize the full capacity region for channels that satisfy certain conditions.
\begin{theorem}(Capacity)\label{th:capa13}
For the Gaussian cognitive interference channel with state noncausally known at transmitter 2, if $|a| \leqslant 1$ and the channel satisfies the condition (\ref{eq:cond3}), the capacity region consists of rate pairs $(R_1,R_2)$ satisfying:
\begin{small}
\begin{flalign}
R_1 \leqslant & \frac{1}{2} \log\left(1+\frac{P_1+2a\rho_{21}\sqrt{P_1P_2}+a^2\rho_{21}^2P_2}{a^2(1-\rho_{21}^2)P_2+2a\rho_{2s}\sqrt{P_2Q}+Q+1}\right)+\frac{1}{2}\log\left(1+\frac{a^2 P_2'}{a^2 P_2''+1}\right) \nn \\
R_2 \leqslant &\frac{1}{2} \log(P_2''+1) \label{eq:capa13}
\end{flalign}
\end{small}
where $P_2'+P_2''=(1-\rho_{21}^2-\rho_{2S}^2)P_2$, $P_2' \geqslant 0$, $P_2''\geqslant 0$ and $\rho_{21}^2+\rho_{2S}^2 \leqslant 1$, $|\rho_{21}| \leqslant 1$, $|\rho_{2S}| \leqslant 1$.
\end{theorem}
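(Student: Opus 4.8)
The plan is to establish the capacity region by sandwiching it between Inner Bound~1 (Proposition~\ref{th:inner1}) and Outer Bound~1 (Corollary~\ref{cor:outer11}), and to show that under the less noisy condition~\eqref{eq:cond3} both regions collapse to exactly the two bounds claimed in \eqref{eq:capa13}. The central mechanism is that \eqref{eq:cond3} lets every mutual information evaluated at receiver~1 (output $Y$) be replaced by the corresponding quantity at receiver~2 (output $Z$): for the jointly Gaussian auxiliary variables we have $I(X_1;Y)\le I(X_1;Z)$ and, taking $U=T$, $I(T;Y|X_1)\le I(T;Z|X_1)$. Since both regions already share the same first two bounds (the $R_1$ bound $A+B$ and the $R_2$ bound $\tfrac12\log(1+P_2'')$), the entire task reduces to arguing that the extra constraints on each side are inactive.

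For achievability I would start from Corollary~\ref{cor:innerdm2} with the Gaussian inputs and dirty-paper variables of~\eqref{eq:dist11}. Writing $r'_1=I(X_1T;Y)-I(T;S|X_1)$ and $r'_2=I(V;Z|X_1T)-I(V;S|X_1T)$, I claim the third bound $I(TV;Z|X_1)-I(TV;S|X_1)$ and the sum bound $I(X_1TV;Z)-I(TV;S|X_1)$ are redundant. The third bound exceeds $r'_2$ by $I(T;Z|X_1)-I(T;S|X_1)$, which is nonnegative because $I(T;Z|X_1)\ge I(T;Y|X_1)\ge I(T;S|X_1)$, the first inequality being \eqref{eq:cond3} and the second holding since $I(T;Y|X_1)-I(T;S|X_1)=\tfrac12\log(1+a^2P_2'/(a^2P_2''+1))\ge 0$. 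For the sum bound, \eqref{eq:cond3} gives $I(X_1T;Y)\le I(X_1T;Z)$, whence $r'_1+r'_2\le I(X_1T;Z)-I(T;S|X_1)+I(V;Z|X_1T)-I(V;S|X_1T)=I(X_1TV;Z)-I(TV;S|X_1)$, so the largest sum rate permitted by the first two bounds never violates the sum constraint. It then remains to verify that $r'_1,r'_2$ specialize to the two expressions in \eqref{eq:capa13}, which is a routine evaluation of jointly Gaussian mutual informations under \eqref{eq:dist11}.

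For the converse I would invoke Outer Bound~1 (Corollary~\ref{cor:outer11}), the capacity region of the same channel with the state known additionally at receiver~2, which is a legitimate outer bound since extra receiver side information cannot shrink capacity. Its first two bounds already coincide with \eqref{eq:capa13}, so it suffices to show its sum bound $D=\tfrac12\log(1+b^2P_1+2b\rho_{21}\sqrt{P_1P_2}+(1-\rho_{2s}^2)P_2)$ is redundant, i.e.\ that $r'_1+r'_2\le D$ for every admissible parameter choice. Chaining the inequality from the achievability argument with the standard fact that a Gel'fand--Pinsker rate is dominated by the rate available when the state is known at the receiver, one gets $r'_1+r'_2\le I(X_1TV;Z)-I(TV;S|X_1)\le I(X_1X_2'X_2'';Z|S)=D$, where the last equality is the direct evaluation of the state-at-receiver sum rate. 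Hence the sum constraint is never active, Outer Bound~1 reduces to the two-bound region, and combined with achievability this yields the stated capacity.

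The main obstacle I anticipate is not the information-theoretic bookkeeping but the algebraic verification that $r'_1$, $r'_2$, the third inner bound, and the outer sum rate $D$ all reduce to the precise forms quoted in the propositions once \eqref{eq:dist11} is substituted; in particular one must confirm that the coefficient $\alpha=a^2P_2'/(a^2P_2'+a^2P_2''+1)$ makes the residual state terms in $I(T;Y|X_1)-I(T;S|X_1)$ cancel so that this difference collapses to $\tfrac12\log(1+a^2P_2'/(a^2P_2''+1))$, and that \eqref{eq:cond3} holds over the relevant Gaussian parameter ranges so that the replacement of $Y$-quantities by $Z$-quantities is justified throughout.
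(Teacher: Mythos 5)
Your proposal is correct and follows essentially the same route as the paper: achievability via Corollary \ref{cor:innerdm2} with the Gaussian/dirty-paper choice \eqref{eq:dist11}, dropping the sum-rate bound using $I(X_1T;Y)\le I(X_1T;Z)$ from \eqref{eq:cond3} and the third bound using $I(T;Z|X_1)-I(T;S|X_1)\ge I(T;Y|X_1)-I(T;S|X_1)\ge 0$, then matching the first two bounds of Outer Bound 1. The only divergence is in the converse, where your verification that the sum constraint $D$ is inactive is superfluous (though correct): since deleting a constraint from an outer bound still yields an outer bound, the two-bound region automatically contains Corollary \ref{cor:outer11} and hence the capacity region, which is all the paper invokes.
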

\begin{proof}
Under the condition (\ref{eq:cond3}), the bounds in the achievable region in Corollary \ref{cor:innerdm2} reduce to:
\begin{flalign}
R_1 \leqslant & I(X_1T;Y) - I(T;S|X_1)\nn\\
R_2 \leqslant & I(V;Z|X_1T) - I(V;S|X_1T)\nn\\
R_2 \leqslant & I(TV;Z|X_1) - I(TV;S|X_1)\label{eq:capa13DMC}
\end{flalign}
Based on the above bounds, we choose the same jointly Gaussian input distribution as in \eqref{eq:dist11}. In particular, since the auxiliary random variable $T$ is chosen to employ dirty paper coding to deal with the state in $Y$, it guarantees that $I(T;Y|X_1)-I(T;S|X_1) \geqslant 0$, which implies that $I(T;Z|X_1)-I(T;S|X_1) \geqslant 0$ due to the condition \eqref{eq:cond3}. Hence, the third bound in \eqref{eq:capa13DMC} is redundant. Thus, we obtain an achievable region that matches the first two bounds of outer bound 1 in Corollary \ref{cor:outer11} and is hence tight.
\end{proof}
\begin{remark}
The above theorem implies that the Gaussian cognitive interference channel with state noncausally known only to transmitter 2 achieves the capacity of the same channel with state also known to receiver 2 if the channel satisfies $|a| \leq 1$ and the condition (\ref{eq:cond3}). This is similar to the result that dirty paper coding achieves the capacity of the Gaussian channel with state also known at the receiver \cite{Costa83}. Here, the channel cannot achieve the capacity with both receivers knowing the channel state due to the fact that transmitter 1 does not know the channel state.
\end{remark}

We note that the above region matches the capacity in \cite{Somekh08} of another cognitive interference model with state, in which $W_1$  is intended only for receiver 1. This is reasonable because under the condition \eqref{eq:cond3}, receiver 1 is weaker in decoding $W_1$ than receiver $2$, and receiver 2 can hence always decode $W_1$, which satisfies the additional requirement in the model of this paper.


The following theorem identifies the channels for which outer bound 2 given in Corollary \ref{th:outer12} characterizes the full capacity region.
\begin{theorem}(Capacity)\label{th:capa12}
For the Gaussian cognitive interference channel with state noncausally known at transmitter 2, if $|a| \leqslant 1$ and the channel satisfies the condition (\ref{eq:cond4}), the capacity region consists
of rate pairs $(R_1,R_2)$ satisfying:
\begin{small}
\begin{flalign}
R_1 \leqslant &\frac{1}{2}\log \left(1+\frac{P_2'}{ P_2''+1}\right)+\frac{1}{2} \log\left(1+\frac{b^2P_1+2b\rho_{21}\sqrt{P_1P_2}+\rho_{21}^2P_2}{(1-\rho_{21}^2)P_2+2c\rho_{2s}\sqrt{P_2Q}+c^2Q+1}\right)\nn\\
R_2 \leqslant &  \frac{1}{2} \log(1+P_2'') \label{eq:capa12}
\end{flalign}
\end{small}
where $P_2'+P_2''=(1-\rho_{21}^2-\rho_{2S}^2)P_2$, $P_2' \geqslant 0$, $P_2''\ge 0$ and $\rho_{21}^2+\rho_{2S}^2 \leqslant 1$, $|\rho_{21}| \leqslant 1$, $|\rho_{2S}| \leqslant 1$.
\end{theorem}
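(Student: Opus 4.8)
The plan is to prove the two bounds in \eqref{eq:capa12} by showing that, under condition \eqref{eq:cond4}, a reduced form of Inner Bound~2 (Proposition~\ref{th:inner12}) meets Outer Bound~2 (Corollary~\ref{th:outer12}) term-for-term. The role of \eqref{eq:cond4} is precisely to discard the one inner-bound constraint, the $Y$-based bound \eqref{eq:inner121} on $R_1$, that does not appear in the claimed region.

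For the converse I would simply invoke Outer Bound~2, which is already a valid outer bound for $|a|\leqslant 1$, and check that \eqref{eq:capa12} is merely a reparametrization of \eqref{eq:outer121}--\eqref{eq:outer122}. Writing $\bar{P}=(1-\rho_{21}^2-\rho_{2s}^2)P_2$ and $P_2'=\bar{P}-P_2''$, the sum-rate bound \eqref{eq:outer122} does not depend on $P_2''$, while \eqref{eq:outer121} reads $R_2\leqslant\frac{1}{2}\log(1+P_2'')$. Subtracting the latter from the former gives exactly the $R_1$ bound in \eqref{eq:capa12}, because $\frac{1}{2}\log(1+\bar{P})-\frac{1}{2}\log(1+P_2'')=\frac{1}{2}\log\!\big(1+\tfrac{P_2'}{1+P_2''}\big)$. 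Sweeping $P_2''$ over $[0,\bar{P}]$ for each fixed $(\rho_{21},\rho_{2s})$ shows that the union of the rectangles described by \eqref{eq:capa12} traces out the same pentagon as Outer Bound~2, since the corner $R_1=(\text{sum-rate})-R_2$ always lies on the sum-rate face. This is the routine Gaussian-MAC-style reformulation already used in Proposition~\ref{th:outer21}, so \eqref{eq:capa12} is an outer bound.

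For achievability I would start from Inner Bound~2 and the jointly Gaussian choice in the proof of Proposition~\ref{th:inner12}, for which the region reduces, after dropping the second $R_2$ bound of Corollary~\ref{cor:innerdm2} (already redundant there), to three active constraints: the $Y$-based $R_1$ bound $I(X_1T;Y)-I(T;S|X_1)$ of \eqref{eq:inner121}, the $R_2$ bound $I(V;Z|X_1T)-I(V;S|X_1T)$ of \eqref{eq:inner122}, and the $Z$-based sum-rate bound $I(X_1TV;Z)-I(TV;S|X_1)$ of \eqref{eq:inner123}. The key identity is that the sum-rate bound minus the $R_2$ bound collapses to $I(X_1T;Z)-I(T;S|X_1)$, using $I(X_1TV;Z)=I(X_1T;Z)+I(V;Z|X_1T)$ and $I(TV;S|X_1)=I(T;S|X_1)+I(V;S|X_1T)$. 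Setting $U=T$ in \eqref{eq:cond4}, which is legitimate because the marginal law of $(T,X_1,X_2,S,Y,Z)$ drawn from the chosen distribution has the required product form, yields $I(X_1T;Y)\geqslant I(X_1T;Z)$, so \eqref{eq:inner121} is no tighter than the $R_1$-value at the corner of the pentagon defined by \eqref{eq:inner122}--\eqref{eq:inner123}. Hence that corner, $R_2=\frac{1}{2}\log(1+P_2'')$ with $R_1$ equal to the sum-rate bound minus $R_2$, is achievable, and by downward-closedness the whole rectangle \eqref{eq:capa12} is achievable for each parameter choice. Since these expressions coincide term-for-term with \eqref{eq:outer121}--\eqref{eq:outer122}, the inner and outer bounds meet.

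The main obstacle I anticipate is the redundancy step for \eqref{eq:inner121}: I must confirm that the sum-rate-minus-$R_2$ quantity genuinely equals $I(X_1T;Z)-I(T;S|X_1)$ through the two chain-rule expansions above, and that the chosen Gaussian input indeed lies in the class over which \eqref{eq:cond4} is assumed, so the comparison $I(X_1T;Y)\geqslant I(X_1T;Z)$ is valid at the level of the actual code distribution rather than only the abstract condition. Everything else is the standard parametric reformulation between the sum-rate form and the rectangle form.
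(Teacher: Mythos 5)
Your proposal is correct and follows essentially the same route as the paper: the paper's achievability proof likewise replaces the $Y$-based $R_1$ bound by $I(X_1T;Z)-I(T;S|X_1)$ (your pentagon corner, obtained from exactly the same chain-rule identity), justifies this via condition \eqref{eq:cond4} with $U=T$ for the chosen jointly Gaussian inputs, and notes the redundancy of the second $R_2$ bound; the converse in both cases is the observation that \eqref{eq:capa12} and Outer Bound 2 are reparametrizations of the same region. No gaps.
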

\begin{proof}
With the condition \eqref{eq:cond4}, it can be seen that an achievable region determined by the following bounds is contained in the inner bound given in Corollary \ref{cor:innerdm2}, and is hence achievable.
\begin{flalign}
R_1 \leqslant & I(X_1T;Z) - I(T;S|X_1)\nn\\
R_2 \leqslant & I(V;Z|X_1T) - I(V;S|X_1T)\nn\\
R_2 \leqslant & I(TV;Z|X_1) - I(TV;S|X_1)\;.\label{eq:ach12DMC}
    \end{flalign}
The achievability follows from the above region by choosing the jointly Gaussian distribution and employing dirty paper coding for $T$ to deal with the state for $Z$ and for $V$ to deal with the remaining state for $Z$ after subtracting $\frac{1}{a}T$. More specifically, we set the auxiliary random variable as follows:
\begin{flalign}
      &X_1 \sim \mathcal{N}(0,P_1), \quad X_2' \sim \mathcal{N}(0,P_2'), \quad X_2'' \sim \mathcal{N}(0,P_2''), \quad P_2' + P_2'' = (1-\rho_{21}^2-\rho_{2S}^2)\nn\\
      &X_2 = \rho_{21} \sqrt{\frac{P_2}{P_1}}X_1+ X_2'+X_2''+ \rho_{2s} \sqrt{\frac{P_2}{Q}}S\nn\\
      &T = X_2' + \alpha \left(c + \rho_{2s}\sqrt{\frac{P_2}{Q}}\right)S\nn\\
      &V = X_2'' + \beta (1 - \alpha) \left(c + \rho_{2s}\sqrt{\frac{P_2}{Q}}\right)S
\end{flalign}
where $X_1$, $X_2'$, $X_2''$ and $S$ are independent random variables, $\alpha = \frac{P_2'}{P_2' + P_2'' + 1}$, and $\beta = \frac{P_2''}{P_2'' + 1}$. Such a choice of the input distribution also implies that $I(T;Z|X_1) - I(T;S|X_1) \geqslant 0$, and the third bound in \eqref{eq:ach12DMC} is hence redundant. The proof for the converse follows by observing that the region \eqref{eq:capa12} has the same boundary points as outer bound 2 given in Corollary \ref{th:outer12}, and hence the two regions are equivalent.
\end{proof}
We note that Theorems \ref{th:capafull2} and \ref{th:capa12} implies that under the condition (\ref{eq:cond4}), the Gaussian cognitive interference channel with state has the same capacity region as the multiple access channel with state given in \cite{MACSomekhBaruch08}. This is reasonable because the condition (\ref{eq:cond4}) implies that receiver 2 is weaker than receiver 1 in decoding $W_1$, and hence dominates the rate region.

\section{State Known at both Transmitter 2 and Receiver 2}\label{sec:recstate}

In this section, we study the cognitive interference channel with state known at both transmitter 2 and receiver 2. This channel is of interest by its own, and the capacity of this channel also provides a useful outer bound for characterizing the capacity for the channel with the state known only at transmitter 2 as already demonstrated in Section \ref{sec:gauss1}.

\subsection{Discrete Memoryless Channel}

We characterize the full capacity region in the following theorem. In particular, the proof of the converse applies the techniques developed recently in \cite{Lapidoth11} for proving equivalence of two regions characterized by different sets of auxiliary random variables.
\begin{theorem}\label{thr:WithSn}(Capacity)
The capacity region for the cognitive interference channel with state noncausally known at both transmitter 2 and receiver 2 consists of rate pairs $(R_1, R_2)$ satisfying:
\begin{flalign}
R_1 \leqslant & I(X_1U;Y) - I(U;S|X_1)\nn\\
R_2 \leqslant & I(X_2;Z|S X_1)\nn\\
R_1 + R_2 \leqslant & I(X_1X_2;Z|S)\nn\\
R_1 + R_2 \leqslant & I(X_1U;Y) + I(X_2;Z|X_1US) - I(U;S|X_1)\label{eq:WithSnEq}
\end{flalign}
for some distribution  $P_{X_1SUX_2YZ} = P_{X_1}P_{S}P_{UX_2|X_1S}P_{YZ|SX_1X_2}$, where $U$ is an auxiliary random variable and its cardinality is bounded by $|\cU|\leqslant |\cX_1||\cX_2||\cS|+1$.
\end{theorem}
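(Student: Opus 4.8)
The plan is to establish achievability and converse separately, obtaining two regions that are expressed through different auxiliary-variable structures, and then to reconcile them via the equivalence technique of Lapidoth and Wang \cite{Lapidoth11}.

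For achievability, I would design a scheme combining Gel'fand--Pinsker binning with joint decoding at receiver 2, crucially exploiting the new assumption that receiver 2 knows the state. Transmitter 1 generates $X_1^n(w_1)$ i.i.d.\ from $P_{X_1}$. Transmitter 2, knowing $w_1$, $w_2$, and $s^n$, superimposes a binned auxiliary codeword $U^n$ on $x_1^n(w_1)$: there are $2^{nR_1}$ bins (one per $w_1$), each holding $2^{nT}$ candidates with $T > I(U;S|X_1)$, so that a $u^n$ jointly typical with $(x_1^n,s^n)$ exists in the correct bin. Since receiver 2 knows $s^n$, the message $W_2$ is carried by $X_2^n$ drawn from $P_{X_2|UX_1S}$ with no further binning against the state at that receiver. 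Receiver 1, which does not know $S$, decodes $w_1$ from $Y^n$ by searching over bins, giving $R_1 < I(X_1U;Y)-I(U;S|X_1)$. Receiver 2 jointly decodes $(w_1,w_2)$ from $(Z^n,s^n)$, and a standard joint-typicality analysis of this state-aware multiple-access step produces constraints such as $R_2 < I(X_2;Z|X_1US)$ together with an $S$-conditioned sum constraint and the cross bound. This yields an inner bound $\mathcal{R}_{\mathrm{in}}$ in which $U$ also enters the receiver-2 constraints.

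For the converse, I would use Fano's inequality: $nR_1 \le I(W_1;Y^n)+n\epsilon_n$ at receiver 1, and $n(R_1+R_2)\le I(W_1W_2;Z^n|S^n)+n\epsilon_n$, $nR_2 \le I(W_2;Z^n|W_1S^n)+n\epsilon_n$ at receiver 2 (using $(W_1,W_2)\perp S^n$). The two $S$-conditioned bounds single-letterize directly: since $Z_i\leftrightarrow(X_{1i},X_{2i},S_i)$ and $X_{1i},X_{2i}$ are functions of $(W_1,W_2,S^n)$, I obtain $R_1+R_2\le I(X_1X_2;Z|S)$ and $R_2\le I(X_2;Z|SX_1)$. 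For the remaining two bounds I would introduce the Gel'fand--Pinsker auxiliary $U_i=(W_1,Y^{i-1},S_{i+1}^n)$, apply the Csisz\'ar sum identity to exchange the state and output terms, and condition on $X_{1i}$ throughout, giving $R_1\le I(X_1U;Y)-I(U;S|X_1)$ and, by combining the receiver-1 bound on $W_1$ with the receiver-2 bound on $W_2$ under a consistent choice of $U_i$, the cross bound $R_1+R_2\le I(X_1U;Y)+I(X_2;Z|X_1US)-I(U;S|X_1)$. The cardinality bound $|\mathcal{U}|\le|\mathcal{X}_1||\mathcal{X}_2||\mathcal{S}|+1$ then follows from the usual Fenchel--Eggleston--Carath\'eodory argument preserving the input marginal and the relevant mutual informations.

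The main obstacle is that $\mathcal{R}_{\mathrm{in}}$ and the outer region $\mathcal{R}_{\mathrm{out}}$ (the stated region) carry $U$ in different constraints: $\mathcal{R}_{\mathrm{in}}$ keeps $U$ inside the receiver-2 bounds (e.g.\ $I(X_2;Z|X_1US)$ and an $S$-conditioned sum involving $U$), whereas $\mathcal{R}_{\mathrm{out}}$ replaces them by the cleaner $I(X_2;Z|SX_1)$ and $I(X_1X_2;Z|S)$, so ordinary Fourier--Motzkin manipulation does not identify the two unions-over-distributions. I would therefore invoke the technique of \cite{Lapidoth11}: given any admissible distribution and rate pair in $\mathcal{R}_{\mathrm{out}}$, construct an auxiliary $U$ and associated conditional distribution under which the $\mathcal{R}_{\mathrm{in}}$ constraints collapse to the $\mathcal{R}_{\mathrm{out}}$ ones, yielding $\mathcal{R}_{\mathrm{out}}\subseteq\mathcal{R}_{\mathrm{in}}$; the reverse inclusion $\mathcal{R}_{\mathrm{in}}\subseteq\mathcal{R}_{\mathrm{out}}$ holds because the converse bounds are valid for every $U$. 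Verifying that the constructed $U$ lies in the admissible family and that all four constraints align simultaneously is the delicate step where the Lapidoth--Wang argument does the real work.
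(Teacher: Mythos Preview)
Your overall plan recognizes correctly that an auxiliary-variable mismatch forces the use of the Lapidoth--Wang argument, but you have located the mismatch on the wrong side and, as a consequence, your converse has a genuine gap.

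The gap is in the cross bound. With your single choice $U_i=(W_1,Y^{i-1},S_{i+1}^n)$ you cannot obtain
\[
R_1+R_2 \le I(X_1U;Y)+I(X_2;Z|X_1US)-I(U;S|X_1)
\]
by ``combining the receiver-1 bound on $W_1$ with the receiver-2 bound on $W_2$.'' The term $I(W_2;Z^n|W_1S^n)$ single-letterizes with conditioning on $Z^{i-1}$ (or $Z_{i+1}^n$) and $S^{i-1}$, not on $Y^{i-1}$; thus $H(Z_i|W_1S^nZ^{i-1})$ and $H(Z_i|X_{1i}U_iS_i)$ involve incomparable conditioning sets, and the desired inequality does not follow. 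This is exactly the obstruction the paper has to work around.

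In the paper the roles are reversed relative to your proposal. Achievability is immediate: specializing Theorem~\ref{tr:OrigAch} with $T=X_1$, $V=X_2$, and $Z\to(Z,S)$ yields precisely the stated four-constraint region (the Markov chain $U\!-\!(X_1,X_2,S)\!-\!Z$ collapses $I(UX_2;Z|X_1S)$ to $I(X_2;Z|X_1S)$, etc.). The converse, by contrast, requires \emph{two} auxiliaries, $K_i=(W_1,S_{i+1}^n,X_1^n,Y^{i-1})$ and $T_i=Z_{i+1}^n$, and a fairly delicate chain using the Csisz\'ar--K\"orner identity to reach
\[
R_1+R_2 \le I(TKX_1;Y)-I(TK;S|X_1)+I(X_2;Z|X_1TKS).
\]
The Lapidoth--Wang step then runs in the direction opposite to yours: one shows that this two-auxiliary \emph{outer} bound is contained in the one-auxiliary \emph{inner} (stated) region, by a case split on the sign of $I(T;Y|KX_1)-I(T;S|KX_1)$ and taking $U=K$ or $U=(K,T)$ accordingly. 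Your plan to construct a $U$ that makes an outer region sit inside a smaller-looking inner region is not what is needed here.
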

\begin{proof}
The achievability follows from the achievable region given in \eqref{OrigAch} by setting $T=X_1$, $V=X_2$ and $Z = ZS$.

For the converse, we first obtain the following outer bound consisting of rate pairs $(R_1,R_2)$ satisfying
\begin{flalign}
    R_1 & \leqslant I(K X_1;Y) - I(K;S|X_1)\nn\\
    R_2 & \leqslant I(X_2;Z|S X_1)\nn\\
    R_1 + R_2 & \leqslant I(X_1  X_2;Z|S )\nn\\
    R_1 + R_2 & \leqslant I(TKX_1;Y) - I(T K;S|X_1) + I (X_2;Z|X_1TKS) \label{eq:outerbothstate}
\end{flalign}
for some distribution  $P_{X_1STKX_2YZ} = P_{X_1}P_{S}P_{KT|X_1S}P_{X_2|X_1SKT}P_{YZ|SX_1X_2}$, where $K$ and $T$ are auxiliary random variables. The proof is detailed in Appendix \ref{pr:ConWithSn}.

In order to show that the region \eqref{eq:WithSnEq} is the capacity region, it is sufficient to show that the above outer bound \eqref{eq:outerbothstate} is a subset of the region \eqref{eq:WithSnEq}. Towards this end, we apply the technique in \cite{Lapidoth11} and analyze the outer bound \eqref{eq:outerbothstate} by considering the following two cases.

%
%

If $I(T;Y|K X_1) - I(T;S|K X_1) \leqslant 0$, the outer bound \eqref{eq:outerbothstate} can be further bounded as:
\begin{flalign}
R_1 \leqslant & I(K X_1;Y) - I(K;S|X_1) \nn \\
R_2 \leqslant & I(X_2;Z|S X_1) \nn\\
R_1 + R_2 \leqslant & I(X_1 X_2;Z|S) \nn\\
R_1 + R_2\leqslant & I(K X_1;Y) - I(K;S|X_1) + [I(T;Y|K X_1) - I(T;S|K X_1)] + I (X_2;Z|X_1 T K S) \nn\\
   \leqslant & I(K X_1;Y) - I(K;S|X_1) + I (X_2;Z|X_1 K S)\label{eq:OuterWiSn}.
\end{flalign}
which implies that the outer bound \eqref{eq:outerbothstate} is contained in \eqref{eq:WithSnEq} by setting $U=K$ in \eqref{eq:WithSnEq}.

If $I(T;Y|K X_1) - I(T;S|K X_1) \geqslant 0$, the outer bound \eqref{eq:outerbothstate} can be further bounded as:
\begin{flalign}
R_1 \leqslant & I(K X_1;Y) - I(K;S|X_1) \nn\\
= & I(KT X_1; Y) - I(KT; S|X_1) - [I(T;Y|K X_1) - I(T;S|K X_1)]\nn \\
\leqslant &I(KT X_1; Y) - I(KT; S|X_1) \nn \\
    R_2 \leqslant & I(X_2;Z|S X_1) \nn\\
R_1 + R_2 \leqslant & I(X_1 X_2;Z|S) \nn\\
R_1 + R_2 \leqslant &  I(T K X_1;Y) - I(T K;S|X_1) + I (X_2;Z|X_1 K T S)
\end{flalign}
which also implies that the outer bound \eqref{eq:outerbothstate} is contained in \eqref{eq:WithSnEq} by setting $U=KT$ in \eqref{eq:WithSnEq}.
\end{proof}
\begin{remark}
By setting $X_1$ to be deterministic, Theorem \ref{thr:WithSn} reduces to the capacity region for the broadcast channel with degraded message sets and with state information noncausally known at both transmitter 2 and receiver 2, which consists of rate pairs satisfying
\begin{flalign}
R_1 \leqslant & I(U;Y) - I(U;S)\nn\\
R_1 + R_2 \leqslant & I(X;Z|S) \nn\\
R_1 + R_2 \leqslant & I(U;Y) + I(X;Z|US) - I(U;S)
\end{flalign}
for some distribution $P_{SUXYZ} = P_{S}P_{UX|S}P_{YZ|SX}$, where $X$ is the channel input, $Y$ and $Z$ are channel outputs respectively at two receivers.
\end{remark}

\subsection{Gaussian Channels}

In this section, we characterize the capacity region for Gaussian cognitive interference channels with state known at both transmitter 2 and receiver 2. The channel input-output relationship is the same as described in \eqref{eq:gaussmodel}. Similar to Section \ref{sec:generalGaussian}, we will partition Gaussian channels into two classes based on the value of the channel parameter $a$, and characterize the capacity region for each class.

We first provide the capacity region for the Gaussian channel with $|a| \leqslant 1$. As shown in Theorems \ref{th:capa11-1} and \ref{th:capa13}, this capacity region serves as the tight converse for characterizing the partial or full boundary of the capacity region when the state is known noncausally only at transmitter 2.
\begin{theorem}\label{thr:GauWithSn1Con}(Capacity)
For the Gaussian cognitive interference channel with state known at transmitter 2 and receiver 2, if $|a| \leqslant 1$, the capacity region consists of
rate pairs $(R_1,R_2)$  satisfying:
  \begin{small}
\begin{flalign}
R_1 \leqslant & \frac{1}{2} \log\left(1+\frac{P_1+2a\rho_{21}\sqrt{P_1P_2}+a^2\rho_{21}^2P_2}{a^2(1-\rho_{21}^2)P_2+2a\rho_{2s}\sqrt{P_2Q}+Q+1}\right)+\frac{1}{2}\log\left(1+\frac{a^2 P_2'}{a^2 P_2''+1}\right)\nn\\
R_2 \leqslant &\frac{1}{2} \log(1+P_2'')\nn\\
R_1+R_2 \leqslant & \frac{1}{2} \log\left(1+b^2P_1+2b\rho_{21}\sqrt{P_1P_2}+(1-\rho_{2s}^2)P_2\right)\label{eq:capawithSn3}
  \end{flalign}
  \end{small}
where $P_2' +P_2''= (1-\rho_{21}^2-\rho_{2s}^2)P_2$, $P_2' \geqslant 0$, $P_2'' \geqslant 0$, and $\rho_{21}^2+\rho_{2s}^2 \leqslant 1$.
\end{theorem}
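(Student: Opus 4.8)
The plan is to build both directions on the discrete-memoryless capacity region of Theorem \ref{thr:WithSn}, which already gives the capacity in the single-letter form \eqref{eq:WithSnEq} for an arbitrary (hence Gaussian) channel. Achievability will reduce to evaluating \eqref{eq:WithSnEq} with a jointly Gaussian input and a dirty-paper auxiliary, and the converse will reduce to showing that Gaussian inputs are optimal and then re-parametrizing the resulting four-bound region into the three-bound form \eqref{eq:capawithSn3}; the hypothesis $|a|\le 1$ will enter only at the re-parametrization step.

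For achievability I would take $X_1\sim\mathcal{N}(0,P_1)$ and independent $X_2'\sim\mathcal{N}(0,P_2')$, $X_2''\sim\mathcal{N}(0,P_2'')$ with $P_2'+P_2''=(1-\rho_{21}^2-\rho_{2s}^2)P_2$, set $X_2=\rho_{21}\sqrt{P_2/P_1}\,X_1+X_2'+X_2''+\rho_{2s}\sqrt{P_2/Q}\,S$ in \eqref{eq:gaussmodel}, and let $U=X_2'+\alpha(1+a\rho_{2s}\sqrt{P_2/Q})S$ with $\alpha=a^2P_2'/(a^2P_2'+a^2P_2''+1)$, i.e.\ dirty-paper coding against the state that receiver~1 cannot see. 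Computing the effective signal and noise powers at $Y$ and $Z$ turns the four terms of \eqref{eq:WithSnEq} into: $I(X_1U;Y)-I(U;S|X_1)$ equal to the first right-hand side $A$ of \eqref{eq:capawithSn3}, $I(X_1X_2;Z|S)$ equal to the sum right-hand side $C$, $I(X_2;Z|SX_1)=\tfrac12\log(1+P_2'+P_2'')$, and $I(X_2;Z|X_1US)=\tfrac12\log(1+P_2'')$ (since conditioning on $U,S$ fixes $X_2'$, only $X_2''$ carries $W_2$ through the joint decoding at receiver~2). One checks that the three-bound region \eqref{eq:capawithSn3} is contained in this evaluated four-bound region for the same parameters, so it is achievable.

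For the converse I would invoke Theorem \ref{thr:WithSn}: any achievable pair lies in \eqref{eq:WithSnEq} for some input law. Fixing such a law and setting $\rho_{21}=E[X_1X_2]/\sqrt{P_1P_2}$, $\rho_{2s}=E[X_2S]/\sqrt{P_2Q}$ (using $X_1\perp S$), I would apply conditional maximum-entropy bounds. For receiver~2 the state sits in the conditioning, so $I(X_1X_2;Z|S)$ and $I(X_2;Z|SX_1)$ are maximized by Gaussian inputs, producing $C$ and $\tfrac12\log(1+(1-\rho_{21}^2-\rho_{2s}^2)P_2)$. For receiver~1 the quantity $I(X_1U;Y)-I(U;S|X_1)$ is a Gel'fand--Pinsker/dirty-paper term, for which the entropy-power inequality (as in \cite{Costa83}) forces a Gaussian $U$ with the Costa coefficient and yields $A$, once $P_2''$ is defined through $I(X_2;Z|X_1US)=\tfrac12\log(1+P_2'')$ and $P_2'=(1-\rho_{21}^2-\rho_{2s}^2)P_2-P_2''$. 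This reduces the capacity region to the union over $(\rho_{21},\rho_{2s},P_2',P_2'')$ of $R_1\le A$, $R_2\le\tfrac12\log(1+P_2'+P_2'')$, $R_1+R_2\le C$, and $R_1+R_2\le A+\tfrac12\log(1+P_2'')$.

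It remains to equate this four-bound union with \eqref{eq:capawithSn3}. One inclusion is immediate, since under \eqref{eq:capawithSn3} the bounds $R_1\le A$ and $R_2\le\tfrac12\log(1+P_2'')$ already force the two extra four-bound constraints. For the reverse inclusion I would keep $\rho_{21},\rho_{2s}$ fixed (so $A$'s first term, $C$, and the total $P_2'+P_2''$ are unchanged) and, given a four-bound point, redistribute the split by $\tilde P_2''=2^{2R_2}-1\le P_2'+P_2''$ and $\tilde P_2'=(P_2'+P_2'')-\tilde P_2''\ge 0$; this makes the new $R_2$-bound tight and preserves $R_1+R_2\le C$, leaving only $R_1\le\tilde A$ to verify. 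Using $R_1\le A+\tfrac12\log(1+P_2'')-R_2$ and cancelling the common numerator factor $a^2(P_2'+P_2'')+1$, this collapses to the monotonicity of $g(x)=(1+x)/(a^2x+1)$, and $g'(x)=(1-a^2)/(a^2x+1)^2\ge 0$ exactly when $|a|\le 1$, closing the argument. I expect the main obstacle to be the converse's maximum-entropy/entropy-power step for the dirty-paper term $I(X_1U;Y)-I(U;S|X_1)$ with a general auxiliary $U$, where Gaussian optimality is genuinely nontrivial; the re-parametrization, though it is where $|a|\le 1$ is used, is then only a short monotonicity computation.
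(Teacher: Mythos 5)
Your achievability argument is essentially the paper's: the same jointly Gaussian choice of $(X_1,X_2',X_2'',S)$ and the same dirty-paper auxiliary $U=X_2'+\alpha\bigl(1+a\rho_{2s}\sqrt{P_2/Q}\bigr)S$ with $\alpha=a^2P_2'/(a^2P_2'+a^2P_2''+1)$, evaluated in (a sub-region of) \eqref{eq:WithSnEq}; that part is fine. The converse, however, has a genuine gap at exactly the step you yourself flag as nontrivial, and the paper does not close it the way you propose. You start from the single-letter region \eqref{eq:WithSnEq} and assert that a Gaussian $U$ with the Costa coefficient simultaneously maximizes the Gel'fand--Pinsker term $I(X_1U;Y)-I(U;S|X_1)$ \emph{and} makes $I(X_2;Z|X_1US)=\tfrac12\log(1+P_2'')$ for the same $P_2''$. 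No such joint Gaussian-extremality statement is proved (or needed) in the paper, and establishing it for an arbitrary auxiliary $U$ is the whole difficulty: the two terms live at different receivers, and nothing in your outline ties the $P_2''$ you define through receiver 2's mutual information to the conditional entropy $h(Y|X_1US)$ that controls receiver 1's bound. Even granting term-by-term Gaussian extremality, the maximizers could differ across terms, so the four-bound region you write down is not justified.

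The paper's converse instead works with $n$-letter quantities (with $U_i=(W_1,S_{i+1}^n,X_1^n,Y^{i-1})$), defines $P_2''$ by an intermediate-value argument from the sandwich $\tfrac12\log 2\pi e\le\tfrac1n\sum_ih(Y_i|X_{1i}U_iS_i)\le\tfrac12\log 2\pi e\bigl(1+a^2(1-\rho_{21}^2-\rho_{2s}^2)P_2\bigr)$, setting $\tfrac1n\sum_ih(Y_i|X_{1i}U_iS_i)=\tfrac12\log 2\pi e(1+a^2P_2'')$, and then transfers this to receiver 2 via the \emph{conditional entropy power inequality} applied to the degraded representation $Y_i=aZ_i+(1-ab)X_{1i}+(1-ac)S_i+N_i'$ with $N_i'\sim\mathcal N(0,1-a^2)$ --- which exists precisely because $|a|\le1$. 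That EPI step yields $h(Z_i|U_iS_iX_{1i})\le\tfrac12\log 2\pi e(1+P_2'')$ and hence $R_2\le\tfrac12\log(1+P_2'')$ with the \emph{same} $P_2''$ as in the $R_1$ bound; this coupling is the essential mechanism your proposal is missing. (The hypothesis $|a|\le1$ is also used earlier, via condition \eqref{eq:cond1}, to obtain the single-letterized bound $nR_2\le\sum_iI(X_{2i};Z_i|U_iX_{1i}S_i)$ in the first place.) With that route the region comes out directly in the three-bound form, so your final re-parametrization is unnecessary; note also that, as written, your monotonicity step needs a case split, since $g(P_2'')\le g(\tilde P_2'')$ for $g(x)=(1+x)/(a^2x+1)$ requires $P_2''\le\tilde P_2''$, and when $\tilde P_2''<P_2''$ you should instead conclude via $R_1\le A\le\tilde A$. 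I recommend replacing the Gaussian-optimality claim with the sandwich-plus-EPI argument.
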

\begin{proof}

Consider the following rate region, which consists of rate pairs $(R_1,R_2)$ satisfying
\begin{flalign}
R_1 \leqslant & I(X_1U;Y) - I(U;S|X_1)\nn\\
R_2 \leqslant & I(X_2;Z|UX_1S)\nn \\
R_1+R_2 \leqslant & I(X_1X_2;Z|S)\label{eq:DegrConvWithSn}
\end{flalign}
for some distribution $P_{SX_1UX_2YZ} =P_{X_1}P_{S}P_{UX_2|X_1S}P_{Z|X_1X_2S}P_{Y|ZX_1S}$. This region is contained in \eqref{eq:WithSnEq}, and is hence achievable. This can be seen by observing that $I(X_2;Z|UX_1S) \leqslant I(X_2U;Z|X_1S)$ and the second sum rate bound in \eqref{eq:WithSnEq} is equal to the sum of the two bounds on the individual rates in \eqref{eq:DegrConvWithSn}.




The achievability of \eqref{eq:capawithSn3} is then obtained by choosing the following jointly Gaussian distribution for the random variables:
\begin{flalign}
      &X_1 \sim \mathcal{N}(0,P_1),\quad X_2' \sim \mathcal{N}(0,P_2'), \quad X_2'' \sim \mathcal{N}(0,P_2''), \quad P_2' + P_2'' = (1-\rho_{21}^2-\rho_{2s}^2)P_2\nn\\
      &X_2 = \rho_{21} \sqrt{\frac{P_2}{P_1}}X_1+ X_2'+X_2''+ \rho_{2s} \sqrt{\frac{P_2}{Q}}S\nn\\
      &U = X_2' + \alpha \left(1 + a\rho_{2s}\sqrt{\frac{P_2}{Q}}\right)S
\end{flalign}
where $X_1$, $X_2'$ , $X_2''$ and $S$ are independent, and $\alpha = \frac{a^2P_2'}{a^2P_2' + a^2P_2'' + 1}$. Here, the auxiliary random variable $U$ is designed based on dirty paper coding to deal with the state for receiver 1.

The converse proof is detailed in Appendix \ref{apx:ProofGauWithSn1Con}.
\end{proof}

We next characterize the capacity region for the Gaussian channel with $|a| > 1$.
\begin{theorem}\label{thr:GauWithSn2Con}(Capacity)
  For the Gaussian cognitive interference channel with state noncausally known at transmitter 2 and receiver 2, if $|a| > 1$, the capacity region consists of rate pairs $(R_1,R_2)$ satisfying:
\begin{small}
  \begin{flalign}
    R_2 \leqslant &\frac{1}{2} \log(1+(1-\rho_{21}^2-\rho_{2s}^2)P_2)\nn\\
    R_1+R_2 \leqslant &\frac{1}{2} \log(1+b^2P_1+2b\rho_{21}\sqrt{P_1P_2}+(1-\rho_{2s}^2)P_2) \nn\\
    R_1+R_2 \leqslant &\frac{1}{2} \log\left(1+\frac{P_1+2a\rho_{21}\sqrt{P_1P_2}+a^2\rho_{21}^2P_2}{a^2(1-\rho_{21}^2)P_2+2a\rho_{2s}\sqrt{P_2Q}+Q+1}\right)+\frac{1}{2}\log(1+a^2 (1-\rho_{2s}^2-\rho_{21}^2) P_2)
  \end{flalign}
  \end{small}
where $\rho_{21}^2+\rho_{2s}^2 \leqslant 1$.
\end{theorem}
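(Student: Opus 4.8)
The plan is to prove the two inclusions separately: I would obtain achievability by specializing the already-established discrete-memoryless capacity of Theorem \ref{thr:WithSn} to a Gaussian dirty-paper input, and prove the converse directly on the Gaussian channel, partitioning the three bounds according to which receiver produces them.

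For achievability, I would evaluate the region \eqref{eq:WithSnEq} at the jointly Gaussian choice $X_1\sim\mathcal{N}(0,P_1)$, $X_2'\sim\mathcal{N}(0,(1-\rho_{21}^2-\rho_{2s}^2)P_2)$, $X_2=\rho_{21}\sqrt{P_2/P_1}\,X_1+X_2'+\rho_{2s}\sqrt{P_2/Q}\,S$, and $U=X_2'+\alpha\big(1+a\rho_{2s}\sqrt{P_2/Q}\big)S$ with the Costa coefficient $\alpha=\frac{a^2(1-\rho_{21}^2-\rho_{2s}^2)P_2}{a^2(1-\rho_{21}^2-\rho_{2s}^2)P_2+1}$, so that $U$ cancels the effective state $\big(1+a\rho_{2s}\sqrt{P_2/Q}\big)S$ seen by receiver 1 after $X_1$ is stripped off. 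Two observations then collapse the four bounds of \eqref{eq:WithSnEq} to the three claimed ones. First, since $(X_1,U,S)$ jointly determine $X_2$, we have $I(X_2;Z|X_1US)=0$, so the second sum-rate bound reduces to $I(X_1U;Y)-I(U;S|X_1)$, which by the standard dirty-paper identity equals $I(X_1;Y)+\tfrac12\log\!\big(1+a^2(1-\rho_{21}^2-\rho_{2s}^2)P_2\big)$, exactly the third claimed bound. Second, the individual bound $R_1\le I(X_1U;Y)-I(U;S|X_1)$ is then implied by this sum-rate bound because $R_2\ge 0$, so it drops out. The two remaining bounds $I(X_2;Z|SX_1)$ and $I(X_1X_2;Z|S)$ evaluate to the first two claimed expressions by a direct covariance computation, receiver 2 simply subtracting $cS$.

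For the converse, the first two bounds come from receiver 2, which knows the state. Giving $W_1$ as a genie and using $W_2\perp(W_1,S^n)$, Fano yields $nR_2\le I(W_2;Z^n|W_1S^n)+n\epsilon_n$ and $n(R_1+R_2)\le I(W_1W_2;Z^n|S^n)+n\epsilon_n$; these are the standard multiple-access-with-state-at-the-receiver bounds, and a maximum-entropy argument over the realized input covariance $(\rho_{21},\rho_{2s})$ produces the first two claimed expressions. The delicate bound is the receiver-1 sum rate. Here I would exploit the degradedness \eqref{eq:cond2}, which holds for $|a|>1$: since $Z^n$ is a degraded version of $Y^n$ given $(X_1^n,S^n)$ and receiver 2 decodes $W_2$ from $(Z^n,S^n,W_1)$, receiver 1 aided by the genie $(W_1,S^n)$ can also decode $W_2$, giving $I(W_2;Z^n|W_1S^n)\le I(W_2;Y^n|W_1S^n)$. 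Combined with $nR_1\le I(W_1;Y^n)+n\epsilon_n$ (no state genie) this gives
\[
n(R_1+R_2)\le I(W_1;Y^n)+I(W_2;Y^n|W_1S^n)+2n\epsilon_n,
\]
whose right side expands to $h(Y^n)-h(Y^n|W_1)+h(Y^n|W_1S^n)-nh(N_1)+2n\epsilon_n$. The asymmetry here, with the state treated as noise in the first layer and as known in the second, is precisely the ``specially designed state knowledge'' that makes the bound match the dirty-paper rate rather than the looser full-state bound $I(X_1X_2;Y|S)$.

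The main obstacle will be single-letterizing this expression into $n\,[I(X_1;Y)+I(X_2;Y|X_1S)]$ and establishing Gaussian optimality. The naive step of adding or removing the conditioning on $S^n$ in $-h(Y^n|W_1)+h(Y^n|W_1S^n)=-I(S^n;Y^n|W_1)$ fails directionally, so I would instead introduce a Gel'fand--Pinsker-type auxiliary of the form $U_i=(W_1,S_{i+1}^n,Y^{i-1})$ and apply the Csisz\'ar sum identity to convert the state-entropy difference into single-letter terms $-I(S_i;Y_i|X_{1i})$, after which the (conditional) entropy power inequality together with the power constraints shows that the jointly Gaussian input with the dirty-paper auxiliary simultaneously maximizes all three functionals for a common $(\rho_{21},\rho_{2s})$. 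Reconciling the two conditioning structures while keeping the covariance parametrization consistent across the three bounds is where the real work lies; the rest is covariance bookkeeping.
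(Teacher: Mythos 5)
Your overall architecture---achievability via Theorem \ref{thr:WithSn} with the dirty-paper Gaussian choice, and a converse built from the two receiver-2 bounds plus a receiver-1 sum-rate bound obtained by degrading $Z$ to $Y$ given $(X_1,S)$---is exactly the paper's (Lemma \ref{lm:DegrConvWithSn2} with Appendices \ref{apx:ProofDMCGauWithSn2} and \ref{apx:ProofGauWithSn2Con}). The achievability half, including the observation that $X_2$ is a deterministic function of $(X_1,U,S)$ so the fourth bound of \eqref{eq:WithSnEq} collapses to $I(X_1U;Y)-I(U;S|X_1)$ and renders the individual $R_1$ bound redundant, is correct and matches the paper.

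The gap is in the converse, precisely at the step you defer as ``where the real work lies,'' and your diagnosis of the difficulty there is wrong. The quantity $-h(Y^n|W_1)+h(Y^n|W_1S^n)=-I(S^n;Y^n|W_1)$ does \emph{not} fail directionally: writing $I(S^n;Y^n|W_1)=\sum_i\left[H(S_i)-H(S_i|S_{i+1}^nY^nW_1)\right]$, using $H(S_i)=H(S_i|X_{1i})$ (independence of $S$ and $X_1$) and $H(S_i|S_{i+1}^nY^nW_1)\leqslant H(S_i|Y_iX_{1i})$ (conditioning reduces entropy, and $X_{1i}$ is a function of $W_1$), one gets $I(S^n;Y^n|W_1)\geqslant\sum_i I(S_i;Y_i|X_{1i})$, which is exactly the direction needed to upper-bound the sum rate. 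This is how the paper single-letterizes \eqref{eq:trdm2r122} into $\sum_i[I(X_{1i};Y_i)+I(X_{2i};Y_i|S_iX_{1i})]$---with no auxiliary random variable, no Csisz\'ar sum identity, and no entropy power inequality; the EPI is needed only for the $|a|\leqslant 1$ case of Theorem \ref{thr:GauWithSn1Con}, where one must relate $h(Y_i|U_iS_iX_{1i})$ to $h(Z_i|U_iS_iX_{1i})$. Your detour through $U_i=(W_1,S_{i+1}^n,Y^{i-1})$ is therefore not only unnecessary but counterproductive: it would leave you with an auxiliary-dependent single-letter expression that you would still have to collapse to the auxiliary-free target $I(X_1;Y)+I(X_2;Y|X_1S)$. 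Once that target is reached, the Gaussian evaluation is just $h(Y)-h(S)+h(S|X_1Y)-h(N_1)$, each term bounded by maximum-entropy and linear-MMSE arguments; consistency of $(\rho_{21},\rho_{2s})$ across the three bounds is automatic because they are defined as the empirical correlation coefficients of the code, not optimized separately per bound.
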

\begin{proof}
  The achievability follows from \eqref{eq:WithSnEq} by choosing jointly Gaussian distribution for random variables as follows:
    \begin{flalign}
      &X_1 \sim \mathcal{N}(0,P_1),\quad X_2' \sim \mathcal{N}(0, (1-\rho_{21}^2-\rho_{2s}^2)P_2)\nn\\
      &X_2 = \rho_{21} \sqrt{\frac{P_2}{P_1}}X_1+ X_2'+ \rho_{2s} \sqrt{\frac{P_2}{Q}}S\nn\\
      &U = X_2' + \alpha \left(1 + a\rho_{2s}\sqrt{\frac{P_2}{Q}}\right)S
    \end{flalign}
where $X_1$, $X_2'$ and $S$ are independent, and $\alpha =\frac{a^2(1-\rho_{21}^2 - \rho_{2s}^2)P_2}{a^2(1-\rho_{21}^2 - \rho_{2s}^2)P_2+1}$. We note that with this choice of the random variables, the first bound in \eqref{eq:WithSnEq} is redundant.

In order to prove the converse for Theorem \ref{thr:GauWithSn2Con}, we first prove the following outer bound.
\begin{lemma}\label{lm:DegrConvWithSn2}
For the cognitive interference channel with state noncausally known at both transmitter 2 and receiver 2, if it satisfies the condition \eqref{eq:cond2}, an outer bound on the capacity region consists of rate pairs $(R_1,R_2)$ satisfying
\begin{flalign}
    R_2 & \leqslant I(X_2;Z|S X_1)\nn\\
    R_1 + R_2 & \leqslant I(X_1 X_2 ;Z|S) \nn\\
    R_1 + R_2 & \leqslant  I(X_1;Y) + I(X_2;Y|S X_1)\label{eq:outerWithSn2}
\end{flalign}
for some distribution $P_{SX_1UX_2YZ} =P_{X_1}P_{S}P_{UX_2|X_1S}P_{Y|X_1X_2S}P_{Z|YX_1S}$.
\end{lemma}

The proof for the above lemma is detailed in Appendix \ref{apx:ProofDMCGauWithSn2}. For the Gaussian channel with $|a| > 1$, it satisfies the condition \eqref{eq:cond2}. We then use the above lemma for developing the converse proof, which is detailed in Appendix \ref{apx:ProofGauWithSn2Con}.
\end{proof}

\section{Conclusion}\label{sec:conclusion}

In this paper, we performed a comprehensive study of a class of cognitive interference channels, which are corrupted by i.i.d. state sequences. We studied two cases, in which the state sequence is assumed to be noncausally known at transmitter 2, and at both transmitter 2 and receiver 2, respectively. We characterize inner and outer bounds on the capacity region for discrete memoryless and Gaussian channels. Our inner bounds are based on the Gel'fand-Pinsker scheme/dirty paper coding, rate splitting and superposition coding. Our outer bounds are constructed to match the inner bounds as much as possible. Based on these inner and outer bounds, we characterized the partial or full capacity regions for various channels.

In particular, we anticipate that our technique of characterizing partial boundary of the capacity region for the Gaussian channel may be applicable for other Gaussian network models. Furthermore, our characterization of the capacity region for the case with state known at both transmitter 2 and receiver 2 applies the technique developed recently in \cite{Lapidoth11} for proving equivalence of inner and outer bounds characterized by different sets of auxiliary random variables. Such a technique may also be useful for other network models.

\vspace{10mm}

\appendix

\noindent {\Large \textbf{Appendix}}

\section{Proof of Lemma \ref{NaiAchv}}\label{ProofNaiAch}
The achievable scheme applies rate splitting, superposition coding and Gel'fand-Pinsker binning scheme.
We use random codes and fix the following joint distribution:
\[P_{SX_1TUVX_2YZ}= P_{X_1}P_SP_{T|X_1S}P_{U|X_1TS}P_{V|TUX_1S}P_{X_2|TUVX_1S}P_{YZ|X_1X_2S}.\]
Let $T_\epsilon^n(P_{SX_1TUVX_2YZ})$ denote the strongly joint $\epsilon$-typical set based on the above distribution. For a given sequence $x^n$, let $T_\epsilon^n(P_{U|X}|x^n)$ denote the set of sequences $u^n$ such that $(u^n, x^n)$ is jointly typical based on the distribution $P_{XU}$.

\textit{Code Construction:}
\begin{list}{$$}{\topsep=0.ex \leftmargin=1cm
\rightmargin=0.in \itemsep =0.in}
\item[1.] Generate $2^{nR_1}$ codewords $x_1^n(w_1)$ with i.i.d.\ components based on $P_{X_1}$. Index these codewords by $w_1 = 1, \dotsi, 2^{nR_1}$. 

\item[2.] For each $x_1^n(w_1)$, generate $t^n(w_1,v_1)$ with i.i.d.\ components based on $P_{T|X_1}$. Index these codewords by $v_1=1, \dotsi , 2^{n\tilde{R}_1}$.

\item[3.] For each $x_1^n(w_1)$ and $t^n(w_1,v_1)$, generate $u^n(w_1, v_1, w_{21}, v_{21})$ with i.i.d.\ components based on $P_{U|X_1T}$. Index these codewords by $w_{21} = 1,\dotsi, 2^{nR_{21}}$ and $v_{21} = 1, \dotsi, 2^{n\tilde{R}_{21}}$.

\item[4.] For each $x_1^n(w_1)$, $t^n(w_1,v_1)$, and $u^n(w_1, v_1, w_{21}, v_{21})$, generate $v^n(w_1, v_1, w_{21}, v_{21}, w_{22}, v_{22})$ with i.i.d.\ components based on $P_{V|X_1 TU}$. Index these codewords by $w_{22} = 1, \dotsi, 2^{nR_{22}}$ and $v_{22} = 1, \dotsi, 2^{n\tilde{R}_{22}}$.

\end{list}

\textit{Encoding:}
\begin{list}{$$}{\topsep=0.ex \leftmargin=1cm
\rightmargin=0.in \itemsep =0.in}

\item[1.] Encoder 1: Given $w_1$, map $w_1$ into $x_1^n(w_1)$ for transmission.
\item[2.] Encoder 2:
\begin{list}{$-$}{\topsep=0.ex \leftmargin=0.22in
\rightmargin=0.in \itemsep =0.in}

\item Given $w_1$, $x_1^n(w_1)$ and $s^n$, select $t^n(w_1, \tilde{v}_1)$ such that
\begin{equation*}
  (t^n(w_1, \tilde{v}_1), s^n, x_1^n(w_1)) \in T_{\epsilon}^n(P_{X_1}P_SP_{T|X_1S})\textrm{.}
\end{equation*}

 Otherwise, set $\tilde{v}_1=1$. It can be shown that for large $n$, such $t^n$ exists with high probability if
\begin{equation}\label{eq:OriAch1}
  \tilde{R_1} > I(T;S|X_1)\textrm{.}
\end{equation}

\item Given $w_{21}$ and selected $t^n(w_1, \tilde{v}_1)$, select $u^n(w_1, \tilde{v}_1, w_{21}, \tilde{v}_{21})$ such that

\begin{equation*}
  (u^n(w_1, \tilde{v}_1, w_{21}, \tilde{v}_{21}), t^n(w_1, \tilde{v}_1), s^n, x_1^n(w_1)) \in T_{\epsilon}^n(P_{X_1}P_SP_{T|X_1S}P_{U|X_1ST})\textrm{.}
\end{equation*}

Otherwise, set $\tilde{v}_{21} = 1$. It can be shown that for large $n$, such $u^n$ exists with high probability if
\begin{equation}\label{eq:OriAch2}
  \tilde{R}_{21} > I(U;S|X_1T)\textrm{.}
\end{equation}

\item Given $w_{22}$ and selected $u^n(w_1, \tilde{v}_1, w_{21}, \tilde{v}_{21})$, select $v^n(w_1, \tilde{v}_1, w_{21}, \tilde{v}_{21}, w_{22}, \tilde{v}_{22})$ such that
    \begin{equation*}
    \begin{split}
      & (v^n(w_1, \tilde{v}_1, w_{21}, \tilde{v}_{21}, w_{22}, \tilde{v}_{22}),u^n(w_1,\tilde{v}_1, w_{21}, \tilde{v}_{21}), t^n(w_1, \tilde{v}_1), s^n, x_1^n(w_1)) \\
      & \in T_{\epsilon}^n(P_{X_1}P_SP_{T|X_1S}P_{U|X_1ST}P_{V|UX_1ST})\textrm{.}
    \end{split}
    \end{equation*}

    Otherwise, set $\tilde{v}_{22}=1$. It can be shown that for large $n$, such $v^n$ exists with high probability if
    \begin{equation}
      \tilde{R}_{22} > I(V;S|UX_{1}T)\label{eq:OriAch3}\textrm{.}
    \end{equation}
\item Given selected $x_1^n(w_1)$, $t^n(w_1, \tilde{v}_{1})$, $u^n(w_1, \tilde{v}_{1}, w_{21}, \tilde{v}_{21})$, $v^n(w_1, \tilde{v}_{1}, w_{21}, \tilde{v}_{21}, w_{22}, \tilde{v}_{22})$ and $s^n$, generate $x_2^n$ with i.i.d.\ components based on $P_{X_2|TUVX_1S}$ for transmission. 
\end{list}
\end{list}


\textit{Decoding:}
\begin{list}{$$}{\topsep=0.ex \leftmargin=1cm
\rightmargin=0.in \itemsep =0.in}
\item[1] Decoder 1: Given $y^n$, find the unique tuple $(\hat{w}_1, \hat{v_1}, \hat{w}_{21}, \hat{v}_{21})$ such that
  \begin{equation*}
    (x_1^n(\hat{w}_1), t^n(\hat{w}_1, \hat{v}_1), u^n(\hat{w}_1, \hat{v}_1 , \hat{w}_{21}, \hat{v}_{21}), y^n) \in T_\epsilon^n(P_{X_1TUY}).
  \end{equation*}

If no or more than one such tuples with different $w_1$ can be found, then declare error. One can show that for sufficiently large $n$, decoding is correct with high probability if
    \begin{flalign}
      &R_1 + \tilde{R_1} + R_{21} + \tilde{R}_{21} \leqslant I(TUX_1;Y)\label{eq:OriAch4}
    \end{flalign}

    We note that since receiver 1 is not required to decode $W_{21}$ correctly by the channel model, the corresponding error events do not need to be analyzed.

\item[2.] Decoder 2: Given $z^n$, find a tuple $(\hat{w}_1, \hat{v_1}, \hat{w}_{21}, \hat{v}_{21}, \hat{w}_{22}, \hat{v}_{22})$ such that
\begin{flalign}
  (x_1^n(\hat{w}_1), t^n(\hat{w}_1, \hat{v}_1), u^n&(\hat{w}_1, \hat{v}_1, \hat{w}_{21}, \hat{v}_{21}), v^n(\hat{w}_1, \hat{v}_1, \hat{w}_{21}, \hat{v}_{21}, \hat{w}_{22}, \hat{v}_{22}), z^n) \nn\\
  &\in T_\epsilon^n(P_{X_1TUVZ})\textrm{.} \nn
\end{flalign}

If no or more than one such tuples can be found, then declare error. It can be shown that for sufficiently large $n$, decoding is correct with high probability if
    \begin{flalign}
      R_{22} + \tilde{R}_{22} \leqslant & I(V;Z|UX_1T)\label{eq:OriAch5}\\
      R_{21} + \tilde{R}_{21} + R_{22} + \tilde{R}_{22} \leqslant & I(UV;Z|X_1T)\\
      \tilde{R_1} + R_{21} + \tilde{R}_{21} + R_{22} + \tilde{R}_{22} \leqslant & I(T UV;Z|X_1)\label{eq:OriAch6}\\
      R_1 + \tilde{R_1} + R_{21} + \tilde{R}_{21} + R_{22} + \tilde{R}_{22} \leqslant & I(TUVX_1;Z)\label{eq:OriAch7}
    \end{flalign}
\end{list}

Lemma \ref{NaiAchv} is thus proved by combining \eqref{eq:OriAch1}-\eqref{eq:OriAch7}.

\section{Proof of Theorem \ref{thr:OrigAchConv}}\label{ProofOrigConv}
Consider a $(2^{nR_1}, 2^{nR_2},n)$ code with an average error probability $P_e^{(n)}$. The probability distribution on $\mathcal{W}_1 \times \mathcal{W}_2 \times \mathcal{S}^n
\times \mathcal{X}_1^n \times \mathcal{X}_2^n \times \mathcal{Y}^n \times \mathcal{Z}^n$ is given by
\begin{equation} \label{JointlyDistConv}
\begin{split}
P&_{W_1W_2S^nX_1^nX_2^nY^nZ^n}=P_{W_1}P_{W_2}\left[{\prod_{i=1}^n{P_{S_i}}}\right]P_{X_1^n|W_1}P_{X_2^n|W_1W_2S^n}\prod_{i=1}^n{P_{Y_iZ_i|X_{1i}X_{2i}S_i}}.
\end{split}
\end{equation}
By Fano's inequality, we have
\begin{flalign}
H(W_1|Y^n) &\leqslant nR_1P_e^{(n)}+1 = n\delta_{1n} \nn \\
H(W_1W_2|Z^n) &\leqslant n(R_1+R_2)P_e^{(n)}+1 = n\delta_{2n}
\end{flalign}
where $\delta_{1n}$, $\delta_{2n} \to 0$ as $n\to +\infty$. Let $\delta_{n} = \delta_{1n} + \delta_{2n}$, which also satisfies that $\delta_{n} \to 0$ as $n\to +\infty$.

We define the following auxiliary random variables:
\begin{flalign}
T_i &= (W_1, S_{i+1}^n, X_1^n)\nn\\
U_i &=(T_i, Y^{i-1})\nn\\
V_i &= (T_i, W_2, Z^{i-1})
\end{flalign}
which satisfy the Markov chain conditions:
\begin{equation}
T_i \longleftrightarrow U_iV_i \longleftrightarrow X_{1i}X_{2i}S_i \longleftrightarrow Y_iZ_i
\end{equation}
for $i=1, \dotsi, n$.

We first bound $R_1$ and obtain
\begin{flalign}
nR_1 & \leqslant I(W_1;Y^n) + n\delta_{n}\nn\\
= & \sum_{i=1}^n [I(W_1 S_{i+1}^n ; Y^i)-I(W_1 S_i^n ; Y^{i-1})] + n\delta_{n}\nn\\
= & \sum_{i=1}^n [I(W_1 S_{i+1}^n ; Y^{i-1}) + I(W_1 S_{i+1}^n ; Y_i|Y^{i-1}) \nn \\
& \quad\; - I(W_1 S_{i+1}^n ; Y^{i-1}) - I(S_i; Y^{i-1}|W_1 S_{i+1}^n)] + n\delta_{n}\nn\\
= & \sum_{i=1}^n [I(W_1 S_{i+1}^n ; Y_i|Y^{i-1}) - I(S_i; Y^{i-1}|W_1 S_{i+1}^n)] + n\delta_{n}\nn\\
= & \sum_{i=1}^n [H(Y_i|Y^{i-1}) - H(Y_i|W_1 S_{i+1}^n Y^{i-1}) - H(S_i|W_1 S_{i+1}^n) + H(S_i|W_1 S_{i+1}^n Y^{i-1})] + n\delta_{n}\nn\\
\overset{(a)}{\leqslant} & \sum_{i=1}^n [H(Y_i) - H(Y_i|W_1 S_{i+1}^n Y^{i-1} X_1^n ) -(H(S_i|X_{1i}) + H(S_i|W_1 S_{i+1}^n Y^{i-1} X_1^n))] + n\delta_{n}\nn\\
\leqslant & \sum_{i=1}^n [I(T_i U_i X_{1i};Y_i) - I(T_i U_i;S_i|X_{1i})] + n\delta_{n}\label{eq:OrigConv1}
\end{flalign}
where $(a)$ follows because $X_1^n$ is a function of $W_1$.


We next bound $R_2$ and obtain
\begin{flalign}
nR_2 = & I(W_2;Z^n) + n\delta_n\leqslant I(W_2;Z^n|W_1) + n\delta_n\nn\\
= & \sum_{i=1}^n[I(W_2 S_{i+1}^n ; Z^i|W_1)-I(W_2 S_i^n;Z^{i-1}|W_1)] + n\delta_n\nn\\
= & \sum_{i=1}^n [I(W_2 S_{i+1}^n ; Z^{i-1}|W_1) + I(W_2 S_{i+1}^n ; Z_i|W_1 Z^{i-1})\nn\\
& \quad\; - I(W_2 S_{i+1}^n ; Z^{i-1}|W_1) - I(S_i; Z^{i-1}|W_1 W_2 S_{i+1}^n)] + n\delta_n\nn\\
= & \sum_{i=1}^n [I(W_2 S_{i+1}^n ; Z_i|W_1 Z^{i-1}) - I(S_i ; Z^{i-1}|W_1 W_2 S_{i+1}^n)] + n\delta_n\nn\\
= & \sum_{i=1}^n [H(Z_i|W_1 Z^{i-1}) - H(Z_i|W_1 W_2 S_{i+1}^n Z^{i-1})\nn\\
& \quad\; - H(S_i|W_1 W_2 S_{i+1}^n) + H(S_i|W_1 W_2 S_{i+1}^n Z^{i-1})] + n\delta_n\label{eq:conOriAch21}\\
= & \sum_{i=1}^n [H(Z_i|W_1 Z^{i-1} X_{1i}) - H(Z_i|W_1 W_2 S_{i+1}^n X_1^n Z^{i-1})\nn\\
& \quad\; - H(S_i|W_1 W_2 S_{i+1}^n X_{1i}) + H(S_i|W_1 W_2 S_{i+1}^n X_1^n Z^{i-1})] + n\delta_n\nn\\
\leqslant & \sum_{i=1}^n [H(Z_i|X_{1i}) - H(Z_i|X_{1i}T_i V_i)- H(S_i|X_{1i}) + H(S_i|X_{1i} T_i V_i)] + n\delta_n\nn\\
= & \sum_{i=1}^n [I(T_i V_i;Z_i|X_i) - I(T_i V_i;S_i|X_{1i})] + n\delta_n \textrm{.}\label{eq:OrigConv2}
\end{flalign}

We then bound the sum rate $R_1+R_2$ as follows.
\begin{flalign}
&n(R_1 + R_2)\nn\\
= & I(W_1W _2;Z^n) + n\delta_n\\
= & \sum_{i=1}^n [I(W_1 W_2 S_{i+1}^n ; Z^i)-I(W_1 W_2 S_i^n ; Z^{i-1})] + n\delta_n\nn\\
= & \sum_{i=1}^n [I(W_1 W_2 S_{i+1}^n ; Z^{i-1}) + I(W_1 W_2 S_{i+1}^n ; Z_i|Z^{i-1}) \nn\\
& \quad\; - I(W_1 W_2 S_{i+1}^n ; Z^{i-1}) - I(S_i; Z^{i-1}|W_1 W_2 S_{i+1}^n)] + n\delta_n\nn\\
= & \sum_{i=1}^n [I(W_1W_2S_{i+1}^n ;Z_i|Z^{i-1}) - I(S_i;Z^{i-1}|S_{i+1}^n W_1 W_2)] + n\delta_n\nn\\
= & \sum_{i=1}^n [H(Z_i|Z^{i-1}) - H(Z_i|W_1 W_2 S_{i+1}^n Z^{i-1}) \nn \\
& \quad\; - H(S_i|S_{i+1}^n W_1 W_2) + H(S_i|S_{i+1}^n W_1 W_2 Z^{i-1})] + n\delta_n\nn\\
\leqslant & \sum_{i=1}^n [H(Z_i) - H(Z_i|W_1 W_2 S_{i+1}^n X_1^n Z^{i-1}) - H(S_i|X_{1i}) + H(S_i|W_1 W_2 S_{i+1}^n X_1^n Z^{i-1})] + n\delta_n\nn\\
= & \sum_{i=1}^n [I(X_{1i} T_i V_i;Z_i) - I(T_i V_i;S_i|X_{1i})] + n\delta_n\label{OrigConv121}
\end{flalign}

\section{Proof of the Outer Bound for Theorem \ref{DegrConv}}\label{pr:DegrConv}

We define the following auxiliary random variables:
\begin{flalign}
T_i &= (W_1, S_{i+1}^n, X_1^n, Y^{i-1})\nn\\
V_i &= (T_i, W_2, Z^{i-1})
\end{flalign}
which satisfy the Markov chain conditions:
\begin{equation}
T_i \longleftrightarrow V_i \longleftrightarrow X_{1i}X_{2i}S_i \longleftrightarrow Y_i\longleftrightarrow Z_i
\end{equation}
for $i=1, \dotsi, n$.

By following the step similar to those in \eqref{eq:OrigConv1}, we obtain the following bound on $R_1$:
\begin{flalign}
nR_1 \leqslant  \sum_{i=1}^n [I(T_i X_{1i};Y_i) - I(T_i;S_i|X_{1i})] + n\delta_{n}\textrm{.}\label{eq:r1degraded}
\end{flalign}

We next derive a bound on $R_2$ by continuing to derive the bound \eqref{eq:conOriAch21} as follows:
\begin{flalign}
nR_2 \leqslant& \sum_{i=1}^n [H(Z_i|W_1 Z^{i-1}) - H(Z_i|W_1 W_2 S_{i+1}^n Z^{i-1})\nn\\
& - H(S_i|W_1 W_2 S_{i+1}^n) + H(S_i|W_1 W_2 S_{i+1}^n Z^{i-1})] + n\delta_n\nn\\
= & \sum_{i=1}^n [H(Z_i|W_1 Z^{i-1} X_{1i}) - H(Z_i|W_1 W_2 S_{i+1}^n X_1^n Y^{i-1} Z^{i-1})\nn\\
& - H(S_i|W_1 W_2 S_{i+1}^n X_{1i}) + H(S_i|W_1 W_2 S_{i+1}^n X_1^n Y^{i-1} Z^{i-1})] + n\delta_n\nn\\
\leqslant & \sum_{i=1}^n [H(Z_i|X_{1i}) - H(Z_i|X_{1i}T_i V_i)- H(S_i|X_{1i}) + H(S_i|X_{1i} T_i V_i)] + n\delta_n\nn\\
= & \sum_{i=1}^n [I(T_i V_i;Z_i|X_i) - I(T_i V_i;S_i|X_{1i})] + n\delta_n \textrm{.}
\end{flalign}

\section{Proof of the Converse for Theorem \ref{DetmDegrConv}}\label{ProofDetmDegrConv}

We define the auxiliary random variable $T_i=(W_1 S_{i+1}^n X_1^n Y^{i-1} )$, which satisfies the Markov chain:
\begin{equation}
T_i \leftrightarrow X_{1i}X_{2i}S_i \leftrightarrow Y_iZ_i, \quad \text{for  } i=1, \dotsi, n.
\end{equation}

Following \eqref{eq:r1degraded}, we obtain
\begin{flalign}
nR_1 \leqslant  \sum_{i=1}^n [I(T_i X_{1i};Y_i) - I(T_i;S_i|X_{1i})] + n\delta_{n}\textrm{.}\nn
\end{flalign}

%

We next bound $R_2$ as follows.
\begin{flalign}
nR_2 = &I(W_2;Z^n) + n\delta_n \nn\\
\leqslant &I(W_2;Z^n|W_1 S^n X_1^n) + n\delta_n\nn\\
= & \sum_{i=1}^n [I(W_2;Z_i|W_1 S^n X_1^n Z^{i-1})] + n\delta_n\nn\\
\leqslant & \sum_{i=1}^n H(Z_i|W_1 S^n X_1^n Z^{i-1})+ n\delta_n\nn\\
\overset{(a)}{=} & \sum_{i=1}^n H(Z_i|W_1 S^n  X_{1}^n Y^{i-1} Z^{i-1})+ n\delta_n\nn\\
\leqslant & \sum_{i=1}^n H(Z_i|W_1 S_{i+1}^n  X_{1}^n Y^{i-1} S_i)+ n\delta_n\nn\\
\leqslant & \sum_{i=1}^n H(Z_i|X_{1i} T_i S_i)+ n\delta_n\label{eq:convDegrR21}
\end{flalign}
where $(a)$ follows due to the degradedness condition \eqref{eq:cond5}.

We then derive another bound on $R_2$ by continuing to derive the bound \eqref{eq:conOriAch21} as follows:
\begin{equation}
\begin{split} \nonumber
nR_2
\leqslant & \sum_{i=1}^n [H(Z_i|W_1 Z^{i-1}) - H(Z_i|W_1 W_2 S_{i+1}^n Z^{i-1})\\
& - H(S_i|W_1 W_2 S_{i+1}^n) + H(S_i|W_1 W_2 S_{i+1}^n Z^{i-1})] + n\delta_n\\
= & \sum_{i=1}^n [H(Z_i|W_1 X_1^n Z^{i-1}) - H(S_i|W_1 W_2 X_1^n S_{i+1}^n)+ H(S_i|W_1 W_2 X_1^n S_{i+1}^n Y^{i-1} Z^{i-1} Z_i)\\
 & + I(Z_i;S_i|W_1 W_2 S_{i+1}^n Z^{i-1}) - H(Z_i|W_1 W_2 S_{i+1}^n Z^{i-1})] + n\delta_n\\
\leqslant & \sum_{i=1}^n [H(Z_i|X_{1i}) - H(S_i|X_{1i}) + H(S_i|X_{1i} T_i Z_i)] + n\delta_n\\
= & \sum_{i=1}^n [H(Z_i|X_{1i}) - I(T_i Z_i;S_i|X_{1i})] + n\delta_n \textrm{.}
\end{split}
\end{equation}
%


\section{Proof of the Outer Bound \eqref{eq:outerbothstate}}\label{pr:ConWithSn}

Consider a $(2^{nR_1}, 2^{nR_2},n)$ code with an average error probability $P_e^{(n)}$. The probability distribution on $\mathcal{W}_1 \times \mathcal{W}_2 \times \mathcal{S}^n
\times \mathcal{X}_1^n \times \mathcal{X}_2^n \times \mathcal{Y}^n \times \mathcal{Z}^n$ is given by
\begin{equation}
P_{W_1W_2S^nX_1^nX_2^nY^nZ^n}=P_{W_1}P_{W_2}\left[{\prod_{i=1}^n{P_{S_i}}}\right]P_{X_1^n|W_1}P_{X_2^n|W_1W_2S^n}\prod_{i=1}^n{P_{Y_iZ_i|X_{1i}X_{2i}S_i}}.
\end{equation}
By Fano's inequality, we have
\begin{flalign}
H(W_1|Y^n) & \leqslant nR_1P_e^{(n)}+1 = n\delta_{1n} \nn\\
H(W_1W_2|S^n Z^n) & \leqslant n(R_1+R_2)P_e^{(n)}+1 = n\delta_{2n}
\end{flalign}
where $\delta_{1n}$, $\delta_{2n} \to 0$ as $n\to +\infty$. Let $\delta_{n} = \delta_{1n} + \delta_{2n}$, which also satisfies that $\delta_{n} \to 0$ as $n\to +\infty$.

We define the following auxiliary random variables:
\begin{flalign}
K_i &=(W_1, S_{i+1}^n, X_1^n, Y^{i-1} )\nn\\
T_i &= Z_{i+1}^n
\end{flalign}
which satisfies the Markov chain condition:
\begin{equation}
K_iT_i \leftrightarrow X_{1i}X_{2i}S_i \leftrightarrow Y_iZ_i
\end{equation}
for $i=1, \dotsi, n$.

The following bound on $R_1$ follows the same steps as in \eqref{eq:OrigConv1} in Appendix \ref{ProofOrigConv}, and we have
\begin{equation}
nR_1 \leqslant \sum_{i=1}^n [I(K_iX_{1i};Y_i) - I(K_i;S_i|X_{1i})] + n\delta_{n}.\label{eq:dmr1}
\end{equation}

We next bound $R_2$ and obtain
\begin{flalign}
nR_2 & \leqslant I(W_2;Z^n S^n) + n\delta_{n} \leqslant I(W_2;Z^nS^n W_1) + n\delta_{n} \nn\\
& \leqslant I(W_2;Z^n| W_1 S^n) + n\delta_{n}\nn\\
&= \sum_{i=1}^n I(W_2 ; Z_i|Z_{i+1}^n S^n W_1 X_1^n) + n\delta_{n}\nn\\
&\leqslant \sum_{i=1}^n [H(Z_i|S_i X_{1i}) - H(Z_i|W_2 Z_{i+1}^n S^n W_1 X_1^n X_{2i})] + n\delta_{n}\nn\\
&\leqslant \sum_{i=1}^n [H(Z_i|S_i X_{1i}) - H(Z_i|S_i X_{1i}X_{2i})] + n\delta_{n}\nn\\
&= \sum_{i=1}^n I(X_{2i};Z_i|S_i X_{1i})+ n\delta_{n}. \label{Eq:OuterWiSnR2}
\end{flalign}

We further bound $R_1 + R_2$ as follows:
\begin{flalign}
n(R_1 + R_2) & \leqslant I(W_1 W_2;Z^n S^n) + n\delta_{n}\nn\\
&= \sum_{i=1}^n I(W_2 W_1; Z_i|Z_{i+1}^n S^n ) + n\delta_{n}\nn\\
&\leqslant \sum_{i=1}^n [H(Z_i|S_i ) - H(Z_i|W_2 Z_{i+1}^n S^n W_1 X_{1i}X_{2i})] + n\delta_{n}\nn\\
&\leqslant \sum_{i=1}^n [H(Z_i|S_i ) - H(Z_i|S_i X_{1i}X_{2i})] + n\delta_{n}\nn\\
&=  \sum_{i=1}^n I(X_{1i} X_{2i};Z_i|S_i)+ n\delta_{n} \textrm{.}\label{Eq:OuterWiSnR12}
\end{flalign}

We proceed to derive an alternative bound on $R_1+R_2$ as follows:
\begin{flalign}
n(R_1 + R_2) & \leqslant I(W_1; Y^n) + I(W_2;Z^n S^n) + n\delta_{n} \nn \\
& \leqslant I(W_1; Y^n) + I(W_2;Z^n S^n|W_1)+ n\delta_{n} \label{eq:bothstateR12}
\end{flalign}

The first term in \eqref{eq:bothstateR12} can be bounded as follows:
\begin{flalign}
& I(W_1;Y^n) \nn \\
& = \sum_{i=1}^n I(W_1;Y_i|Y^{i-1}) \nn\\
& \leqslant \sum_{i=1}^n I(W_1 Y^{i-1}; Y_i) \nn\\
& = \sum_{i=1}^n \left[I(W_1 Y^{i-1}S_{i+1}^n Z_{i+1}^n; Y_i) - I(S_{i+1}^n Z_{i+1}^n ; Y_i| W_1 Y^{i-1})\right] \nn\\
& = \sum_{i=1}^n \left[I(W_1 Y^{i-1}S_{i+1}^n Z_{i+1}^n; Y_i) - I(S_{i} Z_{i} ; Y^{i-1}| W_1 S_{i+1}^n Z_{i+1}^n)\right] \nn\\
& = \sum_{i=1}^n \left[I(W_1 Y^{i-1}S_{i+1}^n Z_{i+1}^n; Y_i) - I(S_{i} Z_{i} ; Y^{i-1} W_1 S_{i+1}^n Z_{i+1}^n)+ I(W_1 S_{i+1}^n Z_{i+1}^n;S_i Z_i)\right]  \nn\\
& = \sum_{i=1}^n [I(W_1 Y^{i-1}S_{i+1}^n Z_{i+1}^n; Y_i) - I(S_{i} ; Y^{i-1} W_1 S_{i+1}^n Z_{i+1}^n)\nn\\
& \hspace{1.2cm} + I(W_1 S_{i+1}^n Z_{i+1}^n;S_i Z_i)- I(Z_{i} ; Y^{i-1} W_1 S_{i+1}^n Z_{i+1}^n|S_i)] \nn\\
&= \sum_{i=1}^n [I(T_i K_i X_{1i}; Y_i) - I(T_i K_i X_{1i}; S_{i} )\nn\\
& \hspace{1.2cm} + I(W_1 S_{i+1}^n Z_{i+1}^n;S_i Z_i)- I(Z_{i} ; Y^{i-1} W_1 S_{i+1}^n Z_{i+1}^n|S_i)] \label{eq:R1}
\end{flalign}

We next consider the last two terms in \eqref{eq:R1} together with the second bound in \eqref{eq:bothstateR12} as follows:
\begin{flalign}
& I(W_2;Z^n S^n|W_1)+ \sum_{i=1}^n  \left[I(W_1 S_{i+1}^n Z_{i+1}^n;S_i Z_i)- I(Z_{i} ; Y^{i-1} W_1 S_{i+1}^n Z_{i+1}^n|S_i) \right] \nn\\
& =  \sum_{i=1}^n \left[I(W_2;Z_i S_i|W_1 S_{i+1}^n Z_{i+1}^n)+ I(W_1 S_{i+1}^n Z_{i+1}^n;S_i Z_i)- I(Z_{i} ; Y^{i-1} W_1 S_{i+1}^n Z_{i+1}^n|S_i)\right]  \nn\\
& \overset{(a)}{=} \sum_{i=1}^n [I(W_1 W_2 S_{i+1}^n Z_{i+1}^n;S_i Z_i)+ I(S^{i-1};S_i Z_i|W_1 W_2 S_{i+1}^n Z_{i+1}^n)- I(S_{i+1}^n Z_{i+1}^n;S_i|W_1 W_2 S^{i-1}) \nn\\
& \hspace{1.2cm} - I(Z_{i} ; Y^{i-1} W_1 S_{i+1}^n Z_{i+1}^n|S_i) ] \nn\\
& =  \sum_{i=1}^n [I(W_1 W_2 S_{i+1}^n S^{i-1} Z_{i+1}^n;S_i Z_i) - I(S_{i+1}^n Z_{i+1}^n;S_i|W_1 W_2 S^{i-1}) \nn \\
& \hspace{1.2cm} - I(Z_{i} ; Y^{i-1} W_1 S_{i+1}^n Z_{i+1}^n|S_i)] \nn \\
& =  \sum_{i=1}^n [I(W_1 W_2 S_{i+1}^n S^{i-1} Z_{i+1}^n;S_i Z_i) - I(S_{i+1}^n Z_{i+1}^n W_1 W_2 S^{i-1};S_i) - I(Z_{i} ; Y^{i-1} W_1 S_{i+1}^n Z_{i+1}^n|S_i) ] \nn\\
& =  \sum_{i=1}^n \left[I(W_1 W_2 S_{i+1}^n S^{i-1} Z_{i+1}^n; Z_i|S_i)  - I(Z_{i} ; Y^{i-1} W_1 S_{i+1}^n Z_{i+1}^n|S_i) \right]\nn\\
& \leqslant \sum_{i=1}^n \left[I(W_1 W_2 S_{i+1}^n S^{i-1} Z_{i+1}^n; Z_i|S_i)  - I(Z_{i} ; Y^{i-1} W_1 S_{i+1}^n Z_{i+1}^n|S_i)\right]  \nn\\
& \leqslant \sum_{i=1}^n  \left[H(Z_i|S_i Y^{i-1} W_1 X_1^nS_{i+1}^n Z_{i+1}^n)- H(Z_i|S_i Y^{i-1} W_1X_1^n W_2 S_{i+1}^n S^{i-1} Z_{i+1}^n X_{2i})\right]\nn\\
& \leqslant \sum_{i=1}^n  \left[H(Z_i|S_i Y^{i-1} W_1 X_1^nS_{i+1}^n Z_{i+1}^n)- H(Z_i|S_i Y^{i-1} W_1X_1^nS_{i+1}^n Z_{i+1}^n X_{2i})\right]\nn\\
& =  \sum_{i=1}^n I(X_{2i};Z_i|X_{1i} T_i  K_i S_i)\label{eq:conwiSnR12}
\end{flalign}
where (a) follows from Csiszar-Korner's Sum Identity \cite{Csiszar81}.

Therefore, substituting \eqref{eq:R1} and \eqref{eq:conwiSnR12} into \eqref{eq:bothstateR12}, we obtain
\begin{equation}
  n(R_1+R_2) \leqslant \sum_{i=1}^n [I( T_i K_i X_{1i};Y_i) - I(T_i K_i;S_i|X_{1i}) + I (X_{2i};Z_i|X_{1i} T_i K_i S_i)]+ n\delta_{n}.
\end{equation}

\section{Proof for Theorem \ref{thr:GauWithSn1Con}}\label{apx:ProofGauWithSn1Con}

For the Gaussian channel, if $|a|\leqslant 1$, it satisfies the condition \eqref{eq:cond1}. For these channels, we first prove the following bounds.
\begin{flalign}
nR_1 & \leqslant \sum_{i=1}^n [I(U_iX_{1i};Y_i) - I(U_i;S_i|X_{1i})] +n\delta_n \label{eq:trdm1r1} \\
nR_2 & \leqslant \sum_{i=1}^nI(X_{2i};Z_i|U_i X_{1i} S_i)+ n\delta_n \label{eq:trdm1r2} \\
n(R_1 + R_2) & \leqslant \sum_{i=1}^n I(X_{1i} X_{2i};Z_i|S_i)+ n\delta_n \label{eq:trdm1r12}
\end{flalign}
where $U_i=(W_1 S_{i+1}^n X_1^n Y^{i-1} )$ for $i=1,\ldots,n$.


The bounds \eqref{eq:trdm1r1} and \eqref{eq:trdm1r12} follow from \eqref{eq:dmr1} and \eqref{Eq:OuterWiSnR12}, respectively. We then bound $R_2$ as follows:
\begin{flalign}
nR_2 = & I(W_2;Z^n S^n) + n \delta_n\nn\\
  \leqslant & I(W_2;Z^n S^n|W_1)+ n\delta_n\nn\\
  = & I(W_2;Z^n|W_1 S^n)+ n\delta_n\nn\\
  = & \sum_{i=1}^n{I(W_2;Z_i|W_1 S^n Z^{i-1})}+ n\delta_n\nn\\
  = & \sum_{i=1}^n{[H(Z_i|W_1 S^n Z^{i-1}) - H(Z_i|W_1 W_2 S^n Z^{i-1})]}+ n\delta_n\nn\\
  \overset{(a)}{=} &\sum_{i=1}^n{[H(Z_i|W_1 S^n X_1^n Y^{i-1} Z^{i-1})} - H(Z_i|W_1 W_2 S^n Z^{i-1} X_1^n Y^{i-1} )]+ n\delta_n\nn\\
  \overset{(b)}{\leqslant} &\sum_{i=1}^n[H(Z_i|W_1 S_{i+1}^n X_1^n Y^{i-1} S_i) - H(Z_i|W_1 S_{i+1}^n S_i X_1^n Y^{i-1} X_{2i})]+ n\delta_n\nn\\
  \leqslant & \sum_{i=1}^n{[H(Z_i|S_i X_{1i} U_i) - H(Z_i|S_i X_{1i} U_i X_{2i})]}+ n\delta_n\nn\\
  \leqslant & \sum_{i=1}^nI(X_{2i};Z_i|U_i X_{1i} S_i)+ n\delta_n
\end{flalign}
where $(a)$ follows from the condition \eqref{eq:cond1} and the fact that $X_1^n$ is a function of $W_1$, and (b) follows from the fact that given $X_{1i}$, $X_{2i}$, and $S_i$, $Z_i$ is independent of all other variables.

%

We now further derive the bounds \eqref{eq:trdm1r1}-\eqref{eq:trdm1r12} for Gaussian channels. We first consider the bound on $R_1$ as follows:
  \begin{flalign}
R_1 & \leqslant \frac{1}{n} \sum_{i=1}^n [I(X_{1i} U_i;Y_i) - I(U_i;S_i|X_{1i})]\ \nn\\
& = \frac{1}{n} \sum_{i=1}^n [h(Y_i)- h(Y_i|X_{1i}U_i) -h(S_i|X_{1i}) + h(S_i|X_{1i}U_i) ]\nn\\
& = \frac{1}{n} \sum_{i=1}^n [h(Y_i)- h(Y_i|X_{1i}U_i S_i) -I(S_i;Y_i|X_{1i}U_i) -h(S_i|X_{1i}) + H(S_i|X_{1i}U_i)]\nn\\
& = \frac{1}{n} \sum_{i=1}^n [h(Y_i)- h(Y_i|X_{1i}U_i S_i) -h(S_i|X_{1i}) + h(S_i|X_{1i}U_i Y_i)] \nn\\
& \leqslant \frac{1}{n} \sum_{i=1}^n [h(Y_i)- h(Y_i|X_{1i}U_i S_i) -h(S_i) + h(S_i|X_{1i} Y_i)]\label{eq:GauwithSn1Con1}
  \end{flalign}

The first term in \eqref{eq:GauwithSn1Con1} can be derived as:
\begin{flalign}
&\frac{1}{n} \sum_{i=1}^n h(Y_i) \nn\\
& \overset{(a)}{\leqslant} \frac{1}{2n} \sum_{i=1}^n \log 2\pi e(E(X_{1i}+aX_{2i}+S_i+N_i)^2)\nn\\
& \leqslant \frac{1}{2n} \sum_{i=1}^n \log 2\pi e\Big(E[X_{1i}^2]+2aE(X_{1i}X_{2i})+a^2E[X_{2i}^2]+ 2aE(X_{2i}S_i) + E[S_i^2] + E[N_i^2])\Big)\nn\\
& \overset{(b)}{\leqslant} \frac{1}{2} \log 2\pi e \Bigg(\frac{1}{n}\sum_{i=1}^n E[X_{1i}^2]+\frac{2a}{n}\sum_{i=1}^n E(X_{1i}X_{2i})+\frac{a^2}{n}\sum_{i=1}^n E[X_{2i}^2]+ \frac{2a}{n}\sum_{i=1}^n E(X_{2i}S_i) \nn \\
& \hspace{2.5cm}+ \frac{1}{n}\sum_{i=1}^n E[S_i^2] + \frac{1}{n}\sum_{i=1}^n E[N_i^2])\Bigg)\nn\\
& \leqslant \frac{1}{2} \log 2\pi e\left(P_1 + a^2P_2 + Q +1 + \frac{2a}{n} \sum_{i=1}^n E(X_{1i}X_{2i}) +  \frac{2a}{n} \sum_{i=1}^n E(X_{2i}S_i)\right)\nn\\
& \leqslant \frac{1}{2} \log 2\pi e\left(P_1 + a^2P_2 + Q +1 + 2a\rho_{21}\sqrt{P_1P_2} + 2a\rho_{2s}\sqrt{P_2Q}\right) \label{eq:gauss1r1-1}
\end{flalign}
where $\rho_{21} = \frac{\frac{1}{n} \sum_{i=1}^n E(X_{1i}X_{2i})}{\sqrt{P_1P_2}}$ and $\rho_{2s} = \frac{\frac{1}{n} \sum_{i=1}^n E(X_{2i}S_i)}{\sqrt{P_2Q}}$. In the above derivation, $(a)$ follows from the fact that the Gaussian distribution maximizes the entropy given the variance of the random variable, and $(b)$ follows from the concavity of the logarithm function and Jensen's inequality.


The second term in \eqref{eq:GauwithSn1Con1} can be bounded as:
\begin{equation}
   \frac{1}{n} \sum_{i=1}^n h(Y_i|X_{1i} X_{2i} S_i)   \leqslant \frac{1}{n} \sum_{i=1}^n h(Y_i|X_{1i} U_i S_i)\leqslant \frac{1}{n} \sum_{i=1}^n h(Y_i|X_{1i} S_i)
\end{equation}

For the left-hand side, we have
\begin{equation}\label{eq:gauss1r1-2}
  \frac{1}{n} \sum_{i=1}^n h(Y_i|X_{1i} X_{2i} S_i) =  \frac{1}{2} \log 2\pi e\textrm{.}
\end{equation}

For the right-hand side, by setting $\alpha = a \rho_{21} \sqrt{\frac{P_2}{P_1}}$ and $\beta = a \rho_{2S} \sqrt{\frac{P_2}{Q}}$, we have
\begin{equation}\nonumber
  \begin{split}
&\frac{1}{n} \sum_{i=1}^n h(Y_i|X_{1i} S_i) \nn\\
& = \frac{1}{n} \sum_{i=1}^n h(X_{1i} + aX_{2i} +S_i +N_{1i}|S_i X_{1i})\\
&= \frac{1}{n} \sum_{i=1}^n h(aX_{2i} +N_{1i}- \alpha X_{1i}- \beta S_i|S_i X_{1i})\\
&\leqslant \frac{1}{n} \sum_{i=1}^n h(aX_{2i} +N_{1i}- \alpha X_{1i}- \beta S_i)\\
&\leqslant \frac{1}{2n} \sum_{i=1}^n \log(2 \pi e E[(aX_{2i} +N_{1i}- \alpha X_{1i}- \beta S_i)^2])\\
&\leqslant \frac{1}{2} \log 2\pi e\left(a^2 P_2 +1 +\alpha^2 P_1 + \beta^2 Q - 2a \alpha \frac{1}{n} \sum_{i=1}^n E[X_{1i}X_{2i}] - 2a \beta \frac{1}{n} \sum_{i=1}^n E[X_{2i}S_i]\right)\\
&= \frac{1}{2} \log 2\pi e\left(a^2 P_2 +1 - a^2 \rho_{2S}^2 P_2 - a^2 \rho_{21}^2 P_2 \right) .
  \end{split}
\end{equation}

Therefore, there exists $0\leqslant P_2'' \leqslant (1-\rho_{2S}^2  - \rho_{21}^2 ) P_2$ such that
 \begin{equation}\label{eq:PreForEPI}
   \frac{1}{n} \sum_{i=1}^n h(Y_i|X_{1i} U_i S_i) = \frac{1}{2} \log 2\pi e(1+ a^2 P_2'') \;.
 \end{equation}

The third term in \eqref{eq:GauwithSn1Con1} is given by
\begin{equation}\label{eq:gauss1r1-3}
  \frac{1}{n} \sum_{i=1}^n h(S_i)= \frac{1}{2} \log 2\pi e Q \;.
\end{equation}

Finally, for the fourth term in \eqref{eq:GauwithSn1Con1}, we first define $\alpha' = \frac{-a\rho_{21}\sqrt{P_2P_1}(a\rho_{2s}\sqrt{P_2Q}+Q)}{\left(a^2 (1-\rho_{21}^2)P_2 + Q + 2a \rho_{2s}\sqrt{P_2Q}+1\right)P_1}$ and $\beta' = -\frac{P_1}{a\rho_{21}\sqrt{P_1P_2}} \alpha'$, and then have
\begin{flalign}
\frac{1}{n} \sum_{i=1}^n & h(S_i|X_{1i} Y_i) \nn\\
 = &\frac{1}{n} \sum_{i=1}^n h(S_i|X_{1i}\textrm{, } X_{1i} + aX_{2i} +S_i +N_{1i}) \nn \\
=& \frac{1}{n} \sum_{i=1}^n h(S_i - \alpha' X_1 - \beta' (aX_{2i} + S_i + N_{1i})|X_{1i}\textrm{, } X_{1i} + aX_{2i} +S_i +N_{1i}) \nn\\
\leqslant & \frac{1}{n} \sum_{i=1}^n h(S_i - \alpha' X_{1i} - \beta' (aX_{2i} + S_i + N_{1i})) \nn \\
= & \frac{1}{n} \sum_{i=1}^n \log\Big(2 \pi e E(S_i - \alpha' X_{1i} - \beta' (aX_{2i} + S_i + N_{1i}))^2\Big) \nn \\
\leqslant & \frac{1}{2} \log 2\pi e\Bigg(Q+ \alpha^{'2} P_1 + a^2\beta^{'2} P_2 + \beta^{'2} Q + 2a\beta^{'2}\frac{1}{n} \sum_{i=1}^n E(X_{2i}S_i) +\beta^{'2} \nn \\
&+ 2\alpha' \beta' a \frac{1}{n} \sum_{i=1}^n E(X_{1i}X_{2i})-2\beta' a \frac{1}{n} \sum_{i=1}^n E(X_{2i}S_i)-2\beta' Q \Bigg) \nn \\
\leqslant & \frac{1}{2} \log 2\pi e \frac{(a^2(1-\rho_{21}^2 - \rho_{2s}^2)P_2 +1)Q}{a^2(1-\rho_{21}^2)P_2 +2a\rho_{2s}\sqrt{P_2Q}+Q+1} \label{eq:gauss1r1-4}
  \end{flalign}

Substituting the above four terms into \eqref{eq:GauwithSn1Con1}, we obtain
\[ R_1 \leqslant \frac{1}{2}\log\left(1+\frac{P_1+2a\rho_{21}\sqrt{P_1P_2}+a^2\rho_{21}^2P_2}{a^2(1-\rho_{21}^2)P_2+2a\rho_{2s}\sqrt{P_2Q}+Q+1}\right)+\frac{1}{2}\log\left(1+\frac{a^2 P_2'}{a^2 P_2''+1}\right)\]
where $P_2'=(1-\rho_{21}^2-\rho_{2s}^2)P_2-P_2''$.

We then bound $R_2$ by further deriving \eqref{eq:trdm1r2}. When $a\leqslant 1$, we have $Y_i=aZ_i+(1-ab)X_{1i}+(1-ac)S_i+N_i'$, where $N_i' \sim \mathcal{N}(0,1-a^2)$. By applying the conditional entropy power inequality\cite{Blachman65}, we have
\begin{flalign}
2^{2h(Y_i|U_i S_i X_{1i})} =& 2^{2h(aZ_i+(1-ab)X_{1i}+(1-ac)S_i+N_i'|U_i S_i X_{1i})}\nn\\
=& 2^{2h(aZ_i + N_i'|U_i S_i X_{1i})}\nn\\
\geqslant &  2^{2h(aZ_i |U_i S_i X_{1i})} + 2^{2h(N_i'|U_i S_i X_{1i})}\nn\\
=& 2^{2h(Z_i |U_i S_i X_{1i})+\log(a^2)}+2\pi e (1-a^2)\textrm{.}
\end{flalign}

Thus,
\begin{flalign}
  \frac{1}{n} \sum_{i=1}^n h(Z_i |U_i S_i X_{1i}) \leqslant & \frac{1}{n} \sum_{i=1}^n \frac{1}{2} \log\left(\frac{2^{2h(Y_i|U_i S_i X_{1i})}- 2 \pi e (1-a^2)}{a^2}\right)\nn\\
 \overset{(a)}{\leqslant} & \frac{1}{2} \log\left(\frac{2^{2 \frac{1}{n} \sum_{i=1}^n h(Y_i|U_i S_i X_{1i})}- 2 \pi e (1-a^2)}{a^2}\right)\nn\\
  \overset{(b)}{=} & \frac{1}{2} \log(2 \pi e (1+P_2''))
\end{flalign}
where $(a)$ follows because $\log \left(2^x-b\right)$ is concave for $b \geqslant 0$, and $(b)$ follows from \eqref{eq:PreForEPI}.

Therefore, we have
\begin{equation}
  \begin{split}
R_2 & \leqslant \frac{1}{n} \sum_{i=1}^n I(X_{2i};Z_i| X_{1i}S_i U_i)\\
& = \frac{1}{n} \sum_{i=1}^n \left[h(Z_i| X_{1i}S_i U_i)-h(Z_i| X_{1i}S_i X_{2i})\right]\\
&\leqslant \frac{1}{2} \log(2 \pi e (1+P_2'')) - \frac{1}{2} \log(2 \pi e )\\
&= \frac{1}{2} \log(1+P_2'')\; .
  \end{split}
\end{equation}

We finally bound $R_1 + R_2$ by further deriving \eqref{eq:trdm1r12}. We set $\alpha'' = \rho_{2s} \sqrt{\frac{P_2}{Q}}$, and have
\begin{flalign}
R_1 + R_2 \leqslant& \frac{1}{n} \sum_{i=1}^n I(X_{1i} X_{2i};Z_i|S_i )\nn\\
=& \frac{1}{n} \sum_{i=1}^n [h(Z_i| S_i)-h(Z_i| X_{1i}S_i X_{2i})]\nn\\
=& \frac{1}{n} \sum_{i=1}^n h(bX_{1i} + X_{2i} +cS_i +N_{1i}|S_i)- \frac{1}{2} \log 2\pi e\nn\\
=& \frac{1}{n} \sum_{i=1}^n h(bX_{1i} + X_{2i} +N_{1i}- \alpha''  S_i|S_i)- \frac{1}{2} \log 2\pi e\nn\\
\leqslant& \frac{1}{n} \sum_{i=1}^n h(bX_{1i} + X_{2i} +N_{1i}- \alpha''  S_i)- \frac{1}{2} \log 2\pi e\nn\\
\leqslant& \frac{1}{n} \sum_{i=1}^n \log(2 \pi e E(bX_{1i} + X_{2i} +N_{1i}- \alpha''  S_i)^2)- \frac{1}{2} \log 2\pi e\nn\\
\leqslant& \frac{1}{2} \log 2\pi e\left(b^2P_1 + P_2 +1 + \alpha^{''2} Q + 2 b \frac{1}{n} \sum_{i=1}^n E[X_{1i}X_{2i}] - 2\alpha'' \frac{1}{n} \sum_{i=1}^n E[X_{2i}S_i]\right)\nn\\
&- \frac{1}{2} \log 2\pi e\nn\\
=& \frac{1}{2} \log (b^2 P_1 + P_2 +1 + 2b \rho_{21}\sqrt{P_1P_2}-  \rho_{2s}^2 P_2)\textrm{.} \label{eq:gauss1r12}
\end{flalign}

\section{Proof of Lemma \ref{lm:DegrConvWithSn2}}\label{apx:ProofDMCGauWithSn2}


Following \eqref{Eq:OuterWiSnR2} and \eqref{Eq:OuterWiSnR12}, we obtain
\begin{flalign}
nR_2 \leqslant & \sum_{i=1}^n I(X_{2i};Z_i|S_i X_i)+ n\delta_{n} \label{eq:trdm2r2}\\
n(R_1 + R_2) \leqslant & \sum_{i=1}^n I(X_{1i} X_{2i};Z_i|S_i)+ n\delta_{n}\textrm{.} \label{eq:trdm2r121}
\end{flalign}

We then prove an alternative bound on $R_1+R_2$ as follows:
\begin{flalign}
n(& R_1+ R_2) \nn\\
\leqslant & I(W_1; Y^n) + I(W_2;Z^n|S^n)+ n\delta_{n}\nn\\
\leqslant & I(W_1; Y^n) + I(W_2;Z^n|S^n W_1)+ n\delta_{n}\nn\\
= & I(W_1; Y^n) + H(W_2|S^n W_1)- H(W_2|S^n W_1 Z^n)+ n\delta_{n}\nn\\
\leqslant & I(W_1; Y^n) + H(W_2|S^n W_1)- H(W_2|S^n W_1 Z^n Y^n X_1^n)+ n\delta_{n}\nn\\
\overset{(a)}{=} & I(W_1; Y^n) + H(W_2|S^n W_1)- H(W_2|S^n W_1 Y^n)+ n\delta_{n}\nn\\
= & I(W_1; Y^n) + I(W_2;Y^n|S^n W_1)+ n\delta_{n}\nn\\
= & \sum_{i=1}^n [H(Y_i|Y^{i-1}) - H(Y_i|W_1 Y^{i-1}) + H(Y_i|S^n W_1 Y^{i-1}) - H(Y_i|S^n W_1 W_2 Y^{i-1})]+ n\delta_{n}\nn\\
= & \sum_{i=1}^n [H(Y_i|Y^{i-1}) - H(Y_i|X_{1i}) + H(Y_i|X_{1i}) - H(Y_i|W_1 Y^{i-1})\nn\\
&\quad \quad+ H(Y_i|S^n W_1 Y^{i-1}) - H(Y_i|S^n W_1 W_2 Y^{i-1})]+ n\delta_{n}\nn\\
\leqslant & \sum_{i=1}^n [H(Y_i) - H(Y_i|X_{1i}) - H(Y_i|S^n X_{1i}X_{2i}W_1 W_2 Y^{i-1})\nn\\
&\quad\quad + H(Y_i|X_{1i}) - I(Y_i;S^n|W_1 Y^{i-1})]+ n\delta_{n}\nn\\
= & \sum_{i=1}^n [I(X_{1i};Y_i) - H(Y_i|S_iX_{1i}X_{2i}) + H(Y_i|X_{1i})] - I(Y^n;S^n|W_1)+ n\delta_{n}\nn\\
= & \sum_{i=1}^n [I(X_{1i};Y_i)  - H(Y_i|S_iX_{1i}X_{2i}) + H(Y_i|X_{1i})] - H(S^n) + H(S^n|Y^n W_1)+ n\delta_{n}\nn\\
= & \sum_{i=1}^n [I(X_{1i};Y_i)  - H(Y_i|S_iX_{1i}X_{2i}) + H(Y_i|X_{1i}) - H(S_i) + H(S_i|Y^n W_1 S_{i+1}^n)]+ n\delta_{n}\nn\\
\leqslant & \sum_{i=1}^n [I(X_{1i};Y_i)  - H(Y_i|S_iX_{1i}X_{2i}) + H(Y_i|X_{1i}) - H(S_i) + H(S_i|Y_i X_{1i})]+ n\delta_{n}\nn\\
= & \sum_{i=1}^n [I(X_{1i};Y_i)  - H(Y_i|S_iX_{1i}X_{2i}) + H(Y_i|X_{1i}) - I(S_i;Y_i|X_{1i})]+ n\delta_{n}\nn\\
= & \sum_{i=1}^n [I(X_{1i};Y_i)  - H(Y_i|S_iX_{1i}X_{2i}) + H(Y_i|S_i X_{1i})] + n\delta_{n}\nn\\
= & \sum_{i=1}^n [I(X_{1i};Y_i)  + I(X_{2i};Y_i|S_i X_{1i})]+ n\delta_{n} \label{eq:trdm2r122}
\end{flalign}
where (a) follows due to the condition \eqref{eq:cond2}.

\section{Proof of the Converse for Theorem \ref{thr:GauWithSn2Con}}\label{apx:ProofGauWithSn2Con}

Based on the outer bound derived in Appendix \ref{apx:ProofDMCGauWithSn2}, we further derive an outer bound for the Gaussian channel. We first derive a bound on $R_2$ based on \eqref{eq:trdm2r2}. We set $\alpha = \rho_{21} \sqrt{\frac{P_2}{P_1}}$ and $\beta= \rho_{2s} \sqrt{\frac{P_2}{Q}}$, where $\rho_{21} = \frac{\frac{1}{n} \sum_{i=1}^n E(X_{1i}X_{2i})}{\sqrt{P_1P_2}}$ and $\rho_{2s} = \frac{\frac{1}{n} \sum_{i=1}^n E(X_{2i}S_i)}{\sqrt{P_2Q}}$. We then obtain:
\begin{flalign}
R_2 \leqslant &\frac{1}{n} \sum_{i=1}^n h(Z_i| X_{1i}S_i) - h(Z_i| X_{1i}X_{2i} S_i)\nn\\
= &\frac{1}{n} \sum_{i=1}^n h(bX_{1i} + X_{2i} +cS_i +N_{1i}|S_i X_{1i})- \frac{1}{2} \log 2\pi e\nn\\
= &\frac{1}{n} \sum_{i=1}^n h(X_{2i} +N_{1i}- \alpha X_{1i}- \beta S_i|S_i X_{1i})- \frac{1}{2} \log 2\pi e\nn\\
\leqslant& \frac{1}{n} \sum_{i=1}^n h(X_{2i} +N_{1i}- \alpha X_{1i}- \beta S_i)- \frac{1}{2} \log 2\pi e\nn\\
= & \frac{1}{2n} \sum_{i=1}^n \log\left(2 \pi e E(X_{2i} +N_{1i}- \alpha X_{1i}- \beta S_i)^2\right)- \frac{1}{2} \log 2\pi e\nn\\
\leqslant & \frac{1}{2} \log \left( P_2 +1 +\alpha^2 P_1 + \beta^{2} Q - 2 \alpha \frac{1}{n} \sum_{i=1}^n E[X_{1i}X_{2i}] - 2\beta \frac{1}{n} \sum_{i=1}^n E[X_{2i}S_i]\right)\nn\\
=& \frac{1}{2} \log ( 1 +(1 -  \rho_{2s}^2 -  \rho_{21}^2) P_2 )\label{eq:GauWithSn2ConR2}
\end{flalign}


Following \eqref{eq:gauss1r12}, we obtain the following bound on $R_1 + R_2$ based on \eqref{eq:trdm2r121}
\begin{flalign}
&R_1 + R_2 \leqslant \frac{1}{2} \log \left(b^2 P_1 + P_2 +1 + 2b \rho_{21}\sqrt{P_1P_2}-  \rho_{2s}^2 P_2 \right).
\end{flalign}

We further derive \eqref{eq:trdm2r122} for the Gaussian channel as follows:
\begin{flalign}
R_1 +R_2 & \leqslant \frac{1}{n} \sum_{i=1}^n [I(X_{1i};Y_i) + I(X_{2i};Y_i|X_{1i}S_i) ] \nn\\
& = \frac{1}{n} \sum_{i=1}^n [h(Y_i)- h(Y_i|X_{1i}) + h(Y_i|X_{1i}S_i)  - h(Y_i|S_iX_{1i}X_{2i})] \nn\\
& = \frac{1}{n} \sum_{i=1}^n [h(Y_i)- I(Y_i;S_i|X_{1i}) - h(Y_i|S_iX_{1i}X_{2i})] \nn \\
& = \frac{1}{n} \sum_{i=1}^n [h(Y_i)- h(S_i)+ h(S_i|X_{1i} Y_i) - h(Y_i|S_iX_{1i}X_{2i})] \label{eq:gauss2r122}
    \end{flalign}

Following \eqref{eq:gauss1r1-1}, \eqref{eq:gauss1r1-2}, \eqref{eq:gauss1r1-3}, and \eqref{eq:gauss1r1-4} in Appendix \ref{apx:ProofGauWithSn1Con}, we obtain
\begin{equation}\nonumber
  \begin{split}
    &\frac{1}{n} \sum_{i=1}^n h(Y_i)\leqslant \frac{1}{2} \log 2\pi e(P_1 + a^2P_2 + Q +1 + 2a\rho_{21}\sqrt{P_1P_2} + 2a\rho_{2s}\sqrt{P_2Q})\\
    &\frac{1}{n} \sum_{i=1}^n h(Y_i|X_{1i} X_{2i} S_i) =  \frac{1}{2} \log 2\pi e\\
    &\frac{1}{n} \sum_{i=1}^n h(S_i) = \frac{1}{2} \log 2\pi e Q\\
    &\frac{1}{n} \sum_{i=1}^n h(S_i|X_{1i} Y_i) \leqslant  \frac{1}{2} \log 2\pi e \frac{(a^2(1-\rho_{21}^2 - \rho_{2s}^2)P_2 +1)Q}{a^2(1-\rho_{21}^2)P_2 +2a\rho_{2s}\sqrt{P_2Q}+Q+1}
  \end{split}
\end{equation}

Substituting the above bounds into \eqref{eq:gauss2r122}, we obtain
\begin{flalign}
R_1 +R_2 \leqslant & \frac{1}{2} \log\left(1+\frac{P_1+2a\rho_{21}\sqrt{P_1P_2}+a^2\rho_{21}^2P_2}{a^2(1-\rho_{21}^2)P_2+2a\rho_{2s}\sqrt{P_2Q}+Q+1}\right) \nn\\
&+\frac{1}{2}\log\left( 1+ a^2(1 -\rho_{2s}^2 -  \rho_{21}^2) P_2 \right)
\end{flalign}
which concludes the proof.


\renewcommand{\baselinestretch}{1}
\begin{small}

\bibliographystyle{unsrt}
\bibliography{ICstate_journal}

\end{small}



%

%
%
%


\end{document}